\NewDocumentEnvironment{eqs}{+b}
{\begin{equation}\begin{split}#1\end{split}\end{equation}}
{}
\numberwithin{equation}{section}
\newtheorem{theorem}{Theorem}[section]
\newtheorem{lemma}[theorem]{Lemma}
\newtheorem{proposition}[theorem]{Proposition}
\newtheorem{corollary}[theorem]{Corollary}
\newtheorem{remark}[theorem]{Remark}
\newcounter{conj}
\newtheorem{prob}[conj]{Problem}
\newcounter{thmc}
\theoremstyle{definition}
\renewcommand{\tilde}{\widetilde}          
\DeclareMathSymbol{\leqslant}{\mathalpha}{AMSa}{"36} 
\DeclareMathSymbol{\geqslant}{\mathalpha}{AMSa}{"3E} 
\DeclareMathSymbol{\eset}{\mathalpha}{AMSb}{"3F}     
\renewcommand{\leq}{\;\leqslant\;}                   
\renewcommand{\geq}{\;\geqslant\;}                   
\newcommand{\C}{\mathbb{C}}
\newcommand{\R}{\mathbb{R}}
\newcommand{\N}{\mathbb{N}}
\newcommand{\D}{\mathbb{D}} 
\newcommand{\Heps}{\mathbb{H}_{\delta,\eps}}
\newcommand{\Reps}{\mathbb{R}_{\eps}} 
\renewcommand{\H}{\mathbb{H}}
\newcommand{\im}{\bm{\mathrm{i}}}
\newcommand{\E}{\mathds{E}}
\newcommand{\X}{\bm{\mathrm X}}
\newcommand{\V}{\bm{\mathrm V}}
\newcommand{\ps}[1]{\langle #1 \rangle}
\newcommand{\mc}[1]{\mathcal{#1}}
\newcommand{\eqlaw}{\overset{\text{(law)}}{=}}
\newcommand{\ostar}{\mathbin{\mathpalette\make@circled*}}
\newcommand{\make@circled}[2]{%
	\ooalign{$\m@th#1\smallbigcirc{#1}$\cr\hidewidth$\m@th#1#2$\hidewidth\cr}%
}
\newcommand{\smallbigcirc}[1]{%
	\vcenter{\hbox{\scalebox{0.77778}{$\m@th#1\bigcirc$}}}%
}
\def\Is{I_{sing} }
\def\Ir{I_{reg} }
\def\It{I_{tot} }
\def\Itr{I_{tot}^{\delta,\eps} }
\newcommand{\qt}[1]{\quad\text{#1}\quad}
\def\V{\bm{\mathrm V}}
\def\bmD{\bm{\mathrm D}}
\def\Lc{\bm{\mathcal L}}
\def\X{\bm{\mathrm  X}}
\def\eps{\varepsilon}
\renewcommand{\k}{\mathbf{k}}
\renewcommand{\d}{\text{\rm d}}
\newcommand{\Herm}{\bm{\mathrm{He}}}
\def\Div{\bmD_{\bm z,\bm a}}
\def\Diva{\bmD_{\bm z,\bm\alpha}}
\def\eul{\chi(\Sigma,\Div)}
\def\Hdiv{H_{\bm z,\bm a}}
\def\Hdiva{H_{\bm z,\bm \alpha}}
\def\gdiv{g_{\bm z,\bm a}}
\def\phidiv{\phi_{\bm z,\bm a}}
\def\Phidiv{\Phi_{\bm z,\bm a}}
\def\vphidiv{\varphi_{\bm z,\bm a}}
\def\phireg{\phi_{\delta,\eps}}
\def\Phireg{\Phi_{\delta,\eps}}
\def\vphireg{\varphi_{\delta,\eps}}
\def\tdiv{\tau(\Sigma,\bm a)}
\def\eps{\varepsilon}
\def\bi{\begin{itemize}}
	\def\ei{\end{itemize}}
\def\bnum{\begin{enumerate}}
	\def\enum{\end{enumerate}}
\def\<#1{\langle #1 \rangle}
\def\eval#1#2#3#4{\left[#4\right]^{#1}_{#2}\hspace*{-0.6cm}\vcenter{\hbox{$\scriptstyle #3$}}\hspace*{0.2cm}}
\def\red#1{\textcolor{red}{#1}}
\newcommand{\norm}[1]{\left\lvert#1\right\rvert}
\newcommand{\nnorm}[1]{\left\lvert\left\lvert#1\right\rvert\right\rvert}
\newcommand{\expect}[1]{\mathbb{E}\left[#1\right]}
\title[Boundary Liouville theory: classical and semi-classical]{Around the semi-classical limit of boundary Liouville conformal field theory}
\author{Baptiste Cercl\'e}
\email{baptiste.cercle@epfl.ch}
\address{EPFL SB MATH RGM, MA B2 397, Station 8, CH-1015 Lausanne, Switzerland.}
\begin{document}
	
	\maketitle
	\begin{abstract}
		Liouville conformal field theory describes a random geometry that fluctuates around a deterministic one: the unique solution of the problem of finding, within a given conformal class, a Riemannian metric with prescribed scalar and geodesic curvatures as well as conical singularities and corners. The level of randomness in Liouville theory is measured by the coupling constant $\gamma\in(0,2)$, the semi-classical limit corresponding to taking $\gamma\to0$.
		
		Based on the probabilistic definition of Liouville theory, we prove that this semi-classical limit exists and does give rise to this deterministic geometry. At second order this limit is described in terms of a massive Gaussian free field with Robin boundary conditions. 
		
		This in turn allows to implement CFT-inspired techniques in a deterministic setting: in particular we define the classical stress-energy tensor, show that it can be expressed in terms of accessory parameters (written as regularized derivatives of the Liouville action), and that it gives rise to classical higher equations of motion. 
	\end{abstract}
	
	

	\section{Introduction}
	
	\subsection{Boundary Liouville theory: classical and quantum}
	
	Liouville theory is a two-dimensional Conformal Field Theory (CFT hereafter) introduced by Polyakov in 1981~\cite{Pol81}. In this fundamental article a notion of random geometry is defined based on the consideration of the \textit{Liouville equation} $-\Delta u=e^u$, which naturally appears in numerous classical problems such as uniformization of Riemann surfaces or the Nirenberg problem. 
	
	\subsubsection{Prescription of curvatures and conical singularities}
	Indeed the classical question inherent to Liouville CFT is of geometric nature: it is the problem of finding, within a given conformal class, a Riemannian metric with prescribed curvatures and conical singularities. Namely let $(\Sigma,g)$ be a compact, connected Riemannian surface equipped with a smooth Riemannian metric $g$ and denote its boundary by $\partial\Sigma$. We also fix a divisor 
	\[
	\Div=\sum_{k=1}^Na_k z_k+\sum_{l=1}^M b_l s_l\qt{where $a_k,b_l>-1$ and with $z_k\in\Sigma$ and $s_l\in\partial\Sigma$ all distinct.}
	\]
	The problem under consideration is then (see Section~\ref{sec:unif} for a more precise statement): 
	\begin{prob}\label{prob:class_intro}
		Find a conformal Riemannian metric $\gdiv=e^{\phi}g$ on $\Sigma$ such that:
		\begin{itemize}
			\item $\gdiv$ is smooth away from $\bm z\coloneqq\text{supp}(\Div)$, has a conical singularity of angle $2\pi(1+\alpha_k)$ at $z_k$ for any $1\leq k\leq N$ and a corner of angle $\pi(1+b_l)$ at $s_l$, $1\leq l\leq M$; 
			\item $\gdiv$ has Gaussian curvature $-\frac12\Lambda$ inside $\Sigma\setminus\bm z$, with $\Lambda \geq 0$ ($\Lambda\not\equiv0$) smooth on $\Sigma$;
			\item $\gdiv$ has geodesic curvature $-\sigma$ on $\partial\Sigma\setminus\bm z$, with $\sigma\geq 0$ smooth and bounded on $\partial\Sigma\setminus\bm z$.
		\end{itemize}
	\end{prob}
	In particular $\sigma$ can have discontinuity points at the $s_l$'s. Existence of solutions to this question is constrained by the Gauss-Bonnet formula (see Equation~\eqref{eq:gauss_bonnet_sing}) which implies that the singular Euler characteristic $\eul=\chi(\Sigma)+\sum_{k=1}^N\alpha_k+\frac12\sum_{l=1}^Mb_l$ must be negative. It is actually a sufficient condition and under this assumption Problem~\ref{prob:class_intro} admits a unique solution as stated in Theorem~\ref{thm:prescribe} below.
	
	The problem of prescribing curvature(s) on a Riemann surface is a very classical one: in its simplest form it asks for a solution of the  Liouville equation $\Delta u=e^u$, a question asked by the G\"ottingen Royal Society of Sciences and answered by Picard in 1890~\cite{Picard1} (see also~\cite{Picard2, Picard3}). As demonstrated by Poincaré in 1898~\cite{Poincaré}, this question has a particular relevance in the context of uniformization of Riemann surfaces as it allows, when the singularities are \textit{parabolic} (corresponding to the critical case $a_k=-1$, also called \textit{cusps} or in the words of Poincaré \lq\lq sommets de la troisième espèce"), to construct an analytic covering of a punctured sphere by the upper-half plane. The curvature in these cases is then taken to be constant: the Nirenberg problem is more generally concerned with the question of finding conditions under which a function on a Riemann surface $\Sigma$ is the curvature of some Riemannian metric on $\Sigma$. This question was addressed in~\cite{Berger, KW_curv} in the case of (closed) compact surfaces and in~\cite{LSMR} in the presence of a boundary, in which case one prescribes both scalar and geodesic curvatures. When in addition the underlying surface is allowed to have conical singularities, in the case where $\Sigma$ is either closed or with geodesic boundary ($\sigma\equiv0$) this problem was solved by Troyanov~\cite{Troyanov} while the general case was settled recently in~\cite{BRS}.
	
	Under the simplifying assumptions made in the present paper, that is that the curvatures are non-positive and that $\eul$ is negative, a classical approach to address Problem~\ref{prob:class_intro} is to use a variational formulation based on the Liouville action (already considered in Poincaré's essay~\cite[Chapitre IX]{Poincaré}), which is formally defined (see Proposition~\ref{prop:def_action} for a proper statement) by setting for $\phi$ smooth away from supp$(\Div)$ and for $g$ a given smooth Riemannian metric on $\Sigma$ (with scalar and geodesic curvatures $R_g$ and $k_g$, volume form and line element $\d v_g$ and $\d l_g$, gradient and norm on the tangent space $\nabla_g$ and $\norm{\cdot}_g$): 
	\begin{eqs}\label{eq:action}
		S_{\bm z,\bm a}(\phi)\quad\lq\lq\coloneqq"\quad &\frac1{4\pi}\int_\Sigma\left(\norm{\nabla_g \phi}_g^2+2R_g\phi + 2\Lambda e^{\phi}\right)\d v_g+\frac1{2\pi}\int_{\partial\Sigma}\left(2k_g\phi + 4\sigma e^{\phi/2}\right)\d l_g\\
		&+\sum_{k=1}^N2a_k\phi(z_k)+\sum_{l=1}^Mb_l\phi(s_l).
	\end{eqs}
	Indeed if $\phidiv$ is a critical point (actually a minimum) of this action then $e^{\phidiv}g$ solves Problem~\ref{prob:class_intro}. In particular using the Liouville action as a Gibbs potential naturally provides a (formal) way of defining a notion of random geometry on $\Sigma$ that will resemble the classical one: this is the path integral definition of Liouville Conformal Field Theory (CFT hereafter).

	\subsubsection{Boundary Liouville CFT}
	To be more specific, let $\gamma>0$ (called \textit{coupling constant}) and formally define a Borel measure on the Sobolev space with negative index $\mc F\coloneqq H^{-1}(\Sigma,g)$ by setting for any continuous and bounded functional $F:\mc F\to\R$
	\begin{equation}\label{eq:path}
		\ps{F}_{\gamma,\bm z,\bm a}\quad\lq\lq\coloneqq" \quad\frac1{\mc Z_g}\int_{\mc F}F(\phi)e^{-S_{\gamma,\bm z,\bm a}(\phi)}\d\phi.
	\end{equation}
	Here $\mc Z_g$ is a normalization constant given by a (regularized) determinant, $\d\phi$ plays the role of a uniform measure over $\mc F$ and $S_{\gamma,\bm z,\bm a}(\phi)$ is (almost) given by $\frac{1}{\gamma^2}S_{\bm z,\bm a}(\gamma\phi)$:
	\begin{eqs}\label{eq:action_CFT}
		S_{\gamma,\bm z,\bm a}(\phi)\coloneqq &\frac1{4\pi}\int_\Sigma\left(\norm{\nabla_g \phi}_g^2+QR_g\phi + 2\mu e^{\gamma\phi}\right)\d v_g+\frac1{2\pi}\int_{\partial\Sigma}\left(Qk_g\phi + 4\mu_\partial e^{\gamma\phi/2}\right)\d l_g\\
		&-\sum_{k=1}^N\alpha_k\phi(z_k)-\sum_{l=1}^M\frac12\beta_l\phi(s_l)
	\end{eqs}
	where $Q=\frac2\gamma+\frac\gamma2$ is the \textit{background charge} and where we have set $\mu\coloneqq \frac{\Lambda}{\gamma^2}$, $\mu_\partial\coloneqq \frac{\sigma}{\gamma^2}$ and $\alpha_k\coloneqq -\frac{2a_k}{\gamma}$, $\beta_l\coloneqq -\frac{2b_l}{\gamma}$. We will denote $\ps{F(\Phi)\prod_{k=1}^NV_{\alpha_k}(z_k)\prod_{l=1}^MV_{\beta_l}(s_l)}_{\gamma,\mu,\mu_\partial}\coloneqq \ps{F}_{\gamma,\bm z,\bm a}$: for $F=1$ these are called the \textit{correlation functions of Vertex Operators}.
	
	The previous writing being purely formal, a proper mathematical definition of the path integral~\eqref{eq:path} is necessary. This is the starting point of the program initiated by David-Kupiainen-Rhodes-Vargas~\cite{DKRV} who defined Liouville correlation functions based on a probabilistic framework that involves Gaussian Free Fields and its exponential: Gaussian Multiplicative Chaos measures. First carried out on the sphere, this construction was then developed to take into account all possible two-dimensional geometries: higher genus in~\cite{DRV16, GRV16} and open surfaces in~\cite{HRV16, Wu}. This probabilistic take on Liouville CFT has proved to be extremely successful in many perspectives and in particular in the context of two-dimensional CFT~\cite{BPZ} via the rigorous derivation of predictions made in the physics literature, starting from Ward identities~\cite{KRV_loc, Cer_HEM}, computation of the structure constants~\cite{KRV_DOZZ, ARS, ARSZ} recovering predictions made in~\cite{DO94, ZZ96, FZZ, Hos, PT02}, implementation of the conformal bootstrap procedure~\cite{GKRV} and of Segal's axioms~\cite{Seg04} in~\cite{GKRV_Segal, GRW}, study of the structure of the Virasoro modules~\cite{BGKRV, BW_irr, BaWu} and of the conformal blocks thus defined~\cite{BGKR}...
	
	\subsubsection{From Liouville CFT to the classical theory}
	Though the path integral definition for Liouville CFT is purely formal, it still gives a lot of insight on the actual properties of the mathematical model thus defined. For instance a handwavy application of Laplace's method in Equation~\eqref{eq:path} shows that one should expect that as $\gamma\to0$ the integral concentrates one the minimum $\phidiv$ of the Liouville action and that we have the following asymptotic:
	\begin{equation}\label{eq:semi_heur}
		\ps{F}_{\gamma,\bm z,\bm a}\sim e^{-\frac1{\gamma^2}S_{\bm z,\bm a}(\phidiv)}\frac1{\mc Z_g}\int_{\Sigma} F\left(\phi+\frac1\gamma\phidiv\right)e^{-\frac12 S''_{\bm z,\bm a}(\phi)}\d\phi
	\end{equation}
	where $S''_{\bm z,\bm a}(\phi)\coloneqq\left.\partial_t^2\right\vert_{t=0}S_{\bm z,\bm a}(\phidiv+t\phi)$, which is found to be given by $\ps{\phi, D_{\Lambda,\sigma}\phi}_{\bm z,\bm a}$ with
	\begin{equation}
		\ps{u, D_{\Lambda,\sigma}v}_{\bm z,\bm a}\coloneqq\frac1{2\pi}\int_\Sigma u\left(\Lambda e^{\phidiv}-\Delta_g\right)v\d v_g+\frac1{2\pi}\int_{\partial\Sigma}u\left(\sigma e^{\frac12\phidiv}+\partial_{n_g}\right)v\d l_g.
	\end{equation}
	
	In particular this heuristic allows to import CFT techniques in the classical theory: an important application of the above is Polyakov's prediction for a question raised by Poincaré~\cite{Poincaré2, Poincaré}. Namely if $\psi:\H\to X$ is an analytic covering of the $N$-punctured sphere $X=\C\setminus\{z_1,\cdots,z_{N-1}\}$ then the Schwarzian derivative of its (multi-valued) inverse satisfies
	\begin{equation}
		\left\{\psi^{-1},z\right\}=\sum_{k=1}^{N-1}\left(\frac{1}{2(z-z_k)^2}+\frac{c_k}{z-z_k}\right)\qt{and we have}\left(\partial^2_z+\left\{\psi^{-1},z\right\}\right)e^{-\frac\phidiv2}=0
	\end{equation}
	where the coefficients $c_k$ are called \textit{accessory parameters} (and closely related to Poincaré's \lq\lq invariants fondamentaux"~\cite[Chapitre I]{Poincaré2}), while $\phidiv$ is the solution of Liouville equation on $X$ with parabolic singularities at the punctures. The explicit construction of a covering map $\H\to X$ is therefore intimately related to the identification of these accessory parameters, which can thus be achieved via the definition of the \textit{classical stress-energy tensor}
	\begin{equation}
		T(z)\coloneqq \partial^2_z\phidiv(z)-\frac12\left(\partial_z\phidiv(z)\right)^2=\frac12\left\{\psi^{-1},z\right\}.
	\end{equation}
	Based on the $\gamma\to0$ limit of the Ward identities for Liouville correlation functions (themselves expressed in terms of the \textit{quantum stress-energy tensor}), Polyakov was then able to predict a simple expression for the accessory parameters. Namely that $c_k=-\frac12\partial_{z_k}S_{\bm z,\bm a}(\phidiv)$, a statement that was proved by Zograf-Takhtajan~\cite{TaZo1} (using completely different methods).

	A rigorous implementation of Polyakov's heuristic, based on the probabilistic definition of Liouville CFT, has been carried out in~\cite{LRV_semi1, LRV_semi2} where the proposed relation $c_k=-\frac12\partial_{z_k}S_{\bm z,\bm a}(\phidiv)$ was obtained as a corollary of the semi-classical limit $\gamma\to0$ of Liouville correlation functions. More precisely Lacoin-Rhodes-Vargas provide in~\cite{LRV_semi2} a rigorous derivation of the asymptotic~\eqref{eq:semi_heur} when the underlying surface is the Riemann sphere, with the prescribed curvature chosen constant and where the singularities satisfy $a_k<-1$ and $\eul<0$ (in which case accessory parameters were computed in~\cite{TaZo2}). 
	On a similar perspective the semi-classical limit of the conformal blocks on the torus has been recently rigorously derived in~\cite{DGP} using this probabilistic framework, showing at leading order a behavior similar to Equation~\eqref{eq:semi_heur} (though the multiplicative constant was left unidentified there).
	
	\subsection{Main results}
	We are primarily interested in the present paper in extending the results of~\cite{LRV_semi2} to any compact, connected, smooth Riemannian surface (especially with non-empty boundary), and with non-constant prescribed curvature. As a counterpart this allows to implement CFT techniques in the classical setting of Problem~\ref{prob:class_intro} and by doing so to uncover previously unknown (to the best of our knowledge) phenomenons, deterministic counterparts of the \textit{higher equations of motion} encountered in the CFT literature~\cite{BB10, BaWu, Cer_HEM}. This sheds light on some of the singular aspects of Problem~\ref{prob:class_intro} in the presence of a boundary. 
	
	\subsubsection{Semi-classical limit of boundary Liouville theory}
	Our first result is the rigorous derivation of the semi-classical limit $\gamma\to0$ of Liouville CFT, heuristically discussed above:
	\begin{theorem}\label{thm:semi_classical_intro}
		Assume that $\eul<0$ and let $\gdiv=e^{\phidiv} g$ be the unique solution of Problem~\ref{prob:class_intro}. Then for any $F$ continuous bounded over $H^{-1}(\Sigma,g)$, as $\gamma\to0$:
		\begin{eqs}\label{eq:semi_classical_intro}
			&\ps{F\left(\Phi-\frac1\gamma\phidiv\right)\prod_{k=1}^{N}V_{-\frac{2a_k}\gamma}(z_k)\prod_{l=1}^{M}V_{-\frac{2\beta_l}\gamma}(s_l)}_{\gamma,\frac{\Lambda}{\gamma^2},\frac{\sigma}{\gamma^2}}\sim e^{-\frac1{\gamma^2}S_{\bm z,\bm a}} [F]_0\qt{with}\\
			&[F]_0=[1]_0\int_\R\E_{\bm z,\bm a}\left[F(\X_{\bm z,\bm a}+c)\right]e^{-c^2\ps{1}_{\bm z,\bm a}}\sqrt{\frac{\ps{1}_{\bm z,\bm a}}{\pi}}\d c
		\end{eqs}
		where $\X_{\bm z,\bm a}$ is a massive Gaussian Free Field with mass $\Lambda$ and Robin boundary conditions $\left(\sigma e^{\frac12\phidiv}+\partial_{n_{g}}\right)\X_{\bm z,\bm a}=0$ (\textit{i.e.} has covariance kernel $G_{\bm z,\bm a}$ defined in Lemma~\ref{lemma:massive_GMC}).
	\end{theorem}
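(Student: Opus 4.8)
The plan is to start from the probabilistic construction of boundary Liouville correlation functions (\cite{DKRV, HRV16, Wu}), to perform a Cameron--Martin shift of the field by the deterministic profile $\tfrac1\gamma\phidiv$, and then to extract the $\gamma\to0$ asymptotics by a Laplace-type argument, in the spirit of~\cite{LRV_semi1, LRV_semi2}. Recall that in this framework the Liouville field is realised as $\Phi=c+X_g+\mathbf h_\gamma$, with $c\in\R$ a zero mode integrated against Lebesgue measure, $X_g$ a Gaussian Free Field on $(\Sigma,g)$ with Neumann boundary conditions and vanishing $g$-average, and $\mathbf h_\gamma$ an explicit deterministic function carrying the Vertex Operator insertions together with the background curvature; one then has, schematically, a representation of the left-hand side of~\eqref{eq:semi_classical_intro} of the form
\[
\mathcal N_\gamma\int_\R\E\Big[F\big(c+X_g+\mathbf h_\gamma-\tfrac1\gamma\phidiv\big)\,e^{\beta_\gamma c}\exp\Big(-\tfrac{\Lambda}{\gamma^2}e^{\gamma c}M_\gamma^\Sigma-\tfrac{\sigma}{\gamma^2}e^{\gamma c/2}M_\gamma^{\partial\Sigma}\Big)\Big]\,\d c ,
\]
where $M_\gamma^\Sigma$ and $M_\gamma^{\partial\Sigma}$ are the bulk and boundary Gaussian Multiplicative Chaos measures built from $\gamma(X_g+\mathbf h_\gamma)$, $\mathcal N_\gamma$ is a $\gamma$-dependent normalisation, and $\beta_\gamma=\tfrac2\gamma\norm{\eul}+O(\gamma)$ — its positivity, which makes the $c$-integral converge at $-\infty$, being precisely the assumption $\eul<0$. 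Writing $\mathbf h_\ast\coloneqq\lim_{\gamma\to0}\gamma\mathbf h_\gamma$, the difference $\rho_\ast\coloneqq\phidiv-\mathbf h_\ast$ solves a linear elliptic problem with smooth data — because $\phidiv$ and $\mathbf h_\ast$ exhibit the same logarithmic behaviour at the singular points and the same curvature source, by Theorem~\ref{thm:prescribe} — hence is bounded and continuous on $\Sigma$. I would then substitute $X_g\mapsto X_g+\tfrac1\gamma(\rho_\ast-\overline{\rho_\ast})$ and translate $c\mapsto c+\tfrac1\gamma\overline{\rho_\ast}$, where $\overline{\rho_\ast}$ is the $g$-average of $\rho_\ast$, so that the argument of $F$ becomes the genuinely bounded field $c+X_g+O(\gamma)$ and $M_\gamma^\Sigma$, $M_\gamma^{\partial\Sigma}$ pick up the densities $e^{\rho_\ast}$, $e^{\rho_\ast/2}$.

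The second step is a bookkeeping of powers of $\gamma$. By Girsanov the substitution produces a multiplicative constant $e^{-\gamma^{-2}(\cdots)}$ — from the Cameron--Martin density and from the translation in $e^{\beta_\gamma c}$ — and a linear-in-$X_g$ functional of order $\gamma^{-1}$; using that the shifted measures satisfy $M_\gamma^\Sigma\to\int_\Sigma e^{\phidiv}\,\d v_g$ and $M_\gamma^{\partial\Sigma}\to\int_{\partial\Sigma}e^{\phidiv/2}\,\d l_g$ as $\gamma\to0$ and expanding $e^{\gamma c}$, $e^{\gamma c/2}$ and the GMC exponentials, the interaction term splits into pieces of order $\gamma^{-2}$, $\gamma^{-1}$ and $\gamma^0$. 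By the relation $S_{\gamma,\bm z,\bm a}(\phi)=\tfrac1{\gamma^2}S_{\bm z,\bm a}(\gamma\phi)+O(1)$ the $\gamma^{-2}$ pieces assemble into $\tfrac1{\gamma^2}S_{\bm z,\bm a}(\phidiv)$, which factors out of everything; the $\gamma^{-1}$ pieces cancel — the linear-in-$c$ ones upon integrating over $\Sigma$ the Liouville equation satisfied by $\phidiv$ (equivalently, by the singular Gauss--Bonnet identity~\eqref{eq:gauss_bonnet_sing}), and the linear-in-$X_g$ ones because, after an integration by parts, they amount to testing $X_g$ against the Euler--Lagrange equation for $S_{\bm z,\bm a}$ — the bulk Liouville equation and its Robin-type boundary relation — which vanishes since $\phidiv$ solves Problem~\ref{prob:class_intro}. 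What survives at order $\gamma^0$ is: in the zero mode, a Gaussian weight $e^{-c^2\ps{1}_{\bm z,\bm a}}$ whose coefficient is (up to the standard normalisation) the total mass $\int_\Sigma\Lambda e^{\phidiv}\,\d v_g+\int_{\partial\Sigma}\sigma e^{\phidiv/2}\,\d l_g$; and, in the field, the second-order Wick terms obtained from the $\gamma^2$ part of the GMC exponentials, whose exponential tilts the law of $X_g$ from the Neumann GFF to the field of density proportional to $e^{-\frac12\ps{X_g,D_{\Lambda,\sigma}X_g}_{\bm z,\bm a}}$, that is the massive GFF with mass $\Lambda$ and Robin boundary condition $(\sigma e^{\phidiv/2}+\partial_{n_g})\X_{\bm z,\bm a}=0$, exactly the fluctuation measure appearing in the heuristic~\eqref{eq:semi_heur}. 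Invoking Lemma~\ref{lemma:massive_GMC} to identify this tilted law — together with the Gaussian zero mode — as the field of covariance $G_{\bm z,\bm a}$ integrated against $e^{-c^2\ps{1}_{\bm z,\bm a}}\sqrt{\ps{1}_{\bm z,\bm a}/\pi}\,\d c$, the integrand then converges pointwise in $c$ to $[1]_0\,\E_{\bm z,\bm a}[F(\X_{\bm z,\bm a}+c)]\,e^{-c^2\ps{1}_{\bm z,\bm a}}\sqrt{\ps{1}_{\bm z,\bm a}/\pi}$ up to the factored constant $e^{-\gamma^{-2}S_{\bm z,\bm a}(\phidiv)}$; taking $F\equiv1$ fixes $[1]_0$.

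The main obstacle — and where the genuine work lies — is to make the expansions above quantitative and to justify interchanging $\lim_{\gamma\to0}$ with $\int_\R\E[\,\cdot\,]\,\d c$. On the probabilistic side this calls for: moment bounds for the bulk and boundary GMC measures uniform in $\gamma\in(0,\gamma_0]$; control of $\E[\exp(-t\,M_\gamma^\Sigma)]$ and of its boundary counterpart for large $t$, via small-deviation estimates $\P(M_\gamma^\Sigma<\delta)\to0$ uniform in $\gamma$; and a quantitative expansion $M_\gamma^\Sigma=\int_\Sigma e^{\phidiv}\,\d v_g+\gamma\,(\text{linear in }X_g)+\tfrac{\gamma^2}2\,(\text{Wick-quadratic})+o(\gamma^2)$ in a topology strong enough to feed the previous point, together with control of the remainders created by $\gamma Q-2=O(\gamma^2)$ and by $\gamma\mathbf h_\gamma\to\mathbf h_\ast$. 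On the zero-mode side it calls for a dominating function integrable in $c$ uniformly in $\gamma$: near $c=0$ the Gaussian $e^{-c^2\ps{1}_{\bm z,\bm a}}$ just produced, for $c\to+\infty$ the super-exponential decay coming from $\tfrac\Lambda{\gamma^2}e^{\gamma c}M_\gamma^\Sigma$ (using $e^{\gamma c}\geq1+\gamma c$ and the small-deviation bound), and for $c\to-\infty$ the decay $e^{\beta_\gamma c}$ with $\beta_\gamma>0$, supplemented by the boundary interaction when $\Lambda$ alone does not suffice. Such estimates were carried out on the round sphere with constant curvature and $a_k<-1$ in~\cite{LRV_semi1, LRV_semi2}; the novelty here is to run them on an arbitrary compact surface with non-empty boundary, with non-constant $\Lambda$ and $\sigma$ and over the full range $a_k,b_l\in(-1,\infty)$ — which in particular forces one to treat the boundary GMC and the bulk--boundary interaction and to exploit the sharp integrability at the conical points and corners, namely the condition $a_k,b_l>-1$ built into $\Div$. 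Once this uniform integrability is in place, dominated convergence yields~\eqref{eq:semi_classical_intro}.
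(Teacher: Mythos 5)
Your strategy is the one the paper follows: realise the correlation function probabilistically, shift the field by the classical solution (your shift by $\rho_\ast-\overline{\rho_\ast}$ is exactly the paper's Girsanov shift by $\frac1\gamma\X_*$ with $\vphidiv=\X_*+c_*$), observe that the order-$\gamma^{-2}$ terms reassemble into $S_{\bm z,\bm a}$ and the order-$\gamma^{-1}$ terms cancel against the Cameron--Martin tilt by the Euler--Lagrange equation for $\phidiv$, and identify the surviving Wick-quadratic tilt as the massive Robin GFF together with a Gaussian zero mode. Up to that point the outline is faithful.

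The gap is in the last third, where you defer the analysis to a wish-list of ``moment bounds, small-deviation estimates, and a quantitative expansion of $M_\gamma^\Sigma$''. Two things there are not routine and are not supplied. First, the interchange of $\lim_{\gamma\to0}$ with $\int_\R\expect{\cdot}\,\d c$ requires a dominating function not only in $c$ but in the \emph{field}: after the cancellations the exponent is $-\frac1{\gamma^2}\int(e^{\gamma(\X+c)}-1-\gamma\X)\,\d v_*$, whose limit $-\ps{H_2[\X+c]}_{\bm z,\bm a}$ is a Wick square, unbounded below; one needs $\sup_{\gamma}\expect{e^{-(\cdots)}}<\infty$. The paper gets this by writing the exponent exactly via the second-derivative GMC measures with an integral Taylor remainder (Proposition~\ref{prop:DGMC}, Lemma~\ref{lemma:DGMC_DL}), which reduces the whole problem to one uniform negative-exponential-moment bound (Lemma~\ref{lemma:lim_der_GMC}, adapting~\cite{LRV_semi1, LRV_semi2}); your generic ``quantitative expansion $+\,o(\gamma^2)$'' does not by itself produce such a bound, and small-deviation estimates for $M_\gamma^\Sigma$ address the zero mode at $c\to+\infty$, not this. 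Second, even the finiteness and strict positivity of the limiting normalisation $\expect{\exp(-\ps{H_2[\X-m_{\bm z,\bm a}(\X)]}_{\bm z,\bm a})}$ — without which $[F]_0$ and the tilted law of Lemma~\ref{lemma:massive_GMC} are not well defined — is a nontrivial spectral statement: it amounts to a Hilbert--Schmidt/Fredholm-determinant condition for the massive Laplacian with Robin boundary conditions on the singular surface $(\Sigma,\gdiv)$, which the paper establishes in Lemmas~\ref{lemma:spec_mass} and~\ref{lemma:unif_DGMC} using the compact embedding from the Moser--Trudinger inequality and Weyl-law input at the conical points and corners. Your proposal invokes the conclusion of Lemma~\ref{lemma:massive_GMC} but never addresses why the quadratic tilt is integrable, which is precisely where the extension from the sphere setting of~\cite{LRV_semi2} to a boundary with corners costs something.
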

	In the above $\Lambda$ and $\sigma$ satisfy the assumptions of Problem~\ref{prob:class_intro} (and in particular are not necessarily constant). This proves that the heuristic from Equation~\eqref{eq:semi_heur} is indeed valid, and that we can write an expansion (in distribution) as $\gamma\to0$ of the Liouville field of the form $\Phi=\frac1\gamma\phidiv+\X_{\bm z,\bm a}+C$ where $C$ is an independent Gaussian random variable.

    In a recent preprint~\cite{LMSWY2}, the semi-classical limit of the so-called \textit{$6j$-symbols} for the principal series of the modular double of $U_q\mathfrak{sl}(2; \R)$ has been obtained. These $6j$-symbols appear in various areas of mathematical physics and notably in the study of the Virasoro TQFT and three-dimensional quantum Anti-de Sitter gravity, see~\cite{LMSWY1, LMSWY2} and the references therein for more context. It was brought to our attention by the authors of~\cite{LMSWY2} that combining the statement of Theorem~\ref{thm:semi_classical_intro} together with~\cite[Theorem 1.1]{ARSZ} and~\cite[Theorems 1.3 and 1.4]{LMSWY2} thus gives a relation between the semi-classical limit of such $6j$-symbols (that arises in the setting of 3d hyperbolic and AdS geometry) and the semi-classical limit of Liouville theory (that is the classical Liouville action in view of Theorem~\ref{thm:semi_classical_intro}), reminiscent of the AdS/CFT correspondence. We refer to~\cite[Section 1.5]{LMSWY2} for more details on this connection.

	\subsubsection{Classical implications}
	Thanks to this statement we can import some results and techniques from the probabilistic study of Liouville CFT. For the sake of simplicity we assume that $\Sigma$ is the upper half-plane $\H$ and consider $\vphidiv$ the unique solution of \begin{equation}\label{eq:PDE_intro}
		\left\lbrace \begin{array}{ll}
			-\Delta \Phidiv= \Lambda e^{\Phidiv} & \text{in } \H \\
			\partial_{n}\Phidiv= -2\sigma e^{\frac12\Phidiv} & \text{on }\R
		\end{array} \right.
	\end{equation}
	with the behavior near a singular point $\Phidiv(x)\sim -2a_k\ln\norm{x-z_k}$ (and likewise for boundary punctures) and where $\Lambda$ (resp. $\sigma$) are now chosen to be constant negative (resp. constant and non-positive on each connected component of $\partial\Sigma\setminus\bm z$).
	Then unlike in the closed case, defining the classical stress-energy tensor by setting $T(z)\coloneqq \partial_z^2\Phidiv(z)-\frac12\left(\partial_z\Phidiv(z)\right)^2$ requires some extra care when $z$ lies on the boundary of $\Sigma$ since the above expression is actually ill-defined. To give it a proper meaning one needs to go through a limiting procedure involving a regularization $\Phireg$ of $\Phidiv$ obtained by \lq\lq smoothing" the conical singularities. This is achieved by changing $\Lambda$ to $\Lambda_{\delta,\eps}$ where the latter is smooth, zero near $\bm z\cup\partial\Sigma$ and coincides with $\Lambda$ away from $\bm z\cup\partial\Sigma$ (and likewise for $\sigma$, see Subsection~\ref{subsec:phireg} for more details):
	\begin{theorem}\label{thm:SET_intro}
		For a bulk point $z\in\H\setminus\bm z$ set $T(z)\coloneqq \partial_z^2\Phidiv(z)-\frac12\left(\partial_z\Phidiv(z)\right)^2$, while on the boundary define for any $t\in\R\setminus\bm z$ (where by Lemma~\ref{lemma:L2} the limit exists):
		\begin{equation}
			T(t)\coloneqq-\frac\Lambda4 e^{\Phidiv(t)}+\lim\limits_{\delta,\eps\to0}\left(\frac12\partial_t^2\Phireg(t)-\frac18\left(\partial_t\Phireg(t)\right)^2+\frac{1}{\eps}\frac{\sigma(t)}{2\pi}e^{\frac12\Phireg(t)}\right).
		\end{equation}
		Moreover define weights by putting $\delta_{a}\coloneqq-a(1+\frac12{a})$ for $a>-1$. Then for $x\in\overline \H\setminus\bm z$
		\begin{equation}
			T(x)=\sum_{k=1}^N2\mathfrak{Re}\left(\frac{\delta_{a_k}}{(x-z_k)^2}+\frac{\bm c_k}{x-z_k}\right)+\sum_{l=1}^M\left(\frac{\delta_{b_l}}{(x-s_l)^2}+\frac{\bm c_l}{x-s_l}\right)
		\end{equation}
		where the accessory parameters are given by (weak) derivatives of the Liouville action with respect to the location of the punctures: for a bulk puncture $\bm c_k=-\frac12\partial_{z_k}S_{\bm z,\bm a}$ while 
		\begin{equation}\label{eq:access_intro}
			c_l=-\frac12\partial_{s_l}S_{\bm z,\bm a} \coloneqq-\frac12\lim\limits_{\delta,\eps\to0}\left(\partial_{s_l}S_{\delta,\eps}-\eval{s_l+\eps}{s_l-\eps}{-}{2\sigma(t) e^{\frac12\phireg(t)}}\right)
		\end{equation}
		for a boundary one (Proposition~\ref{prop:accessory}). Finally in the weak sense of derivatives we have:
		\begin{equation}\label{eq:HEM_class_intro}
			\left\lbrace \begin{array}{ll}
				\left(\partial_z^2+\frac12 T(z)\right)e^{-\frac12\Phidiv(z)}=0&\text{in }\H\setminus\bm z\\
				\left(\partial_t^2+\frac12 T(t)\right)e^{-\frac14\Phidiv(t)}=\frac14\left(\sigma(t)^2-\frac{\Lambda}{2}\right)e^{\frac34\Phidiv(t)}&\text{on }\R\setminus\bm z.
			\end{array} \right.
		\end{equation}
	\end{theorem}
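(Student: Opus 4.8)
The plan is to treat the bulk and boundary cases separately, building on the closed-surface results in the manner of Polyakov's heuristic now made rigorous by Theorem~\ref{thm:semi_classical_intro}. In the bulk, for $z\in\H\setminus\bm z$, the PDE~\eqref{eq:PDE_intro} gives $-\Delta\Phidiv=\Lambda e^{\Phidiv}$ locally, so a direct computation (differentiating the Liouville equation, using $\partial_z\partial_{\bar z}=\tfrac14\Delta$) shows that $\partial_{\bar z}T(z)=\partial_{\bar z}\big(\partial_z^2\Phidiv-\tfrac12(\partial_z\Phidiv)^2\big)=0$, i.e.\ $T$ is holomorphic away from the punctures. Near a bulk singularity $z_k$ one plugs in the expansion $\Phidiv(x)=-2a_k\ln\norm{x-z_k}+u_k(x)$ with $u_k$ continuous (indeed $C^1$ by elliptic regularity away from the other punctures) and reads off the Laurent coefficients: the $(x-z_k)^{-2}$ coefficient is exactly $\delta_{a_k}=-a_k(1+\tfrac12 a_k)$, and the residue $\bm c_k$ is a linear combination of $\partial_z u_k(z_k)$. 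Identifying $\bm c_k=-\tfrac12\partial_{z_k}S_{\bm z,\bm a}$ is then obtained by differentiating the classical action $S_{\bm z,\bm a}=S_{\bm z,\bm a}(\phidiv)$ with respect to $z_k$, using that $\phidiv$ is a critical point so only the explicit dependence on $z_k$ through the singular term $2a_k\phi(z_k)$ and through the domain of the $\Lambda e^\phi$ integral contributes; this is the classical analogue of the Ward identity and mirrors the computation in~\cite{TaZo1, TaZo2, LRV_semi2}.

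On the boundary the main subtlety, as emphasized in the statement, is that $\partial_t^2\Phidiv-\tfrac12(\partial_t\Phidiv)^2$ is genuinely ill-defined: the Robin condition $\partial_n\Phidiv=-2\sigma e^{\Phidiv/2}$ forces a logarithmic-type singularity of the normal derivative, so one must work with the regularization $\Phireg$ (obtained by replacing $\Lambda,\sigma$ with the smoothed $\Lambda_{\delta,\eps},\sigma_{\delta,\eps}$) and track the divergent terms as $\delta,\eps\to0$. The strategy is: (i) for fixed $\delta,\eps$, $\Phireg$ solves a smooth boundary-value problem, so $\partial_t^2\Phireg$ and $\partial_t\Phireg$ are bounded on $\R$ and the combination $\tfrac12\partial_t^2\Phireg-\tfrac18(\partial_t\Phireg)^2+\tfrac1\eps\tfrac{\sigma}{2\pi}e^{\Phireg/2}$ is well-defined; (ii) use the smoothed Robin/Neumann condition to express $\partial_n\Phireg$ on $\R$ and Taylor-expand in $\eps$ to see that the apparently divergent $\tfrac1\eps$ term cancels against the divergence coming from $-\tfrac18(\partial_t\Phireg)^2$, leaving a finite limit — this is precisely the content of the cited Lemma~\ref{lemma:L2}; (iii) add back the bulk term $-\tfrac\Lambda4 e^{\Phidiv(t)}$ to define $T(t)$. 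Once $T(t)$ is defined this way, one checks that it is the boundary value (from inside $\H$) of the holomorphic bulk $T(z)$: this follows by a reflection/Schwarz argument, since $\sigma$ real means $T$ extends by Schwarz reflection across the real line, and then the pole structure at the boundary punctures $s_l$ yields the claimed $\tfrac{\delta_{b_l}}{(x-s_l)^2}+\tfrac{\bm c_l}{x-s_l}$ form together with the mirror contributions $2\mathfrak{Re}(\cdots)$ from the bulk punctures. The identification $c_l=-\tfrac12\partial_{s_l}S_{\bm z,\bm a}$ with the regularized derivative~\eqref{eq:access_intro} again comes from differentiating the action, but now one must carry the $\eps$-regularization through: the boundary term $\eval{s_l+\eps}{s_l-\eps}{-}{2\sigma e^{\phireg/2}}$ is exactly the counterterm that removes the divergence in $\partial_{s_l}S_{\delta,\eps}$, matching the counterterm in the definition of $T(t)$.

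For the classical higher equations of motion~\eqref{eq:HEM_class_intro}, the bulk identity $\big(\partial_z^2+\tfrac12 T(z)\big)e^{-\Phidiv(z)/2}=0$ is a direct algebraic consequence of the definition of $T$ and the Liouville equation: writing $\psi=e^{-\Phidiv/2}$ one computes $\partial_z^2\psi=\big(-\tfrac12\partial_z^2\Phidiv+\tfrac14(\partial_z\Phidiv)^2\big)\psi=-\tfrac12 T\psi$, valid pointwise in $\H\setminus\bm z$ and hence in the weak sense. The boundary equation is the new phenomenon and is the place where I expect the real work to be. The approach is to apply $\big(\partial_t^2+\tfrac12 T(t)\big)$ to $e^{-\Phidiv(t)/4}$ at the level of the regularization $\Phireg$, use the smoothed boundary condition $\partial_n\Phireg=-2\sigma_{\delta,\eps}e^{\Phireg/2}$ to convert normal derivatives into tangential data, and carefully expand in $\eps$: the terms of order $\eps^{-1}$ and $\eps^0$ must be shown to organize, via the PDE for $\Phireg$, into the finite right-hand side $\tfrac14(\sigma(t)^2-\tfrac\Lambda2)e^{3\Phidiv(t)/4}$. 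This is a somewhat delicate bookkeeping of the first- and second-order boundary Taylor expansions of $\Phireg$ near $\R$, combined with the regularity of $\Phidiv$ on $\R\setminus\bm z$ (boundary Schauder estimates). The whole identity should then be promoted from the regularized level to the weak sense by dominated convergence, using the $L^2$-type control furnished by Lemma~\ref{lemma:L2}. I expect step (ii)/(iii) above — the cancellation of the $\tfrac1\eps$ divergence and the bookkeeping in the boundary HEM — to be the principal obstacle; everything else is either classical (holomorphicity and pole extraction of $T$ in the bulk) or a faithful transcription of the closed-case action-differentiation argument.
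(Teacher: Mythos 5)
There is a genuine gap, and it sits exactly where you yourself flag ``the real work'': the boundary statements. Your outline is a purely classical/PDE route (holomorphicity of $T$ in the bulk, Laurent expansion at the punctures, direct differentiation of the action, and a Taylor-expansion-in-$\eps$ bookkeeping at the boundary), but the two hardest claims --- that the regularized limit defining $T(t)$ and $\partial_{s_l}S_{\bm z,\bm a}$ exists with precisely those counterterms, and that the boundary higher equation of motion holds with the specific right-hand side $\frac14(\sigma^2-\frac\Lambda2)e^{\frac34\Phidiv}$ --- are left as ``delicate bookkeeping'' and never carried out. Moreover the one concrete mechanism you do propose is wrong in detail: the $\frac1\eps$ divergence that the counterterm $\frac{1}{\eps}\frac{\sigma(t)}{2\pi}e^{\frac12\Phireg(t)}$ must absorb does not come from $-\frac18(\partial_t\Phireg)^2$ (the first tangential derivative is a symmetrically truncated Hilbert transform of $\sigma e^{\frac12\Phireg}$ and converges), but from $\frac12\partial_t^2\Phireg$, whose Green's-function representation produces $\int_{|s-t|>\eps}\frac{\sigma(s)e^{\frac12\Phireg(s)}}{(t-s)^2}\,\d s\sim \frac{2}{\eps}\sigma(t)e^{\frac12\Phidiv(t)}$; the extra $-\frac\Lambda4 e^{\Phidiv(t)}$ similarly comes from the $\delta$-collar of the bulk integral, not from the Robin condition. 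Without identifying the correct source of the divergences your step (ii) cannot be completed as stated.

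You should also be aware that the paper's route is entirely different and this difference is the point of the paper: none of the boundary identities are proved by a direct classical computation. The pole expansion of $T$ and the existence of the regularized limits are obtained from an exact Green's-function representation of $\Phireg$ (Equation~\eqref{eq:key_formula2_reg}) and Stokes-type manipulations yielding local Ward identities (Lemmas~\ref{lemma:formula_der}--\ref{lemma:L2}, Proposition~\ref{prop:L2}); the identification $\bm c_l=-\frac12\partial_{s_l}S_{\bm z,\bm a}$ is obtained by passing the \emph{quantum} integration-by-parts identity~\eqref{eq:IPP} for the descendant $\mc L_{-1}^{(l)}$ through the semi-classical limit of Theorem~\ref{thm:semi_classical} (Proposition~\ref{prop:accessory}); and the boundary higher equation of motion is the $\gamma\to0$ limit of the quantum higher equation of motion~\eqref{eq:HEM_CFT} of~\cite{Cer_HEM}, with the right-hand side constant emerging from the asymptotics of $c_\gamma(\bm\mu)$ under the scalings $\mu=\frac{\Lambda}{\gamma^2}$, $\mu_\partial=\frac{\sigma}{\gamma^2}$. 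A purely classical derivation of~\eqref{eq:HEM_class_intro} on the boundary may well be possible (and would be interesting), but as written your proposal does not supply it, and in particular gives no argument producing the coefficient $\frac14(\sigma^2-\frac\Lambda2)$.
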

	We actually provide in Theorem~\ref{thm:L2} a more general definition of the stress-energy tensor at a singular point based on a (slightly more involved) regularization procedure, thus allowing for a direct translation of CFT techniques within our setting. For instance if we consider only boundary insertions, with $s_l$ having the special weight $b_l=1$ we get the equation:
	\begin{equation}\label{eq:HEM_heavy_intro}
		\frac12\left(\partial_{s_l}S_{\bm z,\bm a}\right)^2+\sum_{k\neq l}\frac{4\delta_k}{(s_l-s_k)^2}-\frac{2}{s_l-s_k}\partial_{s_k}S_{\bm z,\bm a}=-2\frac{\sigma(s_l^-)+\sigma(s_l^+)}{\pi} e^{\frac12\Phidiv^{(l)}(s_l)}.
	\end{equation}
	This equation is reminiscent of the differential equation obtained in~\cite[Proposition 1.8]{PeWa}, for the minimal (multichordal) Loewner potential, when $\sigma(s_l^-)+\sigma(s_l^+)=0$ for all $l$. 
	
	The last equation in~\eqref{eq:HEM_class_intro} is referred to as an \textit{higher equation of motion} in the CFT literature~\cite{BB10, BaWu, Cer_HEM}, and can be heuristically justified as follows: for $z=t+i\delta$ with $\delta\to0$ and since $\Phidiv$ satisfies~\eqref{eq:PDE_intro}, if $\Phidiv$ was differentiable on $\R\setminus\bm z$ we would have up to a $o(\delta)$
	\[
	\left(\partial_{z}^2+\frac12 T(z)\right) e^{-\frac12\Phidiv(z)}=\left[\left(\partial_t^2+\frac12 T(z)\right)e^{-\frac14\Phidiv(z)}-\frac14\left(\sigma(t)^2-\frac\Lambda2\right)e^{\frac34\Phidiv}\right]e^{-\frac14\Phidiv(z)}.
	\]
	However since this is not the case as soon as the boundary is not geodesic the latter needs a proper justification. The same applies to the accessory parameters, since in order to define them on for a boundary insertion one relies on a limiting procedure too, highlighting the qualitative difference of behavior of $\Phidiv$ on the boundary compared with inside the bulk.

	After a first draft of the manuscript was written, it was brought to our attention that the expression of the accessory parameters given in Equation~\eqref{eq:access_intro} was already derived in~\cite{HJ06}, where a similar regularization procedure was used.
    However to the best of our knowledge, the existence of higher equations of motions in the form of~\eqref{eq:HEM_class_intro} or~\eqref{eq:HEM_heavy_intro} for the classical field, or the limiting procedure needed to define the stress-energy tensor, are novel as far as we know. We hope to understand in more depth the geometrical meaning of such equations and explore some of its implications in the future.
    
	\textbf{\textit{Acknowledgments}}
	The author would like to thank Guillaume Baverez, Colin Guillarmou, Eveliina Peltola and Baojun Wu for discussions related (or not) to this work. The author is also grateful to Xin Sun and Peking University for their hospitality. \\
	Part of the manuscript has been written during the trimester program \lq\lq Probabilistic methods in quantum field theory" organized at the Hausdorff Research Institute for Mathematics and funded by the Deutsche Forschungsgemeinschaft (DFG, German Research Foundation) under Germany's Excellence Strategy – EXC-2047/1 – 390685813.\\
	The author has been supported by Eccellenza grant 194648 of the Swiss National Science Foundation and is a member of NCCR SwissMAP.

	
	
	\section{Prescription of curvatures and conical singularities}\label{sec:unif}
	We describe in this first section the classical problem under consideration in this document and which can be formulated as follows: given $\Sigma$ an open Riemann surface equipped with a smooth Riemannian metric $g$, can we find a conformal metric with constant scalar curvature, piecewise constant geodesic curvature and prescribed conical singularities and corners?
	
	
	
	\subsection{Riemannian geometry and conical singularities}
	Let $\Sigma$ be a compact connected Riemann surface with non-empty boundary $\partial\Sigma$, and let $\chi(\Sigma)$ be its Euler characteristic. For notational simplicity we make the convention that $\partial\Sigma\cap\Sigma=\emptyset$ and set $\overline\Sigma=\Sigma\sqcup\partial\Sigma$. We assume that $\overline\Sigma$ is equipped with a smooth Riemannian metric $g$, with corresponding Ricci scalar curvature $R_g$ and geodesic curvature $k_g$. We then have the Gauss-Bonnet formula:
	\begin{equation} \label{eq:gauss-bonnet}
		\frac1{4\pi}\int_\Sigma R_g \d v_g +\frac1{2\pi}\int_{\partial\Sigma} k_g \d \lambda_g = \chi(\Sigma).
	\end{equation}
	We also denote by $\Delta_g$ the Laplacian in the metric $g$ and $\partial_{\vec{n}_g}$ the outward normal derivative. 
	
	\subsubsection{Uniformization of open Riemann surfaces}
	Given $g$ a Riemannian metric on $\Sigma$ its conformal class is the set of metrics $g'=e^{\varphi}g$ with $\varphi$ smooth over $\overline{\Sigma}$. Under such a change of metric the variations of the scalar and geodesic curvatures are described by
	\begin{equation} \label{eq:conf_curv}
		R_{g'} = e^{-\varphi}(R_g - \Delta_g\varphi) \quad ; \quad k_{g'} = e^{-\varphi/2}(k_g + \frac12\partial_{\vec{n}_g} \varphi).
	\end{equation}
	Within each conformal class there are~\cite{OPS88} so-called uniform metrics of type $I$ and $II$, unique up to scaling and isometry. They are such that in these metrics the scalar curvature is constant (resp. $0$) while the geodesic curvature is $0$ (resp. constant) for type $I$ (resp. $II$). The sign of these constants is prescribed by the Euler characteristic of $\Sigma$.
	
	\subsubsection{Conical singularities and corners}
	Let $z_1,\cdots,z_N$ (resp.  $s_1,\cdots,s_M$) be distinct points in $\Sigma$ (resp. on $\partial\Sigma$), and set $\bm z\coloneqq\{z_1,\cdots,z_N,s_1,\cdots,s_M\}$. To these punctures we associate real numbers $a_1,\cdots,a_N$ and $b_1,\cdots,b_M$ subject to the condition
	\begin{equation}\label{eq:sei_cla}
		a_k>-1\qt{and}b_l>-1\qt{for all}1\leq k\leq N\qt{and}1\leq l\leq M.	
	\end{equation}
	Let us write $\bm a\coloneqq\{a_1,\cdots,a_N,b_1,\cdots,b_M\}$ and form the divisor
	\begin{equation}\label{eq:divisor}
		\Div\coloneqq \sum_{k=1}^Na_kz_k+\frac12\sum_{l=1}^Mb_ls_l.
	\end{equation}
    Due to these singularities we will need to cut out small (semi-) disks from $\Sigma$. To do so, for $\delta,\eps>0$ small enough and $g$ smooth let $\Sigma_\delta\coloneqq\{x\in\Sigma,d_g(x,\partial\Sigma)>\delta\}$ and $\bm z_\eps\coloneqq \cup_{z\in\bm z}B_g(z,\eps)$ where $B_g(z,\eps)$ is the geodesic (semi-) disk of radius $\eps$ centered at $z$. We then set
	\begin{equation}
		\Sigma_{\delta,\eps}\coloneqq\Sigma_\delta\setminus\bm z_\eps\qt{and}\left(\partial\Sigma\right)_{\eps}\coloneqq\partial\Sigma\setminus\bm z_\eps.
	\end{equation}
	
	In agreement with~\cite{Troyanov}, we will say that \textit{$g_s$ represents the divisor $\Div$} if $g_s$ is a Riemannian metric on $\overline\Sigma\setminus\bm z$ such that for each singular point $z_k\in\bm z$, there is an open neighbourhood $\mc O\subset \overline{\Sigma}$ of $z_k$, local coordinates $x$ with $x(z_k)=0$ and $w$ continuous over $\mc O$ for which 
	\begin{equation}\label{eq:gadm}
		g_s(x)=\norm{x}^{2 a_k}e^{2w(x)}\norm{dx}^2.
	\end{equation}
	When $z_k$ is in $\Sigma$ this amounts to saying that $g_s$ has a conical singularity of order $a_k$ (or angle $2\pi(1+a_k)$) at $z_k$. If $z_k$ is on $\partial\Sigma$ then $g_s$ has a corner of order $a_k$ (or angle $\pi(1+a_k)$). 
	Hereafter we will say that such a $g_s$ is \textit{admissible} when it satisfies the following assumptions:
	\begin{itemize}
		\item $g_s$ represents the divisor $\Div$;
		\item $R_{g_s}$ extends to a continuous function over $\Sigma$;
		\item $k_{g_s}$ is continuous on each connected component of $\partial\Sigma\setminus\bm z$.
	\end{itemize}	
	Under this assumption, let us define the (singular) Euler characteristic of $\left(\Sigma,\bmD_{\bm z,\bm a}\right)$ by setting
	\begin{equation}\label{eq:eul}
		\chi\left(\Sigma,\bmD_{\bm z,\bm a}\right)\coloneqq \chi(\Sigma)+\sum_{k=1}^Na_k+\frac12\sum_{l=1}^Mb_l.
	\end{equation}
	Then we have the following generalized Gauss-Bonnet formula:
	\begin{proposition}\label{prop:gauss-bonnet}
		Assume that $g_s$ is admissible and represents the divisor $\bmD_{\bm z,\bm a}$. Then
		\begin{equation}\label{eq:gauss_bonnet_sing}
			\frac1{4\pi}\int_\Sigma R_{g_s}\d v_{g_s}+\frac1{2\pi}\int_{\partial\Sigma}k_{g_s} \d l_{g_s}=\chi\left(\Sigma,\Div\right).
		\end{equation}
	\end{proposition}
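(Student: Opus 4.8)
The plan is to compare $g_s$ with the fixed smooth reference metric $g$ and reduce the statement to the classical Gauss--Bonnet formula \eqref{eq:gauss-bonnet} together with a local analysis of the conformal factor near the punctures. Since $g_s$ and $g$ induce the same conformal structure, I would write $g_s=e^{\varphi}g$ with $\varphi$ of class $C^2$ on $\overline\Sigma\setminus\bm z$; by \eqref{eq:gadm}, in local isothermal coordinates $x$ centred at a puncture one has $\varphi(x)=2a_k\ln\norm{x}+u_k(x)$ with $u_k$ continuous (writing $a_k$ for the coefficient $b_l$ when the puncture $s_l$ is a boundary one). Feeding this into the transformation rules \eqref{eq:conf_curv}, and using $\d v_{g_s}=e^{\varphi}\d v_g$, $\d l_{g_s}=e^{\varphi/2}\d l_g$, gives the pointwise identities $R_{g_s}\d v_{g_s}=(R_g-\Delta_g\varphi)\d v_g$ on $\Sigma\setminus\bm z$ and $k_{g_s}\d l_{g_s}=(k_g+\tfrac12\partial_{\vec{n}_g}\varphi)\d l_g$ on $\partial\Sigma\setminus\bm z$. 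Admissibility ensures $R_{g_s}$ is continuous on $\Sigma$, hence $\Delta_g\varphi=R_g-e^{\varphi}R_{g_s}$ is integrable on $\Sigma$ (near $z_k$ the density behaves like $\norm{x}^{2a_k}$, which is integrable precisely because $a_k>-1$), and similarly $k_{g_s}$ continuous forces $\partial_{\vec{n}_g}\varphi$ to be integrable on $\partial\Sigma$.

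Next I would excise small geodesic (semi-)disks, work on $\Sigma\setminus\bm z_\eps$ where $\varphi$ is $C^2$ up to the boundary, and apply the divergence theorem:
\[
\int_{\Sigma\setminus\bm z_\eps}\Delta_g\varphi\,\d v_g=\int_{(\partial\Sigma)_\eps}\partial_{\vec{n}_g}\varphi\,\d l_g+\sum_{z_k\in\bm z}\int_{\partial B_g(z_k,\eps)}\partial_{\vec\nu}\varphi\,\d l_g,
\]
where $\vec\nu$ is the normal pointing into the excised (semi-)disk. Combining with the two pointwise identities and the smooth Gauss--Bonnet formula \eqref{eq:gauss-bonnet}, the left-hand side of \eqref{eq:gauss_bonnet_sing} becomes
\[
\chi(\Sigma)-\frac1{4\pi}\sum_{z_k\in\bm z}\int_{\partial B_g(z_k,\eps)}\partial_{\vec\nu}\varphi\,\d l_g+o(1)\qquad(\eps\to0),
\]
where I used $\int_{\Sigma\setminus\bm z_\eps}\Delta_g\varphi\,\d v_g\to\int_\Sigma\Delta_g\varphi\,\d v_g$ and $\int_{(\partial\Sigma)_\eps}\partial_{\vec{n}_g}\varphi\,\d l_g\to\int_{\partial\Sigma}\partial_{\vec{n}_g}\varphi\,\d l_g$ by dominated convergence (the piece of $\partial\Sigma$ removed near a boundary puncture has length $O(\eps)$ and carries an integrable density, hence is negligible).

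It then remains to evaluate $\lim_{\eps\to0}\int_{\partial B_g(z_k,\eps)}\partial_{\vec\nu}\varphi\,\d l_g$ at each puncture. Splitting $\varphi=2a_k\ln\norm{x}+u_k$ and noting that in isothermal coordinates a geodesic (semi-)circle of radius $\eps$ is a Euclidean (semi-)circle to leading order, the logarithmic term contributes $-2a_k\cdot(\text{length})/\eps$, i.e.\ $-4\pi a_k$ at a bulk puncture (full circle) and $-2\pi b_l$ at a boundary one (half-circle, the tangential piece vanishing). The remainder contributes $o(1)$: $u_k$ solves an elliptic equation with right-hand side in $L^1$ of size $\norm{x}^{2a_k}$ (and, at a boundary puncture, a Neumann-type datum of size $\norm{x}^{b_l}$ coming from continuity of $k_{g_s}$), so peeling off an explicit particular solution gives $\int_{\partial B_g(z_k,\eps)}\norm{\nabla u_k}_g\,\d l_g=O(\eps^{2a_k+2})+O(\eps)$ (resp.\ $O(\eps^{b_l+1})$), the exponents being positive exactly because $a_k,b_l>-1$. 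Collecting everything yields $\frac1{4\pi}\int_\Sigma R_{g_s}\d v_{g_s}+\frac1{2\pi}\int_{\partial\Sigma}k_{g_s}\d l_{g_s}=\chi(\Sigma)+\sum_{k}a_k+\tfrac12\sum_{l}b_l=\chi(\Sigma,\Div)$.

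The main obstacle is precisely this last regularity point: controlling the contribution of the smooth remainder $u_k$ of $\varphi$ near each puncture, since $\varphi$ is only continuous-modulo-logarithm and $\Delta_g\varphi$ need not be bounded there. This is handled by interior and boundary elliptic estimates (or, equivalently, by an explicit particular-solution construction) near each $z_k$ and $s_l$; and since the left-hand side of \eqref{eq:gauss_bonnet_sing} does not depend on $\eps$ at all, it even suffices to exhibit a single sequence $\eps_n\to0$ along which the circle integrals converge as claimed. Everything else is the bookkeeping of the classical Gauss--Bonnet formula under a conformal change of metric.
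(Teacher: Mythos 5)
Your proposal is correct and follows essentially the same route as the paper: write $g_s=e^{\varphi}g$ for a smooth reference metric, combine the conformal transformation rules \eqref{eq:conf_curv} with the smooth Gauss--Bonnet formula \eqref{eq:gauss-bonnet}, excise small geodesic (semi-)disks and apply the divergence theorem, and then evaluate the circle integrals by splitting $\varphi=2a_k\ln\norm{x}+u_k$, with the logarithmic part producing the $a_k$ and $\tfrac12 b_l$ contributions and the remainder controlled by local regularity (the paper invokes \cite[Lemma 3]{Troyanov} in the bulk and the bound $\norm{\partial_{n}u_k}\leq C\norm{x}^{a}$ at boundary punctures, which is exactly the estimate your elliptic-regularity step supplies).
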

	\begin{proof}
		The argument is similar to~\cite[Proposition 1]{Troyanov}. We write $g_s=e^{\varphi}g_0$ where $g_0$ is a uniform type $I$ metric on $\Sigma$. Then thanks to Equation~\eqref{eq:conf_curv} together with Equation~\eqref{eq:gauss-bonnet}
		\begin{align*}
			\chi\left(\Sigma,\Div\right)&=\chi\left(\Sigma\right)+\frac1{4\pi}\lim\limits_{\eps\to0}\int_{\Sigma_{0,\eps}} -\Delta_{g_0}\varphi\d v_{g_0}+\int_{\left(\partial\Sigma\right)_\eps}\partial_{n_{g_0}}\varphi \d l_{g_0}\\
			&=\chi\left(\Sigma\right)+\frac1{4\pi}\lim\limits_{\eps\to0}\int_{\partial\left(\Sigma_{0,\eps}\right)} -\partial_{n_{g_0}}\varphi\d l_{g_0}+\int_{\left(\partial\Sigma\right)_\eps}\partial_{n_{g_0}}\varphi \d l_{g_0}.
		\end{align*}
		Now for $\eps$ small enough $\partial\left(\Sigma_{0,\eps}\right)$ is the disjoint union of $\left(\partial\Sigma\right)_\eps$ and $\cup_{z\in\bm z}\partial B_{g_0}(z,\eps)$. Hence
		\begin{align*}
			\chi\left(\Sigma,\Div\right)&=\chi\left(\Sigma\right)-\frac1{4\pi}\sum_{z\in\bm z}\lim\limits_{\eps\to0}\int_{\partial B_{g_0}(z,\eps)}\partial_{n_{g_0}}\varphi\d l_{g_0}
		\end{align*}
		with $\vec n_{g_0}$ pointing inside $B_{g_0}(z,\eps)$. In local coordinates $\varphi(x)=2a\ln\norm{x}+w(x)$ where for a bulk insertion $\eps\partial_nw(\xi)\to0$ uniformly over $\xi\in\partial B_{g_0}(z,\eps)$ via~\cite[Lemma 3]{Troyanov}. In the boundary case, in a neighborhood of $z$, $\partial_{n_{g_0}} w$ is integrable since it is locally bounded (in local coordinates) by $C\norm{x}^{a}$, thus we also have the uniform bound $\eps\partial_nw(\xi)\to0$ over $\partial B_{g_0}(z,\eps)$.
	\end{proof}
	Hereafter to lighten the notations we will often denote $a_{N+l}\coloneqq \frac12b_l$ and $z_{N+l}\coloneqq s_l$. For instance the singular Euler characteristic becomes $\eul=\chi(\Sigma)+\sum_{k=1}^{N+M}a_k$.

	\subsubsection{Green's functions} \label{subsec:green}
	There is a natural way to express a singular metric as above using Green's functions. In the closed case they are defined as solutions of the  (weak) problem
	\begin{equation}\label{eq:closed_pb_green}
		\left\lbrace \begin{array}{ll}
			-\Delta_g G_g(\cdot,y) = 2\pi\left(\delta_y - \frac{1}{v_g(\Sigma)}\right) & \text{in } \Sigma \\
			\int_\Sigma G_g(x,y)\d v_g(x) = 0 & 
		\end{array} \right.
	\end{equation}
	for all $y \in \Sigma$, with $\delta_y$ the Dirac delta function and with $g$ smooth. In the open case we will consider Neumann boundary conditions, in which case Green's functions are solutions of
	\begin{equation}\label{eq:Neumann_pb_green}
		\left\lbrace \begin{array}{ll}
			-\Delta_g G_g(\cdot,y) = 2\pi\left(\delta_y - \frac{1}{v_g(\Sigma)}\right) & \text{in } \Sigma \\
			\partial_{\vec{n}_g} G_g(\cdot,y) = 0 & \text{on } \partial\Sigma\\
			\int_\Sigma G_g(x,y)\d v_g(x) = 0. &
		\end{array} \right.
	\end{equation}
	These Green's functions are such that for any $f$ smooth over $\overline\Sigma$ and with $m_g(f)\coloneqq \frac1{v_g(\Sigma)}\int_\Sigma f\d v_g$
	\begin{equation} \label{eq:Green}
		\frac1{2\pi}\int_{\partial\Sigma} G_g(x,\cdot)\partial_{\vec{n}_{g}} f \d \lambda_g + \frac1{2\pi}\int_{\Sigma}  G_g(x,\cdot)\left(-\Delta_gf\right)  \d v_g = f(x) - m_g(f).
	\end{equation}
	Moreover they have a logarithmic divergence on the diagonal in the sense that:
	\begin{lemma}\label{lemma:approx_green}
		There exists $W$ (resp. $W_\partial$) continuous in $\Sigma$ (resp. on $\partial\Sigma$) such that, as $x\to y$,
		\begin{eqs}\label{eq:div_green}
			&G_{g}(x,y) = -\log d_{g}(x,y) + W(y)+o(1)\qt{if}y\in\Sigma\qt{and}\\ &G_{g}(x,y) = -2\log d_{g}(x,y) +W_\partial(y)+o(1)\qt{if}y\in\partial\Sigma.
		\end{eqs}
		In the rest we will denote $W_\partial=W$ to keep the notations not too heavy.
	\end{lemma}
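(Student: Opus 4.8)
The statement is local in $x$ near $y$, so the plan is to subtract from $G_g(\cdot,y)$ an explicit model carrying the prescribed logarithmic singularity, and then read off the remainder from elliptic regularity for the Neumann Laplacian. Throughout I take for granted the existence and symmetry of $G_g$ and its smoothness off the diagonal, which are standard; only the diagonal asymptotics are at stake.

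First, for a bulk pole $y\in\Sigma$, choose isothermal coordinates $z$ on a neighbourhood $\mc O\subset\Sigma$ of $y$ with $z(y)=0$ and $g=e^{2\rho}\norm{dz}^2$, $\rho$ smooth, together with $\chi\in C_c^\infty(\mc O)$ with $\chi\equiv1$ near $y$; we may take $\overline{\mc O}\cap\partial\Sigma=\emptyset$. Because in dimension two the pairing of $\Delta_g$ against a point mass does not see the conformal factor, the model $f_y(x):=-\log\norm{z(x)}$ satisfies $-\Delta_g f_y=2\pi\delta_y$ on $\mc O$ with the convention of~\eqref{eq:Neumann_pb_green}. Hence $u_y:=G_g(\cdot,y)-\chi f_y$ solves $-\Delta_g u_y=-\tfrac{2\pi}{v_g(\Sigma)}+r_y$ in $\Sigma$ with $\partial_{\vec{n}_g}u_y=0$ on $\partial\Sigma$, where $r_y=2\nabla_g\chi\cdot\nabla_g f_y+f_y\Delta_g\chi$ is bounded and supported where $\nabla\chi\neq0$, a region on which $f_y$ is smooth. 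The forcing thus lies in $L^p(\Sigma)$ for all $p<\infty$, so $W^{2,p}$-regularity for the Neumann problem and the embedding $W^{2,p}\hookrightarrow C^0(\overline\Sigma)$ ($p>2$) give $u_y\in C^0(\overline\Sigma)$. Near $y$ then $G_g(x,y)=-\log\norm{z(x)}+u_y(x)$; since $d_g(x,y)=e^{\rho(y)}\norm{z(x)}(1+O(\norm{z(x)}))$ as $x\to y$, this reads $G_g(x,y)=-\log d_g(x,y)+\rho(y)+u_y(y)+o(1)$, i.e.\ the bulk expansion with $W(y)=\rho(y)+u_y(y)$. Continuity of $W$ on $\Sigma$ follows by running the same construction with $z=z_y$, $\rho=\rho_y$, $\chi$ depending smoothly on $y$ over a neighbourhood: then $r_y-\tfrac{2\pi}{v_g(\Sigma)}$ depends continuously on $y$ in $L^p$, so $u_y$ does in $W^{2,p}\hookrightarrow C^0$, and $\rho(y)$ is smooth; equivalently $(x,y)\mapsto G_g(x,y)+\chi(x,y)\log d_g(x,y)$ extends continuously across the diagonal and restricts to $W$.

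For a boundary pole $y\in\partial\Sigma$ the argument is identical with an image-charge model. Take half-disk isothermal coordinates $z:\mc O\cap\overline\Sigma\to\overline{\D^+}$ with $z(y)=0\in\R$ and $g=e^{2\rho}\norm{dz}^2$, and set $f_y(x):=-\log\norm{z(x)}-\log\norm{\overline{z(x)}}=-2\log\norm{z(x)}$. This has the required singularity at $y$ and, crucially, $\partial_{\vec{n}_g}f_y=0$ on $(\partial\Sigma\cap\mc O)\setminus\{y\}$ (the normal derivative of $-\log(\mathfrak{Re}(z)^2+\mathfrak{Im}(z)^2)$ vanishes on $\{\mathfrak{Im}(z)=0\}$). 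Hence $u_y:=G_g(\cdot,y)-\chi f_y$ again solves a Neumann problem with $L^p$ forcing and smooth boundary datum supported away from $y$, so $u_y\in C^0(\overline\Sigma)$, and with $d_g(x,y)=e^{\rho(y)}\norm{z(x)}(1+o(1))$ one gets $G_g(x,y)=-2\log d_g(x,y)+2\rho(y)+u_y(y)+o(1)$, i.e.\ $W_\partial(y)=2\rho(y)+u_y(y)$, continuous along $\partial\Sigma$ by the same smooth-dependence argument.

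The one point that genuinely needs an argument — and which I expect to be the crux — is that the boundary model carries a \emph{doubled} logarithmic charge: that $-\Delta_g f_y=2\pi\delta_y$ holds, in the Neumann sense of~\eqref{eq:Neumann_pb_green}, for $f_y=-2\log\norm{z(\cdot)}$ rather than $-\log\norm{z(\cdot)}$. Tested against a Neumann test function $\phi$ (so $\phi$ extends evenly under reflection across $\{\mathfrak{Im}(z)=0\}$), the plane charge $4\pi\delta_0$ of $-2\log\norm{z}$ sits on the boundary and, by the reflection symmetry, contributes exactly half of it, $2\pi\phi(y)$, to the half-space — which is precisely the normalisation of~\eqref{eq:Neumann_pb_green}. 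Alternatively one can bypass the model altogether by passing to the Schottky double $(\widehat\Sigma,\widehat g)$ with reflection $\tau$: one checks $G_g(x,y)=G_{\widehat g}(x,y)+G_{\widehat g}(x,\tau y)+c(y)$ for $x,y\in\Sigma$, with $c(y)$ continuous and chosen so that $\int_\Sigma G_g(\cdot,y)\,\d v_g=0$; specialising to $y\in\partial\Sigma=\mathrm{Fix}(\tau)$ gives $G_g(x,y)=2G_{\widehat g}(x,y)+c(y)$, and the (already established) closed-surface expansion $G_{\widehat g}(x,y)=-\log d_{\widehat g}(x,y)+\widehat W(y)+o(1)$ together with $d_{\widehat g}(x,y)\sim d_g(x,y)$ for $x,y\in\overline\Sigma$ yields the boundary statement with $W_\partial=2\widehat W+c$ continuous, the only mild care being the low (Lipschitz) regularity of $\widehat g$, for which De Giorgi–Nash continuity of solutions suffices in place of $W^{2,p}$.
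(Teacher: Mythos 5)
Your argument is correct, but it takes a genuinely different (and more self-contained) route than the paper. The paper's proof is essentially a two-line citation: the closed-surface expansion is imported from the earlier constructions of Liouville theory (\cite{DKRV, DRV16, GRV16}), and the boundary case is reduced, via uniformization to a type $I$ metric, to the closed case by the doubling trick as in \cite{CH_construction}. You instead build a local parametrix: in (half-disk) isothermal coordinates you subtract the explicit model $-\log\norm{z}$ (resp. the image-charge model $-2\log\norm{z}$) and conclude by $W^{2,p}$ elliptic regularity for the Neumann problem that the remainder is continuous up to the diagonal, which yields both the expansion and the continuity of $W$, $W_\partial$ in one stroke. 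You also correctly isolate and verify the one genuinely delicate point, namely that a pole sitting on $\partial\Sigma$ carries the doubled logarithmic charge under the normalisation of Equation~\eqref{eq:Green}; your Green's-identity/reflection computation of the $2\pi\phi(y)$ contribution is exactly the right bookkeeping. Your secondary route through the Schottky double coincides with the paper's, and your remark about the regularity of the doubled metric explains precisely why the paper first passes to a uniform type $I$ metric (geodesic boundary) before doubling, a step you can bypass either with De Giorgi--Nash as you suggest or by your direct parametrix, which works for an arbitrary smooth metric. In short: the paper's proof is shorter by citation; yours is longer but self-contained and slightly more general, and both are valid.
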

	\begin{proof}
		If $\Sigma$ is closed this statement follows from~\cite[Lemma 3.2]{GRV16} in the case of negative Euler characteristic, \cite[Equation (3.4)]{DRV16} for $\chi(\Sigma)=0$ and~\cite[Proposition 2.5]{DKRV} for positive Euler characteristic. If $\Sigma$ has non-empty boundary thanks to uniformization  it suffices to show the statement for a uniform metric of type $I$. For such a metric we can use the \lq\lq doubling trick" to deduce the statement from the closed one, see \textit{e.g.} the proof of~\cite[Lemma 3.3]{CH_construction}.
	\end{proof}
	This property allows to write down the conical singularities in terms of a singular conformal factor. Namely given a divisor $\Div$ as above we now introduce the function over $\overline\Sigma$:
	\begin{equation}
		\Hdiv(x)\coloneqq -2\sum_{k=1}^{N+M}a_k G_g(x,z_k).
	\end{equation}
	Then for any $p<p*\coloneqq \min\limits_{a_k<0}\frac{1}{\norm{a_k}}\wedge\min\limits_{b_l<0}\frac{1}{\norm{b_l}}$, $e^{\Hdiv}$ belongs to $L^{p}_{\text{loc}}(\Sigma,g)$ while $e^{\frac12\Hdiv}$ is an element of $L^p_{\text{loc}}(\partial\Sigma,g)$. 
	Moreover an admissible metric can be represented as $e^{\varphi+\Hdiv}g$, where $g$ is a smooth Riemannian metric on $\overline{\Sigma}$, while $\varphi$ is continuous over $\overline\Sigma$. We note that via the uniformization for open Riemann surfaces $g$ can be taken to be a uniform metric of type $I$.
	
	\subsubsection{Moser-Trudinger inequality}
	Following~\cite{Troyanov}, we define the \textit{Trudinger constant} $\tdiv$ associated to the surface $\Sigma$ and the divisor $\Div$ by setting
	\begin{equation}
		\tdiv\coloneqq 1\wedge \min_{1\leq k\leq N+M} 1+a_k.
	\end{equation}
	It naturally appears in the curvature prescription problem especially thanks to the following Moser-Trudinger inequalities proved in~\cite[Lemmas 3.2 and 3.3]{BRS}:
	\begin{proposition}\label{prop:moser_trudinger}
		Assume that $\tau'<\tdiv$ and that $g$ is smooth. Then there exists a positive constant $C$ such that for any $\varphi\in \dot H^1(\Sigma,g)$:
		\begin{eqs}
			4\tau'\ln\left(\int _{\partial\Sigma}e^{\frac12\left(\varphi+\Hdiv\right)}\d l_{g}\right)
			&\leq C+\frac{1}{4\pi}\int_\Sigma\norm{\nabla_g\varphi}_g^2dv_g\\
			8\tau'\ln\left(\int _{\Sigma}e^{\varphi+\Hdiv}\d v_{g}\right)
			&\leq C+\frac{1}{4\pi}\int_\Sigma\norm{\nabla_g\varphi}_g^2dv_g.
		\end{eqs}
	\end{proposition}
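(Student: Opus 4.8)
The plan is to reduce the two inequalities to classical Moser--Trudinger-type estimates on a closed surface obtained by the doubling trick, and then to absorb the singular weight $e^{\Hdiv}$ (equivalently, the factors $\norm{x}^{2a_k}$, $\norm{x}^{b_l}$ near the punctures) into the constant $C$ using the explicit logarithmic divergence of $G_g$ from Lemma~\ref{lemma:approx_green}. First I would recall that on a compact surface without boundary (or for a uniform type $I$ metric, via the classical Moser inequality \`a la Fontana) one has, for $\varphi\in\dot H^1$, a bound of the form $\ln\int e^{\varphi-m_g(\varphi)}\,\d v_g\leq C+\frac{1}{16\pi}\int\norm{\nabla_g\varphi}_g^2\,\d v_g$; the presence of the singular divisor shifts the critical exponent from $1$ down to $\tdiv$, which is precisely the content of \cite[Lemmas 3.2, 3.3]{BRS}, so the bulk inequality (the second one) is essentially a direct quotation of that reference after passing to a uniform metric of type $I$ via uniformization. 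The main work is therefore to get the \emph{boundary} inequality (the first one), controlling $\int_{\partial\Sigma}e^{\frac12(\varphi+\Hdiv)}\,\d l_g$ by the \emph{bulk} Dirichlet energy $\frac{1}{4\pi}\int_\Sigma\norm{\nabla_g\varphi}_g^2\,\d v_g$.

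For the boundary estimate I would double the surface: let $\widehat\Sigma=\Sigma\cup_{\partial\Sigma}\overline\Sigma$ be the Schottky double, equipped with the metric $\widehat g$ obtained by reflecting $g$ (we may first conformally change $g$ to a uniform type $I$ metric, for which $\partial\Sigma$ is geodesic and the doubling is smooth, as used in the proof of Lemma~\ref{lemma:approx_green} following \cite[Lemma 3.3]{CH_construction}). Extend $\varphi$ to $\widehat\varphi$ on $\widehat\Sigma$ by even reflection and extend $\Hdiv$ by reflecting each insertion: a bulk insertion $z_k$ with weight $a_k$ becomes a pair of insertions of weight $a_k$, while a boundary insertion $s_l$ with weight $b_l/2$ becomes a single insertion on $\widehat\Sigma$ of weight $b_l$ (this is exactly why the divisor~\eqref{eq:divisor} carries the factor $\frac12$ on the $s_l$, and why the Trudinger constant $\tdiv$ is defined with $1+a_k$ rather than $1+b_l/2$). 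Then $\int_{\partial\Sigma}e^{\frac12(\varphi+\Hdiv)}\,\d l_g$ is a boundary integral of $e^{\frac12 \widehat\varphi+\frac12\widehat H}$ against a codimension-one set inside $\widehat\Sigma$; by the trace Moser--Trudinger inequality on the closed surface $(\widehat\Sigma,\widehat g)$ with the reflected divisor (again \cite[Lemma 3.2]{BRS}, which is stated in exactly this generality), one bounds it by $C+\frac{1}{16\pi}\int_{\widehat\Sigma}\norm{\nabla_{\widehat g}\widehat\varphi}_{\widehat g}^2\,\d v_{\widehat g}$, and since $\widehat\varphi$ is the even reflection the right-hand energy equals $\frac{1}{8\pi}\int_\Sigma\norm{\nabla_g\varphi}_g^2\,\d v_g$. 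Tracking the constants and the critical exponent (the factor $4\tau'$ on the left, coming from scaling $\varphi\mapsto 2\tau'\varphi$ inside the sharp inequality) yields precisely the first displayed inequality; the hypothesis $\tau'<\tdiv$ is what keeps us strictly below the sharp threshold so that the constant $C$ is finite.

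The step I expect to be the main obstacle is the bookkeeping at the punctures on $\partial\Sigma$: one must check that the reflected weights $b_l$ on $\widehat\Sigma$ still satisfy $1+b_l>\tau'$ whenever $\tau'<\tdiv$ (immediate from the definition of $\tdiv$ as a minimum over \emph{all} $1+a_k$, with the convention $a_{N+l}=b_l/2$ absorbed — note $\tdiv$ already uses $1+a_k$ for $k\le N+M$, so this is consistent), and that the local model $\norm{x}^{2a_k}e^{2w}\norm{dx}^2$ near a reflected boundary insertion genuinely has a single conical point of order $b_l$ rather than a corner, so that the closed-surface inequality applies verbatim. One also has to verify that the even extension $\widehat\varphi$ lies in $H^1(\widehat\Sigma,\widehat g)$ with controlled energy and mean zero (or handle the mean-zero normalization by a harmless additive constant, which only changes $C$), and that the conformal change from $g$ to the uniform type $I$ metric does not affect the inequalities beyond altering $C$ — this last point follows because both sides are conformally natural up to bounded multiplicative/additive corrections on a compact surface. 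None of these is deep, but getting the constants and exponents to line up with the stated form is where the care is needed; everything else is a citation of \cite{Troyanov, BRS} combined with the doubling construction already invoked for Lemma~\ref{lemma:approx_green}.
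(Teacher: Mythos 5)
The paper does not actually prove this proposition: both inequalities are imported verbatim from \cite[Lemmas 3.2 and 3.3]{BRS} (after reducing to a uniform metric of type $I$), so your handling of the bulk inequality agrees with the paper. The genuine gap is in your derivation of the boundary inequality. After doubling, $\int_{\partial\Sigma}e^{\frac12(\varphi+\Hdiv)}\d l_g$ is still a codimension-one integral along the fixed curve of the involution of $\widehat\Sigma$, and you propose to bound it by ``the trace Moser--Trudinger inequality on the closed surface $(\widehat\Sigma,\widehat g)$'', citing \cite[Lemma 3.2]{BRS} as if it were ``stated in exactly this generality''. It is not: a closed surface has no boundary, so there is no trace inequality on $\widehat\Sigma$ to quote, and the statement you invoke there is precisely the boundary inequality you set out to prove, merely relocated to the double. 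The doubling trick identifies the Neumann Green's function with the restriction of a closed-surface Green's function (which is how Lemma~\ref{lemma:approx_green} uses it), but it does not convert an exponential integral over a curve into a bulk integral, and the bulk Moser--Trudinger inequality on $\widehat\Sigma$ gives no control on $\int_{\partial\Sigma}e^{\frac12\widehat\varphi+\frac12\widehat H}\d l$. The missing analytic input is a genuine trace-type exponential inequality --- a Lebedev--Milin / Osgood--Phillips--Sarnak boundary estimate, i.e.\ the sharp embedding of $H^1(\Sigma)$ traces into the exponential class on $\partial\Sigma$, suitably weighted at the corners --- and that is exactly what \cite[Lemma 3.2]{BRS} supplies directly on the surface \emph{with} boundary, which is why the paper cites it rather than doubling. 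As written, your argument for the first inequality is circular.

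A secondary concern is the threshold bookkeeping you flag as ``the main obstacle'' and then dismiss as immediate. You correctly note (via the factor $2$ in the boundary asymptotics of Lemma~\ref{lemma:approx_green}) that a boundary insertion of weight $\tfrac12 b_l$ in $\Div$ reflects to a conical point of order $b_l$ on $\widehat\Sigma$, so any closed-surface inequality on the double is governed by $1+b_l$; but $\tdiv$ is built from $1+a_{N+l}=1+\tfrac12 b_l$, and for $b_l\in(-1,0)$ one has $1+b_l<1+\tfrac12 b_l$, so $\tau'<\tdiv$ does not place you below the threshold of the doubled divisor. Whether the correct critical exponent for the one-dimensional boundary integral at a corner is $1+b_l$ or $1+\tfrac12 b_l$ has to be extracted from the local concentration analysis (or from the precise statement in \cite{BRS}); it cannot be read off from the bulk inequality on the double, and it is not settled by the definition of $\tdiv$ alone.
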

	In the above statement $H^1(\Sigma,g)=W^{1,2}(\Sigma,g)$ is the standard Sobolev space while \\
	${\dot{H}^1(\Sigma,g)\coloneqq\left\{\varphi\in H^1(\Sigma,g),\text{ } m_g(\varphi)=0\right\}}$. Thanks to these inequalities we have the following:
	\begin{lemma}\label{lemma:cont}
		Take $p>1$. Then for $\Lambda$ in $L^p(\Sigma,e^{\varphi+\Hdiv}g)$ and $\sigma$ in $L^p(\partial\Sigma, e^{\varphi+\Hdiv}g)$ the map
		\[
		\varphi\mapsto \int_\Sigma \Lambda e^{\varphi+\Hdiv}\d v_{g}+\int_{\partial\Sigma}4\sigma e^{\frac12\left(\varphi+\Hdiv\right)}\d l_{g}
		\]
		is weakly continuous over $H^1(\Sigma,g)$.
	\end{lemma}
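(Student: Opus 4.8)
The plan is to prove sequential weak continuity directly: I take a sequence $(\varphi_n)$ converging weakly to $\varphi$ in $H^1(\Sigma,g)$ and show the functional evaluated at $\varphi_n$ converges to its value at $\varphi$, the scheme being \emph{a.e.\ convergence of the integrands along subsequences, plus uniform integrability}. Since a weakly convergent sequence is bounded, $\sup_n\big(\|\nabla_g\varphi_n\|_{L^2(\Sigma,g)}^2+|m_g(\varphi_n)|\big)<\infty$. By the Rellich--Kondrachov theorem and compactness of the trace embedding $H^1(\Sigma,g)\hookrightarrow L^2(\partial\Sigma,g)$, one gets $\varphi_n\to\varphi$ strongly in $L^2(\Sigma,g)$ and in $L^2(\partial\Sigma,g)$; hence from any subsequence I extract a further subsequence along which $\varphi_n\to\varphi$ a.e.\ on $\Sigma$ and on $\partial\Sigma$, so that $\Lambda e^{\varphi_n+\Hdiv}\to\Lambda e^{\varphi+\Hdiv}$ a.e.\ and $\sigma e^{\frac12(\varphi_n+\Hdiv)}\to\sigma e^{\frac12(\varphi+\Hdiv)}$ a.e. By Vitali's convergence theorem and the standard subsequence argument (the limit $\varphi$ being fixed), it then suffices to establish uniform integrability of $\{\Lambda e^{\varphi_n+\Hdiv}\}_n$ with respect to $\d v_g$ and of $\{\sigma e^{\frac12(\varphi_n+\Hdiv)}\}_n$ with respect to $\d l_g$.

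For the bulk term I decompose $\varphi_n=m_g(\varphi_n)+\dot\varphi_n$ with $\dot\varphi_n\in\dot H^1(\Sigma,g)$, fix $\tau'<\tdiv$ and set $p'=\tfrac{p}{p-1}$. Applying the second inequality of Proposition~\ref{prop:moser_trudinger} to the mean-zero function $p'\dot\varphi_n$ gives
\[
\int_\Sigma e^{p'\dot\varphi_n+\Hdiv}\,\d v_g\leq \exp\!\left(\tfrac{1}{8\tau'}\Big(C+\tfrac{(p')^2}{4\pi}\|\nabla_g\varphi_n\|_{L^2(\Sigma,g)}^2\Big)\right),
\]
bounded uniformly in $n$ by the uniform Dirichlet energy bound. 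For any measurable $A\subset\Sigma$, Hölder's inequality with the splitting $|\Lambda|e^{\varphi_n+\Hdiv}=\big(|\Lambda|e^{\Hdiv/p}\big)\cdot\big(e^{\dot\varphi_n+\Hdiv/p'}\big)$ then yields
\[
\int_A|\Lambda| e^{\varphi_n+\Hdiv}\,\d v_g\leq e^{m_g(\varphi_n)}\left(\int_A|\Lambda|^p e^{\Hdiv}\,\d v_g\right)^{1/p}\left(\int_\Sigma e^{p'\dot\varphi_n+\Hdiv}\,\d v_g\right)^{1/p'}.
\]
The last factor is uniformly bounded, $e^{m_g(\varphi_n)}$ is uniformly bounded, and $\int_A|\Lambda|^p e^{\Hdiv}\,\d v_g\to0$ as $v_g(A)\to0$ by the integrability hypothesis on $\Lambda$; hence uniform integrability. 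The boundary term is handled in exactly the same way: I apply the first inequality of Proposition~\ref{prop:moser_trudinger} to $p'\dot\varphi_n$ to bound $\int_{\partial\Sigma}e^{\frac12(p'\dot\varphi_n+\Hdiv)}\,\d l_g$ uniformly, use the Hölder splitting $|\sigma| e^{\frac12(\varphi_n+\Hdiv)}=\big(|\sigma| e^{\frac1{2p}\Hdiv}\big)\cdot\big(e^{\frac12\varphi_n+\frac1{2p'}\Hdiv}\big)$, and invoke the integrability of $|\sigma|^p e^{\frac12\Hdiv}$ against $\d l_g$ (recall $e^{\frac12\Hdiv}\in L^p_{\mathrm{loc}}(\partial\Sigma,g)$). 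Vitali's theorem on $\Sigma$ and on $\partial\Sigma$ then closes the argument.

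The only step that is not purely routine is the uniform exponential-moment bound, i.e.\ the right to feed the \emph{amplified} functions $p'\dot\varphi_n$ into the Moser--Trudinger inequalities of Proposition~\ref{prop:moser_trudinger}. This is legitimate precisely because the parameter $\tau'<\tdiv$ is free and the factor $p'$ enters only through the Dirichlet energy $\|\nabla_g\varphi_n\|_{L^2(\Sigma,g)}^2$, which stays bounded along a weakly convergent sequence; everything else is the standard "almost everywhere convergence plus uniform integrability implies convergence of the integrals" mechanism. (Note that weak continuity here is the decisive contrast with the Dirichlet term $\int_\Sigma\|\nabla_g\varphi\|_g^2\,\d v_g$, which is only weakly lower semicontinuous — so combining the two gives weak lower semicontinuity of the full action, the property actually used for the variational problem.)
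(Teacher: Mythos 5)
Your proof is correct and follows essentially the same route as the paper: both arguments hinge on the Moser--Trudinger inequalities of Proposition~\ref{prop:moser_trudinger} to control exponential moments uniformly along the weakly convergent sequence, and then conclude by H\"older duality against $\Lambda$ and $\sigma$. The paper compresses this into the one-line assertion that $e^{\varphi_n}\to e^{\varphi}$ strongly in $L^q$ followed by continuity of the $L^p\times L^q$ pairing, whereas you explicitly unpack the underlying mechanism (Rellich a.e.\ convergence plus uniform integrability via the amplified functions $p'\dot\varphi_n$, then Vitali), which is a legitimate and more detailed rendering of the same argument.
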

	\begin{proof}
		Let $\left(\varphi_n\right)_{n\in\N}$ converge weakly towards $\varphi$ in $H^1(\Sigma,g)$. Then using Proposition~\ref{prop:moser_trudinger}, for any $q<\infty$ the sequence in $L^q(\Sigma,g)$ given by $\left(e^{\varphi_n}\right)_{n\in\N}$ converges to $e^{\varphi}$ for the $L^q(\Sigma,\gdiv)$ norm. The same applies on the boundary: $\left(e^{\frac12\varphi_n}\right)_{n\in\N}$ converges strongly to $e^{\frac12\varphi}$ in $L^q(\partial\Sigma,\gdiv)$. By continuity of the $L^p\times L^q$ pairing for $\frac1q+\frac1p=1$ we get the result.
	\end{proof}
	
	
	
	\subsection{Prescription of scalar and geodesic curvatures and the Liouville action}
	We now investigate a problem analogous to the uniformization of open Riemann surfaces in the presence of conical singularities. To this end let $\Div$ be a divisor as above, $\Lambda$ and $\sigma$ be continuous respectively on $\Sigma$ and on $\partial\Sigma\setminus\bm z$. 
	We further assume that both $\Lambda$ and $\sigma$ are non-negative and bounded, and that $\Lambda\not\equiv 0$. We are interested in the following question:
	\begin{prob}\label{prob:class}
		Let $(\Sigma,g)$ be a smooth Riemannian surface with boundary. Find a Riemannian metric $\gdiv=e^{\phidiv}g$ on $\Sigma$ representing $\Div$ and with:
		\begin{itemize}
			\item $\gdiv=e^{\varphi+H_{\bm z,\bm a}}g$ with $\varphi$ continuous over $\overline\Sigma$; 
			\item Gaussian curvature $-\frac12\Lambda$ inside $\Sigma$;
			\item geodesic curvature $-\sigma$ on $\partial\Sigma\setminus\bm z$.
		\end{itemize}
	\end{prob}
	
	\subsubsection{Existence and uniqueness of solutions to Problem~\ref{prob:class}}
	This problem can be reformulated in terms of a regular function $\varphi$ over $\overline\Sigma$. Namely without loss of generality take $g$ a uniform metric of type $I$ and consider $\Hdiv$ as above. We want to find a solution $\varphi$ of
	\begin{equation}\label{eq:curv_problem}
		\left\lbrace \begin{array}{ll}
			\Delta_g \varphi = R_{g} +\Lambda e^{\varphi+\Hdiv} & \text{in } \Sigma \\
			\partial_{\vec{n}_g}\varphi = -2\sigma e^{\frac12\left(\varphi+\Hdiv\right)} & \text{on }\partial\Sigma.
		\end{array} \right.
	\end{equation}
	Integrating the above because of the Gauss-Bonnet formula a necessary condition for this problem to have a solution is $\eul<0$. It is actually a sufficient one:
	\begin{theorem}\label{thm:prescribe}
		Assume that $\eul<0$. Then Problem~\ref{prob:class} admits a unique solution. Moreover if $\Lambda$ is constant while $\sigma$ is piecewise constant with possible discontinuities over $\bm z$ then $\varphi$ is smooth inside $\Sigma$ and over each connected component of $\partial\Sigma\setminus\bm z$.
	\end{theorem}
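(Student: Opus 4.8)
The plan is to obtain the conformal factor $\varphi$ as the unique minimiser of a (renormalised) Liouville action by the direct method of the calculus of variations, and then to upgrade its regularity by an elliptic bootstrap. By the uniformization of open Riemann surfaces~\cite{OPS88} I may take $g$ to be a uniform metric of type $I$, so that $R_g$ is constant and $k_g\equiv 0$; writing $\gdiv=e^{\varphi+\Hdiv}g$ with $\varphi\in H^1(\Sigma,g)$, Problem~\ref{prob:class} is equivalent to solving the boundary value problem~\eqref{eq:curv_problem}. Substituting $\phi=\varphi+\Hdiv$ into the Liouville action~\eqref{eq:action} and discarding the constants made infinite by $\Hdiv$ (the Green's function self-energy and the $2a_k\Hdiv(z_k)$ terms) produces the functional
\[
\mathcal J(\varphi)=\frac1{4\pi}\int_\Sigma\norm{\nabla_g\varphi}_g^2\,\d v_g+2\,\eul\,m_g(\varphi)+\frac1{2\pi}\int_\Sigma\Lambda e^{\varphi+\Hdiv}\,\d v_g+\frac2\pi\int_{\partial\Sigma}\sigma e^{\frac12(\varphi+\Hdiv)}\,\d l_g
\]
on $H^1(\Sigma,g)$ --- the linear coefficient being $2\,\eul$ rather than $2\chi(\Sigma)$ precisely because of this renormalisation --- and one checks that its Euler--Lagrange equation is exactly~\eqref{eq:curv_problem}. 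So it suffices to show that $\mathcal J$ has a unique minimiser: it will automatically be a weak solution, and eventually enjoy the regularity required by Problem~\ref{prob:class}, differentiation under the integral sign being licensed by the integrability afforded by Proposition~\ref{prop:moser_trudinger}.

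For existence I would run the direct method. The functional $\mathcal J$ is finite and weakly lower semicontinuous on $H^1(\Sigma,g)$: the Dirichlet energy is weakly lsc, $m_g$ is weakly continuous, and the two exponential terms are weakly continuous by Lemma~\ref{lemma:cont} (applicable since $\Lambda,\sigma$ are bounded and $e^{\Hdiv}\in L^p_{\text{loc}}(\Sigma,g)$, $e^{\frac12\Hdiv}\in L^p_{\text{loc}}(\partial\Sigma,g)$ for some $p>1$ by Lemma~\ref{lemma:approx_green}). For coercivity, write $\varphi=\varphi_0+c$ with $c=m_g(\varphi)$ and $\varphi_0\in\dot H^1(\Sigma,g)$; the bulk exponential term equals $\tfrac{e^c}{2\pi}\int_\Sigma\Lambda e^{\varphi_0+\Hdiv}\,\d v_g$, so dropping the nonnegative boundary term and minimising over $c$ --- using $\eul<0$, which makes $c\mapsto 2\,\eul\,c+\tfrac{e^c}{2\pi}A$ coercive for every fixed $A>0$ --- yields
\[
\mathcal J(\varphi)\;\geq\;\frac1{4\pi}\int_\Sigma\norm{\nabla_g\varphi_0}_g^2\,\d v_g\;-\;2\,\eul\,\ln\!\Big(\int_\Sigma\Lambda e^{\varphi_0+\Hdiv}\,\d v_g\Big)\;+\;C.
\]
Since $\Lambda$ is continuous, nonnegative and $\not\equiv0$ it is bounded below by a positive constant on some open set lying away from $\bm z$ (where $e^{\Hdiv}$ is in turn bounded below), so by Jensen's inequality and the Poincaré inequality on $\dot H^1(\Sigma,g)$ the second term on the right is $\geq -C'-C''\big(\int_\Sigma\norm{\nabla_g\varphi_0}_g^2\,\d v_g\big)^{1/2}$, a loss that is linear and hence dominated by the quadratic Dirichlet energy; consequently $\mathcal J(\varphi)\to+\infty$ whenever $\int_\Sigma\norm{\nabla_g\varphi_0}_g^2\,\d v_g\to\infty$. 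For $\varphi_0$ bounded in $\dot H^1(\Sigma,g)$, $\int_\Sigma\Lambda e^{\varphi_0+\Hdiv}\,\d v_g$ stays between two positive constants (the upper bound being Proposition~\ref{prop:moser_trudinger} again), so the $c$-dependent part of $\mathcal J$ forces $\mathcal J(\varphi)\to+\infty$ as $|c|\to\infty$. Hence $\mathcal J$ is coercive on $H^1(\Sigma,g)$; a minimising sequence is bounded and a weak limit is a minimiser, which solves~\eqref{eq:curv_problem} weakly.

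Uniqueness follows from strict convexity of $\mathcal J$: for $\psi\in H^1(\Sigma,g)\setminus\{0\}$,
\[
\frac{\d^2}{\d t^2}\Big|_{t=0}\mathcal J(\varphi+t\psi)=\frac1{2\pi}\int_\Sigma\norm{\nabla_g\psi}_g^2\,\d v_g+\frac1{2\pi}\int_\Sigma\Lambda\psi^2 e^{\varphi+\Hdiv}\,\d v_g+\frac1{2\pi}\int_{\partial\Sigma}\sigma\psi^2 e^{\frac12(\varphi+\Hdiv)}\,\d l_g>0,
\]
the strict inequality holding because a non-constant $\psi$ makes the first term positive while a non-zero constant $\psi$ makes the second positive ($\Lambda\geq0$, $\Lambda\not\equiv0$), and a strictly convex functional has at most one critical point. (Alternatively, for two solutions $\varphi_1,\varphi_2$, testing the difference of the equations against $u=\varphi_1-\varphi_2$, the monotonicity of $\exp$ and the signs $\Lambda,\sigma\geq0$ force $\nabla_g u\equiv0$, whence $\Lambda\not\equiv0$ gives $u\equiv0$.) For the regularity statement, with $\Lambda$ constant and $\sigma$ piecewise constant, I would bootstrap: by Proposition~\ref{prop:moser_trudinger}, $e^\varphi\in L^q(\Sigma,g)$ and $e^{\varphi/2}\in L^q(\partial\Sigma,g)$ for every $q<\infty$, so $\Lambda e^{\varphi+\Hdiv}\in L^p_{\text{loc}}$ near every point of $\overline\Sigma$ for some $p>1$ and $\varphi\in W^{2,p}_{\text{loc}}(\overline\Sigma)\hookrightarrow C^0(\overline\Sigma)$ (in particular $\varphi$ is continuous over $\overline\Sigma$, as required); away from $\bm z$ the right-hand side $\Lambda e^{\varphi+\Hdiv}$ is then as smooth as $\varphi$, so the interior elliptic iteration yields $\varphi\in C^\infty(\Sigma\setminus\bm z)$, while on each connected component of $\partial\Sigma\setminus\bm z$ on which $\sigma$ is constant the oblique condition $\partial_{\vec n_g}\varphi=-2\sigma e^{\frac12(\varphi+\Hdiv)}$ has (once $\varphi\in C^0$) smooth data, so $L^p$/Schauder estimates up to the boundary for oblique-derivative problems bootstrap $\varphi$ to be smooth there.

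The delicate point is the interplay between the sharp Moser--Trudinger inequalities of Proposition~\ref{prop:moser_trudinger} and the hypothesis $\eul<0$: one must make sure the exponential nonlinearities are finite and weakly continuous on $H^1(\Sigma,g)$ --- whence the need for exponents $\tau'<\tdiv$, tied to the integrability of $e^{\Hdiv}$ near the conical singularities and corners --- and check that $\eul<0$ is precisely the condition making $\mathcal J$ coercive. The other point requiring care is the elliptic regularity up to the boundary for the oblique-derivative problem; everything else is the standard direct-method and bootstrap machinery, following~\cite{Troyanov, BRS}.
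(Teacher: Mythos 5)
Your proof is correct and follows essentially the same route as the paper: minimisation of the renormalised functional $I_{\bm z,\bm a}$ over $H^1(\Sigma,g)$, with finiteness and weak lower semicontinuity supplied by Proposition~\ref{prop:moser_trudinger} and Lemma~\ref{lemma:cont}, coercivity from $\eul<0$, uniqueness from strict convexity, and an elliptic ($L^p$ then Schauder/H\"older) bootstrap for the regularity statement. The only difference is one of detail: you spell out the coercivity estimate (optimising over the mean value $c$, then Jensen and Poincar\'e on $\dot H^1$) and offer an alternative monotonicity argument for uniqueness, whereas the paper simply asserts coercivity and strict convexity.
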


	More generally as discussed in the introduction one can remove the assumption that the curvatures are negative, or relax the hypothesis that the singular Euler characteristic is negative (see e.g.~\cite{LSMR, BRS}). We however consider here only the setting where $\eul<0$ for which elementary tools allow to show existence and uniqueness of solutions for Problem~\ref{prob:class}. 
	
	\begin{proof}
		To be more specific and motivated by Liouville CFT we use a variational approach based on the Liouville action~\eqref{eq:action} and the definition of the correlation functions. Indeed they formally lead to the consideration, for suitable $\phi$, of
		\begin{eqs}\label{eq:action_za}
			S(\phi;\Div) \lq\lq =" &\frac1{4\pi}\int_\Sigma\left(\norm{\nabla_g \phi}_g^2+2R_g\phi + 2\Lambda e^{\phi}\right)\d v_g+\frac1{2\pi}\int_{\partial\Sigma}\left(2k_g\phi + 4\sigma e^{\phi/2}\right)\d l_g\\
			&+2\sum_{k=1}^Na_k\phi(z_k)+\sum_{l=1}^Mb_l\phi(z_l).
		\end{eqs}
		Due to the singular behaviour of the field $\phi$ some care is required to make sense of the latter: we explain how to address this issue in the following Subsection. For the moment let us assume without loss of generality that $g$ is a uniform metric of type $1$ and write $\phi=\varphi+\Hdiv$ with $\varphi$ in $H^{1}(\Sigma,g)$. We are thus interested in the functional over $H^{1}(\Sigma,g)$
		\begin{equation}
			I_{\bm z,\bm \alpha}(\varphi)=\frac1{4\pi}\int_\Sigma\left(\norm{\nabla_g \varphi}_g^2+ 2\Lambda e^{\varphi+\Hdiv}\right)\d v_g+\frac2{\pi}\int_{\partial\Sigma}\sigma e^{\frac12\left(\varphi+\Hdiv\right)}\d l_g+2\eul c
		\end{equation}
		with $c=m_g(\varphi)$.
		Using the Moser-Trudinger inequality from Proposition~\ref{prop:moser_trudinger} this action is indeed well-defined for $\varphi$ in $H^{1}(\Sigma,g)$. Moreover because of the assumptions that $\Lambda$ and $\sigma$ are non-negative with $\Lambda>0$ on some (non-empty) open subset, the condition that $\eul<0$ implies that $I_{\bm z,\bm\alpha}$ is coercive. Moreover it is seen to be weakly lower semicontinuous thanks to Lemma~\ref{lemma:cont} above, as well as strictly convex. As a consequence $I_{\bm z,\bm\alpha}$ admits a unique critical point $\varphi_{\bm z,\bm a}$ in $H^1(\Sigma,g)$ which is thus a minimum.
		
		Now $\varphi_{\bm z,\bm a}$ being a critical point of $I_{\bm z,\bm\alpha}$ amounts to being a weak solution of Equation~\eqref{eq:curv_problem}. As a consequence we have shown existence and uniqueness of a weak solution for our problem. Continuity of $\varphi_{\bm z,\bm a}$ follows from $L^p$ regularity for the Laplacian (our boundary is smooth and $\Lambda e^{\varphi_{\bm z,\bm a}+\Hdiv}$, $\sigma e^{\frac12\left(\varphi_{\bm z,\bm a}+\Hdiv\right)}$ meet the requirement of~\cite[Theorem 2.5.1.1]{Gris}), from which we can deduce smoothness by recursive application of H\"older regularity~\cite[Theorem 6.3.1.4]{Gris} if in addition $\Lambda$ and $\sigma$ are smooth. This concludes the proof of Theorem~\ref{thm:prescribe}.
	\end{proof}

	\begin{remark}
		In analogy with the definition of the unit boundary length quantum disk we can further fix the value of $c$ and end up with the functional valued over $\dot{H}^{1}(\Sigma,g)$:
		\[
		J(\varphi)\coloneqq \frac1{4\pi}\int_\Sigma\norm{\nabla_g \varphi}_g^2\d v_g+\frac{\frac1{2\pi}\int_\Sigma \Lambda e^{\varphi+\Hdiv}\d v_{g}}{\left(\frac2{\pi}\int_{\partial\Sigma}\sigma e^{\frac12\left(\varphi+\Hdiv\right)}\d l_{g}\right)^2}-2\eul\ln \frac2{\pi}\int_{\partial\Sigma}\sigma e^{\frac12\left(\varphi+\Hdiv\right)}\d l_{g}.
		\]
	\end{remark}
	In the rest we will assume that $\Lambda>0$ and $\sigma\geq 0$ are respectively constant and piecewise constant with possible discontinuities over $\bm z$ and study the associated Liouville action.
	
	\subsubsection{The classical Liouville action}
	Let us denote by $\varphi_{\bm z,\bm a}$ the solution of Problem~\ref{prob:class} given by Theorem~\ref{thm:prescribe} and set $\phidiv=\varphi_{\bm z,\bm a}+\Hdiv$: we give here a meaning to $S_{\bm z,\bm \alpha}(\phidiv)$, where without loss of generality we assume that $g$ is uniform of type $I$. 
	For this we will use a limiting procedure by first regularizing $\phidiv$ by averaging it over geodesic (semi-) circles:
	\begin{equation}
		\phi_\rho(x)\coloneqq \frac1{l_g(\partial B(x,\rho))}\int_{\partial B(x,\rho)}\phidiv\d l_g.
	\end{equation}
	
	We are then interested in the limit as $\rho,\eps$ and then $\delta\to0$ of $S_{\delta,\eps}(\phi_\rho)$ where $S_{\delta,\eps}(\phi_\rho)$ is defined as in Equation~\eqref{eq:action_za} with $\Sigma$ (resp. $\partial\Sigma$) being replaced by $\Sigma_{\delta,\eps}$ (resp. $\left(\partial\Sigma\right)_\eps$):
	\begin{eqs}\label{eq:action_reg}
		S_{\delta,\eps}(\phi)\coloneqq &\frac1{4\pi}\int_{\Sigma_{\delta,\eps}}\left(\norm{\nabla_g \phi}_g^2+2R_g\phi + 2\Lambda e^{\phi}\right)\d v_g+\frac1{2\pi}\int_{\left(\partial\Sigma\right)_\eps}\left(2k_g\phi + 4\sigma e^{\phi/2}\right)\d l_g\\
		&+2\sum_{k=1}^Na_k\phi(z_k)+\sum_{l=1}^Mb_l\phi(z_l).
	\end{eqs}
	\begin{proposition}\label{prop:def_action}
		The following limit exists and is finite as $\rho,\eps$ and then $\delta\to0$:
		\begin{equation}
			S(\phidiv;\Div)\coloneqq \lim\limits_{\delta,\eps,\rho\to0}S_{\delta,\eps}\left(\phi_\rho\right)-\left(2\sum_{k=1}^Na_k^2+\sum_{l=1}^Mb_l^2\right)\ln\rho.
		\end{equation}
		Moreover $S(\phidiv;\Div)=I_{\bm z,\bm a}(\vphidiv)+G(\bm z,\bm a)$ where, with $W$ the function from Lemma~\ref{lemma:approx_green},
		\begin{equation}
			G(\bm z,\bm a)\coloneqq -2\sum_{k\neq l}a_ka_lG_g(z_k,z_l)-2\sum_{k=1}^{N+M}a_k^2W(z_k)
		\end{equation}
	\end{proposition}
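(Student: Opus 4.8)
# Proof Proposal for Proposition~\ref{prop:def_action}

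\textbf{Overall strategy.} The plan is to split the regularized action $S_{\delta,\eps}(\phi_\rho)$ into pieces that converge individually, isolating the divergent contributions coming from the logarithmic singularities of $\Hdiv$ near the punctures. Write $\phidiv = \vphidiv + \Hdiv$ where $\vphidiv$ is continuous on $\overline\Sigma$. The key input is that near each $z_k$ one has $\Hdiv(x) = 2a_k\log d_g(x,z_k) + (\text{continuous})$ (bulk case) or the analogous expression with a factor accounting for the boundary normalization, by Lemma~\ref{lemma:approx_green}; hence $\phi_\rho(z_k)$, the average over a geodesic circle of radius $\rho$, behaves like $2a_k\log\rho + W(z_k) + \vphidiv(z_k) + o(1)$ (bulk) with the appropriate $\tfrac12 b_l$ normalization on the boundary. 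The first step is therefore to expand the pointwise terms $2\sum a_k\phi_\rho(z_k) + \sum b_l\phi_\rho(z_l)$ and extract exactly the divergence $\big(2\sum a_k^2 + \sum b_l^2\big)\log\rho$ that is subtracted in the statement, with the finite remainder producing the $-2\sum a_k^2 W(z_k)$ contribution to $G(\bm z,\bm a)$ together with a $\vphidiv(z_k)$-type term that will be reabsorbed into $I_{\bm z,\bm a}(\vphidiv)$.

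\textbf{The gradient term.} Next I would handle the Dirichlet energy $\frac1{4\pi}\int_{\Sigma_{\delta,\eps}}\|\nabla_g\phi_\rho\|_g^2\,\d v_g$. Since $\phidiv$ solves Problem~\ref{prob:class} weakly, integrating by parts on $\Sigma_{\delta,\eps}$ converts this into a bulk term $\frac1{4\pi}\int \phi_\rho(-\Delta_g\phi_\rho)\,\d v_g$ plus boundary terms over $\partial\Sigma_{\delta,\eps}$, which decomposes into the ``real'' boundary $(\partial\Sigma)_\eps$ and the small circles $\partial B_g(z_k,\eps)$. On each small circle, using $G_g(x,z_k) \sim -\log d_g$, the normal derivative of $\phi_\rho$ is $\sim -2a_k/\eps$ (bulk) and the circle has length $\sim 2\pi\eps$, producing a contribution $\sim 2a_k\phi_\rho(z_k)$ that must be matched against the pointwise insertion terms so that the $\log\rho$ and $\log\eps$ divergences cancel in the prescribed combination. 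The cross terms $\int \nabla_g\vphidiv\cdot\nabla_g\Hdiv$ and $\int \|\nabla_g\Hdiv\|^2$ are treated by the same integration-by-parts, the latter generating the $-2\sum_{k\neq l}a_k a_l G_g(z_k,z_l)$ term in $G(\bm z,\bm a)$ via $\int \Hdiv(-\Delta_g\Hdiv)\,\d v_g$ and the $2\pi(\delta_{z_k}-v_g(\Sigma)^{-1})$ structure of the Green's function equation.

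\textbf{The exponential and curvature terms.} The terms $\frac1{2\pi}\int_{\Sigma_{\delta,\eps}}\Lambda e^{\phi_\rho}\,\d v_g$ and $\frac1{2\pi}\int_{(\partial\Sigma)_\eps}2\sigma e^{\phi_\rho/2}\,\d l_g$ converge by dominated convergence: $e^{\Hdiv}\in L^p_{\text{loc}}$ and $e^{\frac12\Hdiv}\in L^p_{\text{loc}}(\partial\Sigma)$ for $p$ slightly larger than $1$ (the assumption $a_k, b_l > -1$), so first letting $\rho\to0$ with $\eps$ fixed and then $\eps\to0$ recovers the integrals over all of $\Sigma$ and $\partial\Sigma$; these match the corresponding terms in $I_{\bm z,\bm a}(\vphidiv)$ exactly. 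The linear curvature terms $\frac1{4\pi}\int 2R_g\phi_\rho\,\d v_g$ and $\frac1{2\pi}\int 2k_g\phi_\rho\,\d l_g$ are unproblematic since $\phidiv\in L^1$; after the $\rho\to0,\eps\to0$ limits they combine with the mean-value constant $c = m_g(\vphidiv)$ and Gauss--Bonnet to yield the $2\eul c$ term in $I_{\bm z,\bm a}$. Finally $\delta\to0$ is taken: here one checks that all boundary contributions from $\Sigma_\delta$'s artificial boundary $\{d_g(x,\partial\Sigma)=\delta\}$ vanish, which holds because $\vphidiv$ and its derivatives are well-behaved near the smooth part of $\partial\Sigma$ away from the punctures (using the smoothness from Theorem~\ref{thm:prescribe}), while near the boundary punctures the $\eps$-excision has already removed the singular neighborhoods.

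\textbf{Main obstacle.} The delicate point is the bookkeeping of the three nested limits $\rho\to0$, then $\eps\to0$, then $\delta\to0$, and in particular verifying that the divergences organize exactly so that the single explicit counterterm $\big(2\sum a_k^2 + \sum b_l^2\big)\log\rho$ suffices — that no residual $\log\eps$ or $\log\delta$ divergence survives. This requires careful uniform estimates on $\phi_\rho$ and $\partial_{n_g}\phi_\rho$ on the excision circles, of the type already used in the proof of Proposition~\ref{prop:gauss-bonnet} (the bounds $\eps\,\partial_n w(\xi)\to0$ uniformly, invoking~\cite[Lemma 3]{Troyanov} in the bulk and the local $L^1$ bound on the boundary). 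The boundary-puncture case, where the geodesic circles are half-circles and the normalization of $\Hdiv$ carries the factor $\tfrac12 b_l$ rather than $a_k$, needs separate but parallel treatment, and this asymmetry is the most error-prone part of the computation.
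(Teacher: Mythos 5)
Your overall strategy coincides with the paper's: decompose $\phi_\rho$ into the regularizations of $\vphidiv$ and of $\Hdiv$, integrate the singular gradient contributions by parts against the Green's-function equation, and use Lemma~\ref{lemma:approx_green} to extract the $\ln\rho$ divergence and identify $G(\bm z,\bm a)$, with the exponential and curvature terms handled by integrability of $e^{\Hdiv}$ and $e^{\frac12\Hdiv}$. So there is no difference of method.

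There is, however, a concrete error in your term-by-term accounting which, if carried out as written, would produce the wrong counterterm and the wrong $G(\bm z,\bm a)$. Since $\Hdiv=-2\sum_m a_m G_g(\cdot,z_m)$ and $G_\rho(z_k,z_k)=-\ln\rho+W(z_k)+o(1)$ in the bulk (with $-2\ln\rho$ on the boundary), the insertion terms $2\sum_k a_k\phi_\rho(z_k)+\sum_l b_l\phi_\rho(s_l)$ alone contribute $\bigl(4\sum_k a_k^2+2\sum_l b_l^2\bigr)\ln\rho$, together with $-4\sum_k a_k^2 W(z_k)$ and $-4\sum_{k\neq l}a_ka_lG_g(z_k,z_l)$ --- that is, \emph{twice} each of the corresponding quantities in the statement. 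The missing half, with opposite sign, comes from the self-interaction of the singular part of the Dirichlet energy: $\frac1{4\pi}\int\norm{\nabla_g H_\rho}_g^2\,\d v_g=2\sum_{k\neq l}a_ka_lG_g(z_k,z_l)+2\sum_k a_k^2 G_\rho(z_k,z_k)+o(1)$ (up to the $\eps$-boundary corrections), and you credit this integral only with the off-diagonal $G_g(z_k,z_l)$ terms, omitting its diagonal piece. Only the sum of the two contributions yields $(2\sum a_k^2+\sum b_l^2)\ln\rho$ and $-2\sum_k a_k^2W(z_k)-2\sum_{k\neq l}a_ka_lG_g(z_k,z_l)$. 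Relatedly, the $\vphidiv(z_k)$ terms are not ``reabsorbed into $I_{\bm z,\bm a}(\vphidiv)$'' --- that functional contains no pointwise evaluations of $\vphidiv$; they are cancelled by the cross term $\frac1{2\pi}\int\nabla_g\vphidiv\cdot\nabla_g\Hdiv\,\d v_g$, whose zero-mode contribution $2(\sum_k a_k)\,m_g(\vphidiv)$ is what, combined with the $R_g$ term, builds the $2\eul c$ appearing in $I_{\bm z,\bm a}$. These are precisely the cancellations you defer to your ``main obstacle'' paragraph, but since the attribution you actually state is quantitatively wrong, the proposal does not yet constitute a proof.
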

	\begin{proof}
		We write $\phi_\rho=\varphi_\rho+H_\rho$ with $\varphi_\rho=\varphi_{\bm z,\bm a}*\eta_\rho$ and $H_\rho=\Hdiv*\eta_\rho$. Then using Equation~\eqref{eq:Green} we have up to a term that vanishes in the $\rho,\eps,\delta\to0$ limit,
		\begin{align*}
			S_{\delta,\eps}(\phi_\rho)=I_{\bm z,\bm a}(\varphi_{\bm z,\bm a})+&\frac1{2\pi}\int_{\Sigma_{\delta,\eps}}\nabla_g \varphi_{\bm z,\bm a}\cdot\nabla_g \Hdiv\d v_g+2\sum_{k=1}^{N+M}a_k\varphi_{\bm z,\bm a}(z_k)\\
			+&\frac1{4\pi}\int_{\Sigma_{\delta,\eps}}\norm{\nabla_g H_\rho}_g^2\d v_g+2\sum_{k=1}^{N+M}a_kH_\rho(z_k)+o(1).
		\end{align*}
		We treat the second and third lines by using the fact that for $\psi$ smooth over $\Sigma_{\delta,\eps}$:
		\begin{align*}
			\int_{\Sigma_{\delta,\eps}}\nabla_g \psi\cdot\nabla_g \Hdiv\d v_g=\int_{\partial\left(\Sigma_{\delta,\eps}\right)}\psi\partial_{n_g} \Hdiv\d l_g=\sum_{z\in\bm z}\int_{\partial B(z,\eps)}\psi\partial_{n_g} \Hdiv\d l_g+o(1).
		\end{align*}
		This shows that the second line vanishes in the limit. As for the third line it is given by
		\begin{align*}
			-2\sum_{k\neq l}a_ka_l G_g(z_k,z_l)-2\sum_{k=1}^{N+M}a_k^2\left(2G_\rho(z_k,z_k)-\left(G_\rho\right)_\eps(z_k,z_k)\right)+o(1).
		\end{align*}
		The proof is concluded thanks Lemma~\ref{lemma:approx_green}. 
	\end{proof}
	We denote by $S_{\bm z,\bm a}\coloneqq S(\phidiv;\Div)$ the corresponding Liouville action defined by Proposition~\ref{prop:def_action}. Holomorphic derivatives of $S_{\bm z,\bm a}$ with respect to the punctures describe the accessory parameters. However defining them properly requires some care since such derivatives may actually be ill-defined; we will address this issue in Section~\ref{sec:accessory} thanks to the semi-classical limit of Liouville theory and based on a regularization of $\phidiv$ and $\Sigma$.

	
	
	\subsection{Regularization of the Liouville field and action}
	Before actually defining the derivatives of $\phidiv$ as well as the classical stress-energy tensor we describe here a limiting procedure for defining $\phidiv$ and $S_{\bm z,\bm a}$, key for the purpose of the next section. 
	
	\subsubsection{Regularization of $\phidiv$}\label{subsec:phireg}
	To start with we want to regularize the Liouville action and the Liouville field $\phidiv$. The way we do so is by changing the underlying surface $\Sigma$ in order to avoid the singular points. A way to achieve that is to consider for $\eps,\delta$ positive $\Sigma_{\delta,\eps}$ and $\left(\partial\Sigma\right)_\eps$ defined like before for punctures at $\bm z$ and let $\phireg$ be the (unique) solution of Problem~\ref{prob:class} with the curvatures given by $\Lambda_{\delta,\eps}$ and $\sigma_{\eps}$ where the latter are smooth, non-negative and such that:
	\begin{align*}
		\Lambda_{\delta,\eps}&\equiv\Lambda\text{ inside }\Sigma_{\delta,\eps}\qt{and}\Lambda_{\delta,\eps}(x)\equiv0\text{ if }\min_{z\in\bm z}d_g(x,z)<\eps-\eps^{10}\text{ or }d_g(x,\partial\Sigma)<\delta-\delta^{10},\\
		\sigma_{\eps}&\equiv\sigma\text{ on }\left(\partial\Sigma\right)_\eps\qt{and}\sigma_{\eps}(x)\equiv0\text{ if }\min_{z\in\bm z} d_g(x,z)<\eps-\eps^{10}.
	\end{align*}
	Then $\vphireg\coloneqq\phireg-\Hdiv$ is smooth over $\overline\Sigma$, and in particular near the punctures. 
	We can also approximate $S_{\bm z,\bm a}$ by considering the Liouville action associated to the curvatures $\Lambda_{\delta,\eps}$, $\sigma_\eps$ and evaluated at the field $\phireg$. That is to say we study the $\delta,\eps\to0$ limit of
	\begin{equation}
		S_{\delta,\eps}\coloneqq \lim\limits_{\rho\to0} S_{\delta,\eps}\left((\phireg)_\rho\right)-\left(2\sum_{k=1}^Na_k^2+\sum_{l=1}^Mb_l^2\right)\ln\rho.
	\end{equation}
	
	\subsubsection{Approximation of $\phidiv$ and $S_{\bm z,\bm a}$}
	Before moving on we show that the above indeed provides good approximations of the field $\phidiv$ and of the Liouville action $S_{\bm z,\bm a}$.
	\begin{lemma}\label{lemma:approx}
		In the limit where $\eps$ and then $\delta$ go to $0$, $S_{\bm z,\bm a}=\lim\limits_{\delta,\eps\to0} S_{\delta,\eps}$ and $\vphireg$ converges in $H^{1}(\Sigma,g)$ to $\vphidiv$. In particular for $F$ (resp. $G$) in $L^q(\Sigma,g)$ (resp. $L^{q}(\partial\Sigma,g)$) with $q>\frac{p*}{p*-1}$
		\begin{equation}
			\lim\limits_{\delta,\eps\to0}\int_{\Sigma} F e^{\phireg}\d v_g+\int_{\partial\Sigma}Ge^{\frac12\phireg}\d l_g=	\int_{\Sigma} F e^{\phidiv}\d v_g+\int_{\partial\Sigma}Ge^{\frac12\phidiv}\d l_g.	
		\end{equation}
	\end{lemma}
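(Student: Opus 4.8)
The plan is to run a variational ($\Gamma$-convergence type) argument built on the characterisation of $\vphidiv$ and $\vphireg$ as minimisers. Taking $g$ uniform of type $I$ as in the proof of Theorem~\ref{thm:prescribe}, write
\[
I_{\delta,\eps}(\varphi)\coloneqq\frac1{4\pi}\int_\Sigma\left(\norm{\nabla_g\varphi}_g^2+2\Lambda_{\delta,\eps}e^{\varphi+\Hdiv}\right)\d v_g+\frac2\pi\int_{\partial\Sigma}\sigma_\eps e^{\frac12(\varphi+\Hdiv)}\d l_g+2\eul\,m_g(\varphi),
\]
so that $\vphireg$ is the unique minimiser of $I_{\delta,\eps}$ over $H^1(\Sigma,g)$ (strict convexity), while $\vphidiv$ minimises the functional $I_{\bm z,\bm a}$ obtained on replacing $\Lambda_{\delta,\eps},\sigma_\eps$ by $\Lambda,\sigma$. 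We may build the regularisations so that $0\leq\Lambda_{\delta,\eps}\leq\Lambda$, $0\leq\sigma_\eps\leq\norm{\sigma}_\infty$, with $\Lambda_{\delta,\eps}\to\Lambda$, $\sigma_\eps\to\sigma$ pointwise off the (measure-zero) set $\bm z\cup\partial\Sigma$. First I would obtain a uniform $H^1$-bound: testing minimality against $\vphidiv$ gives $I_{\delta,\eps}(\vphireg)\leq I_{\delta,\eps}(\vphidiv)$, and since $e^{\phidiv}\in L^1(\Sigma,g)$ and $e^{\frac12\phidiv}\in L^1(\partial\Sigma,g)$ (the singularities being integrable as $a_k,b_l>-1$ and $\overline\Sigma$ is compact) dominated convergence yields $I_{\delta,\eps}(\vphidiv)\to I_{\bm z,\bm a}(\vphidiv)$, so $I_{\delta,\eps}(\vphireg)$ stays bounded; on the other hand, once $\delta,\eps$ are small $\Lambda_{\delta,\eps}\equiv\Lambda>0$ on a fixed open set $U$ with $\overline U\subset\Sigma$, whence $\int_\Sigma\Lambda_{\delta,\eps}e^{\varphi+\Hdiv}\d v_g\geq\int_U\Lambda e^{\varphi+\Hdiv}\d v_g$ and the coercivity argument of Theorem~\ref{thm:prescribe} (Proposition~\ref{prop:moser_trudinger} together with $\eul<0$) applies with constants independent of $\delta,\eps$. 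Hence $(\vphireg)$ is bounded in $H^1(\Sigma,g)$.

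Next I would identify the limit. Along an arbitrary subsequence extract $\vphireg\rightharpoonup\psi$ in $H^1(\Sigma,g)$; by the compact embedding $H^1\hookrightarrow L^r$ ($r<\infty$), on $\Sigma$ and on $\partial\Sigma$, the convergence is strong in every $L^r$, and the Moser--Trudinger argument of Lemma~\ref{lemma:cont} gives $e^{\vphireg}\to e^\psi$ in $L^r(\Sigma,g)$ and $e^{\frac12\vphireg}\to e^{\frac12\psi}$ in $L^r(\partial\Sigma,g)$ for every $r<\infty$. Combining this with $\Lambda_{\delta,\eps}e^{\Hdiv}\to\Lambda e^{\Hdiv}$ in $L^p(\Sigma,g)$ and $\sigma_\eps e^{\frac12\Hdiv}\to\sigma e^{\frac12\Hdiv}$ in $L^p(\partial\Sigma,g)$ for some $p\in(1,p*)$ (dominated convergence, using that $e^{\Hdiv}\in L^p(\Sigma,g)$ and $e^{\frac12\Hdiv}\in L^p(\partial\Sigma,g)$ globally since $\overline\Sigma$ is compact), and with $m_g(\vphireg)\to m_g(\psi)$, the curvature and mean terms of $I_{\delta,\eps}(\vphireg)$ converge to those of $I_{\bm z,\bm a}(\psi)$. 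Weak lower semicontinuity of the Dirichlet energy then yields
\[
I_{\bm z,\bm a}(\psi)\leq\liminf I_{\delta,\eps}(\vphireg)\leq\limsup I_{\delta,\eps}(\vphireg)\leq\lim I_{\delta,\eps}(\vphidiv)=I_{\bm z,\bm a}(\vphidiv),
\]
so $\psi$ minimises $I_{\bm z,\bm a}$, hence $\psi=\vphidiv$ by uniqueness; the subsequence being arbitrary, $\vphireg\rightharpoonup\vphidiv$ in $H^1(\Sigma,g)$. The displayed chain moreover forces $I_{\delta,\eps}(\vphireg)\to I_{\bm z,\bm a}(\vphidiv)$, and since the lower-order terms converge this yields $\norm{\nabla_g\vphireg}_{L^2}\to\norm{\nabla_g\vphidiv}_{L^2}$; together with the weak convergence $\nabla_g\vphireg\rightharpoonup\nabla_g\vphidiv$ in the Hilbert space $L^2$ this upgrades to strong convergence, namely $\vphireg\to\vphidiv$ in $H^1(\Sigma,g)$.

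It remains to deduce the two assertions. Re-running the computation of Proposition~\ref{prop:def_action} with $(\phireg,\Lambda_{\delta,\eps},\sigma_\eps)$ in place of $(\phidiv,\Lambda,\sigma)$ — the Green's-function self-energy produced there depends only on $\Hdiv$ and converges to $G(\bm z,\bm a)$, the passage from $\Sigma_{\delta,\eps}$ to $\Sigma$ costs only $\tfrac1{4\pi}\int_{\Sigma\setminus\Sigma_{\delta,\eps}}\norm{\nabla_g\vphireg}_g^2\to0$ (Vitali, using strong $H^1$ convergence and $v_g(\Sigma\setminus\Sigma_{\delta,\eps})\to0$), and the curvature contributions over the shrinking transition regions vanish by Hölder (the uniform Moser--Trudinger bound on $e^{\vphireg}$ against the $L^p$-small factors $e^{\Hdiv}\ind_{\text{transition}}$) — expresses $S_{\delta,\eps}$ through $\vphireg$ and $(\Lambda_{\delta,\eps},\sigma_\eps)$ so that, by the convergences established above, $S_{\delta,\eps}\to I_{\bm z,\bm a}(\vphidiv)+G(\bm z,\bm a)=S_{\bm z,\bm a}$ (Proposition~\ref{prop:def_action}). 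For the last display, write $e^{\phireg}=e^{\vphireg}e^{\Hdiv}$: given $F\in L^q(\Sigma,g)$ with $q>\frac{p*}{p*-1}$, pick $p<p*$ close enough to $p*$ that $\frac1p+\frac1q<1$, so that $Fe^{\Hdiv}\in L^s(\Sigma,g)$ with $\frac1s=\frac1p+\frac1q\in(0,1)$; since $e^{\vphireg}\to e^{\vphidiv}$ in $L^{s'}(\Sigma,g)$ by the Moser--Trudinger argument, Hölder gives $\int_\Sigma Fe^{\phireg}\d v_g\to\int_\Sigma Fe^{\phidiv}\d v_g$, and the boundary integral is handled identically with $e^{\frac12\Hdiv}\in L^p(\partial\Sigma,g)$.

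The main obstacle is to make the first two steps uniform in the regularisation. The $H^1$-bound survives only because $\Lambda_{\delta,\eps}$ stays bounded below on a \emph{fixed} open set, so that the Moser--Trudinger coercivity does not degenerate; and the passage to the limit in the exponential nonlinearities is delicate because the $L^q$-convergence of $e^{\vphireg}$ furnished by Moser--Trudinger is only \emph{locally} uniform near the punctures, so it must be balanced against the local integrability threshold $p*$ of $e^{\Hdiv}$ — which is exactly where the hypothesis $q>\frac{p*}{p*-1}$ enters.
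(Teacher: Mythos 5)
Your proof is correct and follows the same overall variational strategy as the paper: both arguments exploit that $\vphireg$ and $\vphidiv$ minimise the regularised and limiting functionals respectively, test each against the other, use the $L^p$-convergence of $\Lambda_{\delta,\eps},\sigma_\eps$ together with the Moser--Trudinger bounds of Proposition~\ref{prop:moser_trudinger} to pass to the limit in the curvature terms, and conclude the last display by the H\"older pairing of $Fe^{\Hdiv}\in L^s$ against $e^{\vphireg}\to e^{\vphidiv}$ in $L^{s'}$, exactly as in Lemma~\ref{lemma:cont}. The one place where you genuinely diverge is the strong $H^1$ convergence. The paper uses the exact second-order expansion
\[
I_{\bm z,\bm a}(\vphidiv+u)=I_{\bm z,\bm a}(\vphidiv)+\tfrac{1}{4\pi}\nnorm{\nabla_g u}_{L^2}^2+\tfrac{1}{2\pi}\int_\Sigma(e^u-1-u)\Lambda e^{\phidiv}\d v_g+\tfrac{2}{\pi}\int_{\partial\Sigma}(e^{u/2}-1-\tfrac u2)\sigma e^{\frac12\phidiv}\d l_g,
\]
whose last three terms are nonnegative, so that convergence of the action values \emph{directly} forces $\nnorm{\nabla_g(\vphireg-\vphidiv)}_{L^2}\to0$ and even gives a quantitative bound of the $H^1$ distance by the action gap. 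You instead run the direct method: extract a weak subsequential limit, identify it as the unique minimiser by lower semicontinuity, and upgrade weak to strong convergence via convergence of the Dirichlet norms. Both routes are valid; yours is more generic and does not rely on the specific convexity identity, while the paper's is shorter once the identity is written down and yields the rate for free. Two small points worth making explicit in your write-up: the compact trace embedding $H^1(\Sigma,g)\hookrightarrow L^r(\partial\Sigma,g)$ is what lets you treat the boundary term in the identification step, and the upgrade to strong $H^1$ convergence should record that the $L^2$ convergence of $\vphireg$ itself (not just of the gradients) already follows from the compact embedding.
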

	\begin{proof}
		Since $\vphireg$ minimizes the (coercive) Liouville action $\nnorm{\vphireg}_{H^1}$ is uniformly bounded in $\delta,\eps$ small enough. Moreover $\Lambda_{\delta,\eps}$ converges to $\Lambda$ in $L^p(\Sigma,g)$ for any $p<\infty$ (and likewise for $\sigma_\eps$): as a consequence thanks to Proposition~\ref{prop:moser_trudinger} for any $\rho>0$ we have $S_{\bm z,\bm a}(\phireg)\leq S_{\delta,\eps}(\phireg)+\rho$ for $\delta,\eps$ small enough. Since $\phidiv$ minimizes the Liouville action we get $S_{\bm z,\bm a}\leq \liminf\limits_{\delta,\eps\to0} S_{\delta,\eps}$. The same argument shows that $S_{\bm z,\bm a}\geq \limsup\limits_{\delta,\eps\to0} S_{\delta,\eps}$: we conclude that $S_{\bm z,\bm a}= \lim\limits_{\delta,\eps\to0} S_{\delta,\eps}$.
		Now 
		\begin{align*}
			I_{\bm z,\bm a}(\vphidiv+u)&=I_{\bm z,\bm a}(\vphidiv)+\frac{1}{4\pi}\int_\Sigma\norm{\nabla_gu}_g^2\d v_g\\
			&+\frac{1}{2\pi}\int_\Sigma\left(e^u-1-u\right)\Lambda e^{\phidiv}\d v_g+\frac{2}{\pi}\int_{\partial\Sigma}\left(e^{\frac12u}-1-\frac12u\right)\sigma e^{\frac12\phidiv}\d l_g.
		\end{align*}
		for any $u$ in $H^1$, and in particular for $u=u_{\delta,\eps}=\vphidiv-\vphireg$. Since $e^u\geq 1+u$ the equality $S_{\bm z,\bm a}= \lim\limits_{\delta,\eps\to0} S_{\delta,\eps}$ implies that $\nnorm{u_{\delta,\eps}}_{H^1}=\nnorm{\vphidiv-\vphireg}_{H^1}\to0$.  
		The last point then follows from Lemma~\ref{lemma:cont} since $F\in L^{p}_{\text{loc}}(\Sigma,\gdiv)$ for any $p<q\frac{p_*-1}{p_*}$ (the latter being strictly greater than $1$) and that $e^{\vphireg}$ converges to $e^{\vphidiv}$ in $L^{r}(\Sigma,\gdiv)$ for any $r<\infty$ via Proposition~\ref{prop:moser_trudinger}.
	\end{proof}
	We also study the dependence of the Liouville action with respect to the punctures: 
	\begin{lemma}
		Viewed as a function of the bulk punctures $\left(z_1\mapsto S_{\delta,\eps}\right)$ converges uniformly to $\left(z_1\mapsto S_{\bm z,\bm a}\right)$ on every compact of $\H\setminus\{z_2,\cdots,z_N\}$. The same applies for a boundary insertion. 
	\end{lemma}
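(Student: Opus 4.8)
The plan is to deduce locally uniform convergence from the pointwise convergence $S_{\delta,\eps}\to S_{\bm z,\bm a}$ established in Lemma~\ref{lemma:approx}, by showing that the family $\bigl(z_1\mapsto S_{\delta,\eps}\bigr)_{\delta,\eps}$ is equicontinuous on compact subsets of $\H\setminus\{z_2,\dots,z_N\}$ with a modulus of continuity independent of $\delta,\eps$ (for $\delta,\eps$ small): a pointwise convergent, locally equicontinuous family converges locally uniformly, and its limit is then automatically continuous. The first step is to record, by running the proof of Proposition~\ref{prop:def_action} with $\Lambda,\sigma$ replaced by the smooth cutoffs $\Lambda_{\delta,\eps},\sigma_\eps$, the decomposition
\[
S_{\delta,\eps}\;=\;I_{\bm z,\bm a}^{\delta,\eps}(\vphireg)+G(\bm z,\bm a),
\]
where $I_{\bm z,\bm a}^{\delta,\eps}$ is the functional of Theorem~\ref{thm:prescribe} with $\Lambda,\sigma$ replaced by $\Lambda_{\delta,\eps},\sigma_\eps$ (minimised at $\vphireg$), and where $G(\bm z,\bm a)=-2\sum_{k\neq l}a_ka_lG_g(z_k,z_l)-2\sum_k a_k^2W(z_k)$ is the same $\delta,\eps$-independent quantity appearing in Proposition~\ref{prop:def_action}, since it only involves $g$ and $\Hdiv$. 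On a compact $K\subset\H\setminus\{z_2,\dots,z_N\}$ --- which is automatically at positive distance from $\R=\partial\H$ and from $z_2,\dots,z_N$ --- the map $z_1\mapsto G(\bm z,\bm a)$ is smooth with a Lipschitz constant depending on $K$ only, so it remains to control the dependence on $z_1$ of $m_{\delta,\eps}(z_1):=\min_\varphi I_{\bm z,\bm a}^{\delta,\eps}(\varphi)$ (which enters only through $\Hdiv$ and the location of the cutoff regions; I write a superscript $(z_1)$ for quantities attached to the moving puncture).

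For this I would use a diffeomorphism competitor argument. Fix $r_0=r_0(K)>0$ with $\overline{B_g(z_1,2r_0)}\subset\H\setminus\{z_2,\dots,z_N\}$ for all $z_1\in K$, and for $z_1,z_1'\in K$ with $|z_1-z_1'|$ small take a diffeomorphism $\psi=\psi_{z_1,z_1'}$ of $\overline\Sigma$ equal to a translation sending $z_1$ to $z_1'$ on $B_g(z_1,r_0)$, equal to the identity outside $B_g(z_1,2r_0)$, with $\nnorm{\psi-\mathrm{Id}}_{C^1}\leq C_K|z_1-z_1'|$; for a boundary insertion one instead moves $s_1$ within a compact of $\partial\Sigma$ and chooses $\psi$ to preserve $\partial\Sigma$. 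For $\eps$ below a threshold $\eps_0(K)$ the cutoff region of $(\Lambda_{\delta,\eps},\sigma_\eps)$ attached to $z_1$ lies inside $B_g(z_1,r_0)$, where $\psi$ is a pure translation, so $\Lambda_{\delta,\eps}^{(z_1')}\circ\psi$ and $\Lambda_{\delta,\eps}^{(z_1)}$ (and likewise for $\sigma$) agree outside a thin region around the cutoff annulus that contributes $O_K(|z_1-z_1'|)$ to the action, uniformly in $\delta,\eps$, thanks to the uniform integrability discussed below. Setting $r:=\Hdiv^{(z_1)}\circ\psi^{-1}-\Hdiv^{(z_1')}$, the logarithmic expansion of $G_g$ from Lemma~\ref{lemma:approx_green} shows that the singularities at $z_1'$ of the two terms cancel, that $r$ is bounded, and that $\nnorm{r}_{L^\infty(\Sigma)}\leq C_K\,\omega(|z_1-z_1'|)$ for some modulus $\omega$; hence $\tilde\varphi:=\vphireg^{(z_1)}\circ\psi^{-1}+r$ is an admissible competitor for the problem at $z_1'$ with $\tilde\varphi+\Hdiv^{(z_1')}=\phireg^{(z_1)}\circ\psi^{-1}$. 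Changing variables $x=\psi(y)$ in $I_{\bm z,\bm a}^{\delta,\eps}$ and using the $\delta,\eps$-uniform bounds $\nnorm{\nabla\vphireg^{(z_1)}}_{L^2}\leq C_K$ and $\int_\Sigma e^{\phireg^{(z_1)}}\d v_g+\int_{\partial\Sigma}e^{\frac12\phireg^{(z_1)}}\d l_g\leq C_K$ then gives $m_{\delta,\eps}(z_1')\leq I_{\bm z,\bm a}^{\delta,\eps}(\tilde\varphi)\leq m_{\delta,\eps}(z_1)+C_K\,\omega(|z_1-z_1'|)$; exchanging $z_1$ and $z_1'$ yields the reverse inequality, so $z_1\mapsto S_{\delta,\eps}$ is equicontinuous on $K$, uniformly in $\delta,\eps\in(0,\eps_0(K)]$. (On the boundary one uses the boundary Moser--Trudinger inequality of Proposition~\ref{prop:moser_trudinger} for the boundary integral.)

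The hard part is establishing the $\delta,\eps$-uniformity of the constants above. Two points deserve care. First, the bounds $\nnorm{\nabla\vphireg}_{L^2}\leq C_K$, $\int e^{\phireg}\leq C_K$ and the uniform integrability of $e^{\phireg}$ over sets of small measure: these hold because $I_{\bm z,\bm a}^{\delta,\eps}$ is coercive with constants uniform in $z_1\in K$ and in $\delta,\eps$ small --- $\tdiv$ depends only on the $a_k$'s, the Green function $\Hdiv$ stays uniformly in $L^p_{\mathrm{loc}}$ as $z_1$ ranges over $K$, and for $\eps<\eps_0(K)$ one has $\Lambda_{\delta,\eps}=\Lambda$ on a fixed set of positive measure disjoint from $\bm z_\eps$, so the argument of Theorem~\ref{thm:prescribe} through Proposition~\ref{prop:moser_trudinger} applies with constants controlled by $K$. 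Second, the moving logarithmic singularity of $\Hdiv$ at $z_1$ has to be transported together with the (also moving) cutoff region of $\Lambda_{\delta,\eps},\sigma_\eps$; choosing $\psi$ to be a pure translation near $z_1$ is exactly what lets $\Lambda_{\delta,\eps}^{(z_1')}\circ\psi$ and $\Lambda_{\delta,\eps}^{(z_1)}$ differ only on a negligible set (controlled by the uniform integrability) while Lemma~\ref{lemma:approx_green} keeps the correction $r$ uniformly bounded. Once equicontinuity holds with $\delta,\eps$-uniform constants, a standard finite-cover argument together with the pointwise convergence of Lemma~\ref{lemma:approx} gives uniform convergence on $K$, which is the assertion.
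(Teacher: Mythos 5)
Your proposal is correct in outline but takes a genuinely different route from the paper. The paper's proof is much more direct: it observes (as in Lemma~\ref{lemma:approx}) that convergence of the actions reduces, via the convexity identity for $I_{\bm z,\bm a}$, to showing $\nnorm{\vphireg^{(z_1)}-\vphidiv^{(z_1)}}_{H^1}\to 0$, and then simply makes this rate \emph{uniform} over $z_1\in K$ by noting that the discrepancy between the regularized and unregularized functionals lives only on $\Sigma\setminus\Sigma_{\delta,\eps}$, so that Cauchy--Schwarz reduces everything to the bound $\sup_{z_1\in K}\sup_{\delta,\eps}\nnorm{\vphireg^{(z_1)}}_{H^1}<\infty$, itself a consequence of uniform coercivity. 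You instead keep only the pointwise convergence from Lemma~\ref{lemma:approx} and upgrade it to local uniform convergence by proving $\delta,\eps$-uniform equicontinuity of $z_1\mapsto S_{\delta,\eps}$ through a diffeomorphism-competitor argument. Both strategies work; the paper's buys brevity by exploiting the strict convexity of the functional to compare minimizers directly, while yours is more robust (it never compares $\vphireg$ with $\vphidiv$, only action values, so it would survive in settings without a quantitative convexity estimate) at the cost of the transport construction and the bookkeeping of moving singularities and cutoff regions.

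One point you should tighten: the Dirichlet energy of your competitor $\tilde\varphi=\vphireg^{(z_1)}\circ\psi^{-1}+r$ requires control of $\nnorm{\nabla r}_{L^2}$, not merely $\nnorm{r}_{L^\infty}$. This is true but not automatic from what you wrote: it holds because, in local coordinates where $\psi$ is the exact translation, the singular parts $2a_1\ln\norm{\cdot-z_1}\circ\psi^{-1}$ and $2a_1\ln\norm{\cdot-z_1'}$ cancel identically near $z_1'$, and the regular part $G_g(x,y)+\ln d_g(x,y)$ of the Green's function is $C^1$ near the diagonal, so $r$ is in fact small in $C^1$ on all of $\overline\Sigma$. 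Without this remark the gradient term of $I^{\delta,\eps}_{\bm z,\bm a}(\tilde\varphi)$ is not controlled and the competitor estimate does not close.
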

	\begin{proof}
		By the same argument as in the proof of Lemma~\ref{lemma:approx} it suffices to show that for any $K\subseteq \H\setminus\{z_2,\cdots,z_N\}$ we have a bound of the form
		$\sup\limits_{z_1\in K}\nnorm{\vphireg^{(z_1)}-\vphidiv^{(z_1)}}_{H^1}\leq r_{\delta,\eps}$ for some $r_{\delta,\eps}\to0$, and where the exponent $(z_1)$ indicates the dependence in the puncture $z_1$. To start with for $\eps$ small enough we have $K\cap\bm z_\eps=\emptyset$. Now since $\vphireg^{(z_1)}-\vphidiv^{(z_1)}$ is supported in $\Sigma\setminus \Sigma_{\delta,\eps}$ by Cauchy-Scwharz inequality it suffices to show that $\sup\limits_{z_1\in K}\sup\limits_{\delta,\eps}\nnorm{\vphireg^{(z_1)}}_{H^1}<\infty$ which is readily seen as soon as $\eps$ and $\delta$ are small enough.
	\end{proof}
	We stress that $S_{\delta,\eps}$ corresponds to the Liouville action $S(\phidiv,\Div)$ associated to the divisor $\Div$ but with curvatures given by $\Lambda_{\delta,\eps}$ and $\sigma_{\delta,\eps}$ instead of $\Lambda$ and $\sigma$. This approximation will turn out to be fundamental in order to make sense of the stress-energy tensor as well as the accessory parameters in the next section.
	This is due to the following formula:
	\begin{equation}\label{eq:key_formula1}
		\phidiv(x)=\Hdiv+c-\frac1{2\pi}\int_{\Sigma} G_g(x,y)\Lambda e^{\phidiv(y)}\d v_g - \frac1{\pi}\int_{\partial\Sigma} G_g(x,y)\sigma e^{\frac12\phidiv(y)}\d l_g
	\end{equation}
	which is a consequence of Equation~\eqref{eq:Green} since $\phidiv=\vphidiv+\Hdiv$  where $\vphidiv$ solves Equation~\eqref{eq:curv_problem}.
	Despite its elementary form this formula is fundamental in the study of the dependence of the Liouville action in the location of the punctures as we demonstrate in the next section. To this end we will apply it to the regularized field $\phireg$, that is use
	\begin{equation}\label{eq:key_formula}
		\phireg(x)=\Hdiv+c-\frac1{2\pi}\int_\Sigma G_g(x,y)\Lambda_{\delta,\eps} e^{\phireg(y)}\d v_g - \frac1{\pi}\int_{\partial\Sigma} G_g(x,y)\sigma_{\delta,\eps}e^{\frac12\phireg(y)}\d l_g.
	\end{equation}

	
	
	\section{Stress-energy tensor and accessory parameters on $\H$}\label{sec:SET}
	In this section we assume the underlying Riemann surface to be the upper half-plane $\H$. We fix a divisor $\Div$ and consider $\vphidiv$ the unique solution (see below for a justification) of \begin{equation}
		\left\lbrace \begin{array}{ll}
			-\Delta \vphidiv= \Lambda e^{\vphidiv+\Hdiv} & \text{in } \H \\
			\partial_{n}\vphidiv= -2\sigma e^{\frac12\left(\vphidiv+\Hdiv\right)} & \text{on }\R.
		\end{array} \right.
	\end{equation}
	We define in this section the classical \textit{stress-energy tensor} $T(z)$ in terms of $\Phidiv\coloneqq\vphidiv+\Hdiv$, and show that it can be expressed in terms of \textit{accessory parameters}:
	\begin{theorem}\label{thm:SET}
		Let $T(z)\coloneqq \partial_z^2\Phidiv(z)-\frac12\left(\partial_z\Phidiv(z)\right)^2$ for $z\in\H\setminus\bm z$, while for $t\in\R\setminus\bm z$
		\[
		T(t)\coloneqq-\frac\Lambda4 e^{\Phidiv(t)}+\lim\limits_{\delta,\eps\to0}\left(\frac12\partial_t^2\Phireg(t)-\frac18\left(\partial_t\Phireg(t)\right)^2+\frac{1}{\eps}\frac{\sigma(t)}{2\pi}e^{\frac12\Phireg(t)}\right)
		\]
		where $\Phireg$ is a regularization of $\Phidiv$ (see below). Then for $x\in\overline \H\setminus\bm z$
		\begin{equation}
			T(x)=\sum_{k=1}^N\left(\frac{\delta_k}{(x-x_k)^2}+\frac{\bm c_k}{x-x_k}+\frac{\delta_k}{(x-\overline{x_k})^2}+\frac{\overline{\bm c_k}}{x-\overline{x_k}}\right)+\sum_{l=1}^M\left(\frac{\delta_l}{(x-s_l)^2}+\frac{\bm c_l}{x-s_l}\right)
		\end{equation}
		where the weights are $\delta_k=-a_k(1+\frac12{a_k})$ and the accessory parameters $\bm c_k$ are explicit.
	\end{theorem}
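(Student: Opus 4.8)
The strategy is to combine three ingredients: (1) the local behaviour of $\Phidiv$ near each puncture, (2) the key integral representation \eqref{eq:key_formula1}, and (3) the observation that $T$ is a \emph{meromorphic} function on $\overline\H$ with controlled singularities, so that once we identify its poles and their principal parts the global expression follows from a Liouville-type (no pun intended) argument. First I would establish the local expansion: near a bulk puncture $z_k$ one has $\Phidiv(z)=-2a_k\ln\norm{z-z_k}+w_k(z)$ with $w_k$ smooth, hence $\partial_z\Phidiv=-\frac{a_k}{z-z_k}+\partial_z w_k$ and $\partial_z^2\Phidiv=\frac{a_k}{(z-z_k)^2}+\partial_z^2 w_k$. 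Plugging into $T(z)=\partial_z^2\Phidiv-\tfrac12(\partial_z\Phidiv)^2$, the $(z-z_k)^{-2}$ coefficient is $a_k-\tfrac12 a_k^2=-a_k(1+\tfrac12 a_k)\cdot(-1)$; one must be careful with signs, but the upshot is a double pole with coefficient $\delta_k$ (and by conjugation symmetry a matching pole at $\overline{z_k}$, since $\Phidiv$ extends to a function on $\overline\H\setminus\bm z$ satisfying the reflected equation), while the residue $\bm c_k$ is read off from $\partial_z w_k(z_k)$. For a boundary puncture $s_l$ the regularized definition of $T(t)$ is designed precisely so that the same computation goes through after the smoothing $\Phireg$, giving a double pole with coefficient $\delta_l$ on the real axis.

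The second step is holomorphy of $T$ away from $\bm z$. In the bulk this follows from the PDE: differentiating $-\Delta\Phidiv=\Lambda e^{\Phidiv}$ (with $\Lambda$ constant) shows $\partial_{\bar z}T(z)=0$ — indeed $\partial_{\bar z}\big(\partial_z^2\Phidiv-\tfrac12(\partial_z\Phidiv)^2\big)=\partial_z\big(\partial_z\partial_{\bar z}\Phidiv\big)-(\partial_z\Phidiv)(\partial_z\partial_{\bar z}\Phidiv)$, and since $4\partial_z\partial_{\bar z}\Phidiv=\Delta\Phidiv=-\Lambda e^{\Phidiv}$ the two terms cancel, using $\partial_z e^{\Phidiv}=(\partial_z\Phidiv)e^{\Phidiv}$. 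On the boundary the corrected expression in the definition of $T(t)$ — the counterterm $-\tfrac\Lambda4 e^{\Phidiv(t)}$ together with the $\tfrac1\eps$ geodesic-curvature term — is exactly what is needed to cancel the boundary anomaly coming from the Robin condition, so that $T$ extends holomorphically (after Schwarz reflection) across $\R\setminus\bm z$; this is where Lemma~\ref{lemma:L2} guarantees the limit exists. Hence $T$ is a meromorphic function on the Riemann sphere (the doubled surface), holomorphic at $\infty$ with the standard $z^{-4}$ decay, whose only poles are the double poles identified above.

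The third step is then purely algebraic: a meromorphic function on $\C\cup\{\infty\}$ vanishing at infinity and with prescribed double poles at the $z_k,\overline{z_k},s_l$ must equal $\sum_k\big(\tfrac{\delta_k}{(x-z_k)^2}+\tfrac{\bm c_k}{x-z_k}\big)+(\text{conjugate terms})+\sum_l\big(\tfrac{\delta_l}{(x-s_l)^2}+\tfrac{\bm c_l}{x-s_l}\big)$, with the constraints $\sum\bm c_k+\sum\overline{\bm c_k}+\sum\bm c_l=0$ and two further moment conditions from the $z^{-4}$ decay at $\infty$. It remains to identify $\bm c_k$ with a derivative of the Liouville action; here I would differentiate the representation \eqref{eq:key_formula} in $z_1$ and use the variational characterization of $\phidiv$ (that cross terms from moving the puncture vanish because $\varphi_{\bm z,\bm a}$ is a critical point of $I_{\bm z,\bm \alpha}$) to match $\partial_z w_k(z_k)$ with $-\tfrac12\partial_{z_k}S_{\bm z,\bm a}$, exactly as in the closed-surface computation of Zograf–Takhtajan. \textbf{The main obstacle} I anticipate is the boundary analysis: justifying that the $\delta,\eps\to0$ limit defining $T(t)$ exists and produces a genuinely holomorphic (reflected) function requires careful control of $\Phireg$ and its derivatives near $\R$ — the field is only $C^{1,\alpha}$-type there when the boundary is not geodesic, so the smoothing scales $\eps-\eps^{10}$ etc.\ must be tracked precisely to see the $\tfrac1\eps$ counterterm do its job, and this is genuinely more delicate than anything in the closed case.
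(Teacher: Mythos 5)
Your architecture --- local expansion at the punctures to read off principal parts, holomorphy of $T$ away from $\bm z$, Schwarz reflection, and Liouville's theorem on the doubled sphere --- is genuinely different from what the paper does. The paper never invokes meromorphy or a pole-counting argument: it treats the evaluation point $x$ as an additional insertion of weight $0$, uses the explicit Green's function representation \eqref{eq:key_formula2_reg} of the regularized field to compute the descendants $\Lc_{-1}^{(k)}[\Phireg]$ and $\Lc_{-2}^{(k)}[\Phireg]$ as exact algebraic identities (Lemma~\ref{lemma:formula_der} and Equation~\eqref{eq:L2}), and then passes to the limit $\delta,\eps\to0$ by integration by parts and Stokes' formula (Lemma~\ref{lemma:L1}, Proposition~\ref{prop:L2}); the claimed expansion of $T$ is the local Ward identity \eqref{eq:ward} specialized to a weight-$0$ insertion, with $\bm c_k=\tfrac12\Lc_{-1}^{(k)}[\Phidiv]$ coming out explicitly. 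For a \emph{bulk} evaluation point your route is viable and classical (essentially the Takhtajan--Zograf argument), provided you fix the sign bookkeeping (in the conventions of Section~\ref{sec:SET} one has $\Phidiv\sim+2a_k\ln\norm{\cdot-z_k}$, which is what produces $\delta_k=-a_k(1+\tfrac12a_k)$) and actually verify the $\mc O(x^{-4})$ decay at infinity.

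The genuine gap is at the boundary, which is precisely the part of the statement that is new. Your step 3 rests on two facts that are asserted rather than proven: (i) that $T$, holomorphic in $\H\setminus\bm z$, extends meromorphically across $\R\setminus\bm z$ with only double poles at the $s_l$ carrying the prescribed coefficients --- this needs the corner asymptotics of $\Phidiv$ at $s_l$ under the Robin condition, which is not the ``same computation'' as at an interior cone point; and (ii) that the renormalized quantity $T(t)$ defined through $\Phireg$ and the $\tfrac1\eps$ counterterm coincides with the boundary value of that extension. Point (ii) is not a formality: when $\sigma\neq0$ the naive boundary expression $\tfrac12\partial_t^2\Phidiv-\tfrac18(\partial_t\Phidiv)^2$ does not converge (that is why the counterterm is there at all), so identifying the limit with anything requires tracking exactly how $\partial_t^2\Phireg(t)$ diverges as $\eps\to0$; this is the content of Lemma~\ref{lemma:L2} and of the Stokes computation in the proof of Proposition~\ref{prop:L2}, which you cite as an input but which carries the main burden of the theorem. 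As written, your argument establishes the bulk formula modulo the continuation across $\R\setminus\bm z$, and defers the entire boundary case to the result being proven.
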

	We will prove in the Proposition~\ref{prop:accessory} that $\bm c_k=-\frac12\partial_{z_k}S_{\bm z,\bm a}$. In the present section we give an explicit expression for them in terms of the field $\Phidiv$ in Equations~\eqref{eq:weight_accessory} and~\eqref{eq:L1_explicit}. We also actually prove here a more general statement that formally corresponds to the evaluation of the stress-energy tensor at singular points:
	\begin{theorem}\label{thm:L2}
		Set $\Lc_{-2}^{(k)}[\Phidiv] \coloneqq \left(1-a_k\right)\partial^2\Phidiv^{(k)}(z_k)-\frac12\left(\partial\Phidiv^{(k)}(z_k)\right)^2$ for $1\leq k\leq N$ and
		\[
		\Lc_{-2}^{(l)}[\Phidiv] \coloneqq \lim\limits_{\delta,\eps\to0}\left(\frac12\left(1-b_l\right)\partial^2\Phireg^{(k)}(s_l)-\frac18\left(\partial\Phireg^{(l)}(s_l)\right)^2-\mathfrak R_{\delta,\eps}^{(l)}\right)
		\]
		with $\mathfrak R_{\delta,\eps}^{(k)}$ an explicit remainder term (see Lemma~\ref{lemma:L2}) and $\Phi^{(k)}=\Phi-2a_k\ln\norm{\cdot-z_k}$. Then 
		\begin{eqs}
			\Lc_{-2}^{(l)}[\Phidiv]&=\sum_{\substack{1\leq k\leq M\\k\neq l}}\left(\frac{\delta_l}{(s_l-s_k)^2}+\frac{\bm c_l}{s_l-s_k}\right)\\
			&+\sum_{k=1}^N\left(\frac{\delta_k}{(s_l-x_k)^2}+\frac{\bm c_k}{s_l-x_k}+\frac{\delta_k}{(s_l-\overline{x_k})^2}+\frac{\overline{\bm c_k}}{s_l-\overline{x_k}}\right).
		\end{eqs}
	\end{theorem}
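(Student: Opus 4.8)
The plan is the same as for Theorem~\ref{thm:SET}, of which Theorem~\ref{thm:L2} is the analogue at a singular point: $T$ is shown to be the restriction to $\overline\H\setminus\bm z$ of a function meromorphic on $\C\setminus(\bm z\cup\overline{\bm z})$ with the poles listed there, and $\Lc_{-2}^{(l)}[\Phidiv]$ is identified with the value at $s_l$ of $T$ with its principal part at $s_l$ removed — which, by the partial fraction formula, is precisely the right-hand side to be proven. The only real content is thus the identification of this ``regular part of $T$ at a puncture'' with $\Lc_{-2}$ defined through the field, and this splits into a routine bulk case and a delicate boundary one, the latter being the main obstacle.

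\emph{Bulk case.} Since $\Lambda$ is constant and $\Phidiv=\vphidiv+\Hdiv$ solves $-\Delta\Phidiv=\Lambda e^{\Phidiv}$ on $\H\setminus\bm z$, the identity $\partial_z\partial_{\bar z}\Phidiv=-\frac{\Lambda}{4}e^{\Phidiv}$ yields $\partial_{\bar z}T=0$, so $T$ is holomorphic there. Near $z_k$ one has $\partial_z\Phidiv=\frac{a_k}{z-z_k}+\partial_z\Phidiv^{(k)}$, with $\Phidiv^{(k)}=\Phidiv-2a_k\ln\norm{\cdot-z_k}$ regular by the elliptic theory of Theorem~\ref{thm:prescribe} and Lemma~\ref{lemma:approx_green}; substituting into $T$ gives
\[
T(z)=\frac{\delta_k}{(z-z_k)^2}-\frac{a_k}{z-z_k}\,\partial_z\Phidiv^{(k)}(z)+\partial_z^2\Phidiv^{(k)}(z)-\frac12\big(\partial_z\Phidiv^{(k)}(z)\big)^2,\qquad \delta_k=-a_k\Big(1+\frac{a_k}{2}\Big).
\]
As the left-hand side is holomorphic near $z_k$, matching Laurent coefficients forces the residue to be $\bm{c}_k=-a_k\partial_z\Phidiv^{(k)}(z_k)$ and the constant term to be $(1-a_k)\partial^2\Phidiv^{(k)}(z_k)-\frac12\big(\partial\Phidiv^{(k)}(z_k)\big)^2=\Lc_{-2}^{(k)}[\Phidiv]$. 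When $a_k<0$, $\Phidiv^{(k)}$ may fail to be $C^2$ at $z_k$, and it is exactly Lemma~\ref{lemma:L2} — running this expansion on the regularized field $\Phireg$ and passing to the limit — that makes sense of this constant term, hence of $\Lc_{-2}^{(k)}[\Phidiv]$.

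\emph{Boundary case (the main obstacle).} Because the boundary condition $\partial_n\Phidiv=-2\sigma e^{\Phidiv/2}$ is of Robin rather than Neumann type, $\Phidiv$ does not extend across $\R$ and $T(t)$ cannot be obtained as a naive $z\to t$ limit of $T(z)$. The key is that $\Lambda_{\delta,\eps}$ and $\sigma_{\eps}$ vanish near $\bm z$: in a neighbourhood of $s_l$ the function $\vphireg$ is harmonic with homogeneous Neumann data, hence extends by reflection, so $\Phireg^{(l)}$ is smooth near $s_l$ and the quantity $\frac12\partial_t^2\Phireg(t)-\frac18\big(\partial_t\Phireg(t)\big)^2$ appearing in the definition of $T(t)$ can be expanded there. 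The mismatch between this expansion and the genuine boundary value of the holomorphic $T$ — originating from the $-\frac{\Lambda}{4}e^{\Phidiv(t)}$ correction and from $\sigma_{\eps}\neq\sigma$ on an $\eps$-thin annulus around $s_l$ (the source of the $\eps^{-1}$-divergent term) — is bookkept in the explicit remainder $\mathfrak R^{(l)}_{\delta,\eps}$; the crux of the argument is to check that once $\mathfrak R^{(l)}_{\delta,\eps}$ is subtracted the limit $\delta,\eps\to0$ exists and produces $\Lc_{-2}^{(l)}[\Phidiv]$. This uses uniform control of $\Phireg$ and its first two derivatives near $s_l$ (via Lemma~\ref{lemma:approx_green} and the $H^1$ convergence $\vphireg\to\vphidiv$) together with a careful trading of the normal derivatives in the $\eps$-annulus for the Liouville PDE, so that the arising $\eps^{-1}$ singularities cancel against $\mathfrak R^{(l)}_{\delta,\eps}$.

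For completeness, the global identity of Theorem~\ref{thm:SET}, which the above takes as input and whose proof runs in parallel, is obtained by feeding the representation~\eqref{eq:key_formula} for $\Phireg$ into $T$: the logarithmic part $\Hdiv=-2\sum_j a_jG_g(\cdot,z_j)$, together with the image structure of the Neumann Green's function $\partial_zG_g(x,y)=-\frac{1}{2(x-y)}-\frac{1}{2(x-\bar y)}+(\text{smooth})$ (with the obvious modification at a boundary source), produces the double poles $\delta_k(x-z_k)^{-2}$ and their reflections $\delta_k(x-\overline{z_k})^{-2}$, while the derivatives at the punctures of the Cauchy transforms of $\Lambda_{\delta,\eps}e^{\Phireg}$ and $\sigma_{\eps}e^{\Phireg/2}$ give, as $\delta,\eps\to0$, the accessory parameters $\bm{c}_k$ in the simple poles with the explicit form~\eqref{eq:L1_explicit}; since $T$ is moreover real on $\R\setminus\bm z$, it extends by Schwarz reflection to $\C\setminus(\bm z\cup\overline{\bm z})$, and, with the decay at $\infty$, Liouville's theorem in complex analysis forces $T$ to be the sum of these principal parts.
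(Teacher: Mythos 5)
Your bulk argument (Laurent matching of the holomorphic $T$ against the expansion $\partial_z\Phidiv=\frac{a_k}{z-z_k}+\partial_z\Phidiv^{(k)}$) is fine, but the theorem you are asked to prove is precisely the boundary statement, and that is the step you flag as ``the crux'' and then do not carry out. Concretely: you never derive the remainder $\mathfrak R^{(l)}_{\delta,\eps}$, you only assert that the mismatch between $\frac12\partial_t^2\Phireg-\frac18(\partial_t\Phireg)^2$ and the boundary value of the holomorphic $T$ is ``bookkept'' in it. The whole content of the theorem at $s_l$ is that the divergences of $\partial\Phireg^{(l)}(s_l)$ and $\partial^2\Phireg^{(l)}(s_l)$ are \emph{exactly} the two terms of $\mathfrak R^{(l)}_{\delta,\eps}$ in Lemma~\ref{lemma:L2} and that the surviving finite part is the partial-fraction sum. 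Your sketch attributes the divergence entirely to the $\eps$-thin annulus where $\sigma_\eps\neq\sigma$, which accounts (after an actual computation you do not perform) for the $\eps^{-1}$ term, but says nothing about the second, $\delta$-scale contribution $\frac{\Lambda}{4\pi}\int\frac{1}{1+t^2}e^{\Phireg(s_l+\delta(t+\im))}\d t$ coming from the bulk integral near the boundary. Moreover, your identification of $\Lc_{-2}^{(l)}[\Phidiv]$ with ``the regular part of $T$ at $s_l$'' presupposes that the regularized tangential derivatives at $s_l$ converge to data of the holomorphic bulk $T$; but near $s_l$ the field $\Phireg$ satisfies a homogeneous Neumann condition while $\Phidiv$ satisfies the Robin condition $\partial_n\Phidiv=-2\sigma e^{\Phidiv/2}$, so $\partial_t\Phireg(s_l)$ and $\partial_t^2\Phireg(s_l)$ do not converge to the corresponding quantities for $\Phidiv$ (this is why the counterterms exist at all). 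That equality is the theorem, not an admissible input to a Schwarz-reflection argument.

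For comparison, the paper does not use holomorphy of $T$, Schwarz reflection or Liouville's theorem anywhere. It feeds the Green's function representation~\eqref{eq:key_formula2_reg} into the definitions to obtain, for every fixed $\delta,\eps>0$, the exact algebraic identity~\eqref{eq:L2} expressing $\Lc_{-2}^{(k)}[\Phireg]$ as the desired partial-fraction sum plus explicit integral terms $\Itr[\cdot]$ and $\It^2[\cdot]$; Proposition~\ref{prop:L2} then shows the double integral cancels by symmetry and converts the remaining single integral by Stokes' formula, whose boundary contributions (over $\R+\im\delta$, over $\partial B(z_m,\eps)$, and the endpoint evaluations at $s_m\pm\eps$) are computed exactly and are, term by term, the remainder $\mathfrak R^{(l)}_{\delta,\eps}$ together with the corrections entering the definition of $\Lc_{-1}^{(m)}$ in Lemma~\ref{lemma:L1}. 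If you want to keep your complex-analytic frame, you would still have to reproduce essentially this Stokes computation to justify the boundary limit; as written, the proposal proves the bulk case and leaves the boundary case, i.e.\ the statement of Theorem~\ref{thm:L2} itself, unproven.
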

	\begin{remark}\label{rmk:BPZ}
		If we take only boundary insertions with weights $b_l=1$ for all $l$ (note however that if we do so we cannot have $\eul<0$) then thanks to Equation~\eqref{eq:L2_expr} we have $\Lc_{-2}^{(l)}[\Phidiv] =-\frac18\left(\partial\Phidiv^{(l)}(s_l)\right)^2+\frac{\sigma(s_l^-)+\sigma(s_l^+)}{2\pi} e^{\frac12\Phidiv^{(l)}(s_l)}$. Hence 
		\begin{equation*}
			\frac12\left(\partial_{s_l}\Phidiv^{(l)}\right)^2+2\frac{\sigma(s_l^-)+\sigma(s_l^+)}{\pi} e^{\frac12\Phidiv^{(l)}(s_l)}+\sum_{k\neq l}\frac{4\bm c_l}{s_l-s_k}=\sum_{k\neq l}\frac{6}{(s_l-s_k)^2}\cdot
		\end{equation*}
		Anticipating on Proposition~\ref{prop:accessory} we thus see that 
		\begin{equation}\label{eq:HEM_heavy}
			\frac12\left(\partial_{s_l}S_{\bm z,\bm a}\right)^2-\sum_{k\neq l}\frac{2}{s_l-s_k}\partial_{s_k}S_{\bm z,\bm a}=\sum_{k\neq l}\frac{6}{(s_l-s_k)^2}-2\frac{\sigma(s_l^-)+\sigma(s_l^+)}{\pi} e^{\frac12\Phidiv^{(l)}(s_l)}.
		\end{equation}
		In particular if $\sigma(s_l^-)+\sigma(s_l^+)=0$ for all $l$ this coincides with the differential equation satisfied by the minimal (multichordal) Loewner potential from~\cite[Proposition 1.8]{PeWa}.
	\end{remark}
	Though the statements have been formulated on $\H$ they are actually purely local and a similar result should hold on more general Riemann surfaces. See \textit{e.g.}~\cite{BaWu} where this aspect is discussed in relation with the Hilbert space picture for Liouville CFT.
	
	
	
	\subsection{Preliminary computations}
	Let $\Sigma=\H$ be the upper half-plane and $\partial\Sigma=\R$ be its boundary. We equip it with the Riemannian metric $g_0=\frac{4\norm{dx}^2}{(1+\norm{x}^2)^2}=: e^{2w_0(x)}\norm{dx}^2$ so that we are in the setting of Theorem~\ref{thm:prescribe}.  This metric is geodesic ($k_{g_0}=0$) and has constant scalar curvature $R_{g_0}=2$.   
	The corresponding Green's function is given by
	\begin{equation}\label{eq:green_H}
		G_0(x,y)=\ln\frac{1}{\norm{x-y}\norm{x-\bar y}}-\left(w_0(x)+w_0(y)+1\right).
	\end{equation}
	
	\subsubsection{Definition of the fields}
	Let $\phidiv$ be the solution of Problem~\ref{prob:class} with $\Lambda>0$ and $\sigma\geq 0$, constant respectively in $\H$ and on each connected component of $\R\setminus\bm z$. We also introduce $\Phidiv\coloneqq\phidiv+2w_0$ and $\vphidiv\coloneqq \Phidiv-\Hdiv$, which is thus a solution of
	\begin{equation}
		\left\lbrace \begin{array}{ll}
			-\Delta \vphidiv= \Lambda e^{\vphidiv+\Hdiv} & \text{in } \H \\
			\partial_{n}\vphidiv= -2\sigma e^{\frac12\left(\vphidiv+\Hdiv\right)} & \text{on }\R.
		\end{array} \right.
	\end{equation}
	It is not hard to see that the Liouville action associated to $\phidiv$ is given by
	\begin{equation}
		S_{\bm z,\bm a}=c+G(\bm z,\bm a)+\frac1{4\pi}\int_\H\left(\norm{\nabla\vphidiv}^2+2\Lambda e^{\Phidiv}\right)\d^2x+\frac2{\pi}\int_\R\sigma e^{\frac12\Phidiv}\d t
	\end{equation}
	where $c$ only depends on $w_0$. Moreover as a consequence of Equation~\eqref{eq:key_formula1} we have:
	\begin{equation}\label{eq:key_formula2}
		\Phidiv=\Hdiv+2w_0+c-\frac1{2\pi}\int_{\H} G_0(\cdot,y)\Lambda e^{\Phidiv(x)}\d^2x - \frac1{\pi}\int_{\R} G_0(\cdot,t)\sigma e^{\frac12\Phidiv(t)}\d t.
	\end{equation}

	Let us fix any positive $r$ with  $2r<\min\limits_{z\neq z'\in\bm z}\norm{z-z'}\wedge\min\limits_{z\in \bm z\cap\H}\Im(z)$ and pick $z_0\in(\H+\im r)\setminus\bm z_r$ and $s_{M+1}\in\R\setminus\bm z_r$.
	We define $\Phidiv^{(k)}\coloneqq \Phidiv-2a_k\ln\norm{\cdot-z_k}$ for $0\leq k\leq N$ and likewise set $\Phidiv^{(l)}$ for a boundary insertion (there will be no ambiguity in the label). We consider $\Lambda_{\delta,\eps}$ and $\sigma_{\delta,\eps}$ defined like in Subsection~\ref{subsec:phireg} for the singular points $\bm z\cup\{z_0,s_{M+1}\}$: this yields a regularized field $\phireg$ and $\Phireg\coloneqq\phireg+2w_0$. We have the analog of Equation~\eqref{eq:key_formula2}:
	\begin{equation}\label{eq:key_formula2_reg}
		\Phireg=\Hdiv+2w_0+c-\frac1{2\pi}\int_{\H} G_0(\cdot,x)\Lambda_{\delta,\eps} e^{\Phireg(x)}\d^2x - \frac1{\pi}\int_{\R} G_0(\cdot,t)\sigma_\eps e^{\frac12\Phireg(t)}\d t.
	\end{equation}
	We will also work with $\Phireg^{(k)}\coloneqq \Phireg-2a_k\ln\norm{\cdot-z_k}$ for boundary and bulk insertions.

	\subsubsection{Derivatives of $\phireg$ and a fundamental formula}
	For $z$ on $\H$ let us denote by $\partial=\partial_z$ the holomorphic (Wirtinger) derivative and for $t\in\R$ the usual (real) derivative. Because the classical field $\Phidiv$ may not be differentiable we need to go through a limiting procedure involving its regularized counterpart $\Phireg$. To do so in analogy with boundary Liouville theory we set for $0\leq k\leq N$ and $1\leq l\leq M+1$:
	\begin{equation}
		\Lc_{-1}^{(k)}[\Phireg]\coloneqq -2a_k\partial\Phireg^{(k)}(z_k)\qt{and}\Lc_{-1}^{(l)}[\Phireg]\coloneqq -b_l\partial\Phireg^{(l)}(s_l)
	\end{equation}
	by which we mean $-2 a_k\partial_x\Phireg(x)$ evaluated at $x=z_k$ (and likewise for $s_l$).
	\begin{lemma}\label{lemma:formula_der}
		For any positive $\delta,\eps$ and $0\leq k\leq N$ we have:
		\begin{eqs}\label{eq:L1}
			&\Lc_{-1}^{(k)}[\Phireg]=\frac{2a_k^2}{\overline z_k-z_k}+\sum_{\substack{0\leq l\leq N\\ l\neq k}}\frac{2a_ka_l}{z_l-z_k}+\frac{2a_ka_l}{\overline z_l-z_k}+\sum_{l=1}^{M}\frac{2a_kb_l}{s_l-z_k}\\
			&+\frac1{2\pi}\int_{\H}\left(\frac{a_k}{x-z_k}+\frac{a_k}{\bar x-z_k}\right)\Lambda_{\delta,\eps}(x) e^{\Phi_{\delta,\eps}(x)}\d^2x+\frac1{\pi}\int_{\R}\frac{a_k}{t-z_k}\sigma_{\delta,\eps}(t) e^{\frac12\Phi_{\delta,\eps}(t)}\d t.
		\end{eqs}
		Likewise for a boundary insertion we have for $1\leq l\leq M+1$:
		\begin{eqs}\label{eq:L1_bord}
			&\Lc_{-1}^{(l)}[\Phireg]=\sum_{k=1}^{N}\frac{2b_la_k}{z_k-s_l}+\frac{2b_la_k}{\overline z_k-s_l}+\sum_{\substack{1\leq k\leq M+1\\ k\neq l}}\frac{2b_lb_k}{s_k-s_l}\\
			&+\frac1{2\pi}\int_{\H}\left(\frac{b_l}{x-s_l}+\frac{b_l}{\bar x-s_l}\right)\Lambda_{\delta,\eps}(x) e^{\Phi_{\delta,\eps}(x)}\d^2x+\frac1{\pi}\int_{\R}\frac{b_l}{t-s_l}\sigma_{\delta,\eps}(t) e^{\frac12\Phi_{\delta,\eps}(t)}\d t.
		\end{eqs}
	\end{lemma}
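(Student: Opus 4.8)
The statement is local and will follow by differentiating, term by term, the fixed-point representation~\eqref{eq:key_formula2_reg} of $\Phireg$. The plan is as follows. Since the regularization smooths every conical singularity, $\Phireg^{(k)}=\Phireg-2a_k\ln\norm{\cdot-z_k}$ is smooth near $z_k$ (and $\Phireg^{(l)}$ near $s_l$), so the derivative appearing in the definition of $\Lc_{-1}^{(k)}[\Phireg]$ is an ordinary one; moreover $\Lambda_{\delta,\eps}$ vanishes on neighbourhoods of $\bm z\cup\{z_0\}$ and of $\partial\H$, and $\sigma_\eps$ vanishes near the boundary punctures, so in the two integrals of~\eqref{eq:key_formula2_reg} the kernel $G_0(x,\cdot)$ never meets the point of evaluation and one may differentiate under the integral sign with no singularity to worry about.

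First I would split off, inside $\Hdiv=-2\sum_j a_j G_0(\cdot,z_j)$, the term carrying the singularity at the puncture under consideration and cancel its $-\ln\norm{x-z_k}$ part against the $-2a_k\ln\norm{\cdot-z_k}$ in $\Phireg^{(k)}$, using the explicit kernel~\eqref{eq:green_H}, $G_0(x,y)=-\ln\norm{x-y}-\ln\norm{x-\bar y}-\big(w_0(x)+w_0(y)+1\big)$. Applying $\partial_x$ and evaluating at $z_k$ then yields, term by term, the pieces of~\eqref{eq:L1}: the residual $-2a_k\ln\norm{x-\bar z_k}$ of the $z_k$-term gives the diagonal contribution $\tfrac{2a_k^2}{\bar z_k-z_k}$; each other bulk term $-2a_j G_0(\cdot,z_j)$ ($j\neq k$) gives $\tfrac{2a_ka_j}{z_j-z_k}+\tfrac{2a_ka_j}{\bar z_j-z_k}$; each boundary term $-b_l G_0(\cdot,s_l)$ gives $\tfrac{2a_kb_l}{s_l-z_k}$ (the two logarithms collapsing since $\bar s_l=s_l$); and $\partial_x$ of the bulk and boundary integrals, via $\partial_x\big(-\ln\norm{x-y}-\ln\norm{x-\bar y}\big)=-\tfrac1{2(x-y)}-\tfrac1{2(x-\bar y)}$, gives (after multiplication by $-2a_k$) their integral counterparts in~\eqref{eq:L1}. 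The boundary-insertion formula~\eqref{eq:L1_bord} is obtained identically, with $z_k$ replaced by the real point $s_l$ and the real derivative in place of the holomorphic one; the only structural change is that no separate ``image'' term survives.

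The one step that requires an argument rather than a computation is the cancellation of all the terms proportional to $\partial w_0$. These come from the smooth metric-dependent piece $-\big(w_0(x)+w_0(\cdot)+1\big)$ present in every copy of $G_0$ (in $\Hdiv$ and in both integrals) together with the explicit $2w_0(x)$ in~\eqref{eq:key_formula2_reg}; gathering them, the coefficient of $\partial w_0(z_k)$ in $\partial\Phireg^{(k)}(z_k)$ is
\[
2\Big(1+\sum_j a_j\Big)+\frac1{2\pi}\int_{\H}\Lambda_{\delta,\eps}\,e^{\Phireg}\,\d^2x+\frac1{\pi}\int_{\R}\sigma_\eps\,e^{\frac12\Phireg}\,\d t ,
\]
the sum ranging over all punctures. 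Now the conformal metric $e^{\Phireg}\norm{dx}^2$ is admissible, represents $\Div$, and has Gaussian curvature $-\tfrac12\Lambda_{\delta,\eps}$ in $\H$ and geodesic curvature $-\sigma_\eps$ on $\R$, so the generalized Gauss--Bonnet formula~\eqref{eq:gauss_bonnet_sing} gives $\tfrac1{4\pi}\int_{\H}\Lambda_{\delta,\eps}e^{\Phireg}\d^2x+\tfrac1{2\pi}\int_{\R}\sigma_\eps e^{\frac12\Phireg}\d t=-\eul$, while $\eul=\chi(\H)+\sum_j a_j=1+\sum_j a_j$; hence the displayed coefficient equals $2\eul-2\eul=0$. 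This Gauss--Bonnet cancellation is the crux of the argument; the remainder is bookkeeping, the only delicate point being to track the factors of $2$ generated by the doubled logarithm of $G_0$ at boundary points (both at boundary punctures and in the boundary integral). Feeding the resulting expression for $\partial\Phireg^{(k)}(z_k)$ (resp.\ the real derivative at $s_l$) into $\Lc_{-1}^{(k)}[\Phireg]=-2a_k\partial\Phireg^{(k)}(z_k)$ (resp.\ $\Lc_{-1}^{(l)}[\Phireg]=-b_l\partial\Phireg^{(l)}(s_l)$) then gives~\eqref{eq:L1} and~\eqref{eq:L1_bord}.
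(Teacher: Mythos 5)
Your proposal is correct and follows essentially the same route as the paper: differentiate the Green's-function representation \eqref{eq:key_formula2_reg} term by term using the explicit kernel \eqref{eq:green_H}, and observe that the leftover terms proportional to $\partial w_0(z_k)$ cancel by the singular Gauss--Bonnet formula applied to the metric $e^{\Phireg}\norm{dx}^2$ (whose curvatures are $-\tfrac12\Lambda_{\delta,\eps}$ and $-\sigma_\eps$); the paper's proof compresses all the bookkeeping into the single displayed identity for (LHS)$-$(RHS) and then invokes exactly this Gauss--Bonnet cancellation. (Only cosmetic remark: the residual of the $z_k$-term of $\Hdiv$ is $+2a_k\ln\norm{x-\bar z_k}$, not $-2a_k\ln\norm{x-\bar z_k}$, but the contribution $\tfrac{2a_k^2}{\bar z_k-z_k}$ you record for it is the correct one.)
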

	\begin{proof}
		First note that the integrals on the right-hand side are absolutely convergent since $e^{\phireg}$ is integrable and that we integrate away from the singularities. Then using the explicit expression~\eqref{eq:green_H} of the Green's function together with Equation~\eqref{eq:key_formula2_reg} we get
		\begin{align*}
			\text{(LHS)-(RHS)}=2c_k\partial w_0(x_k)\left(2\sum_{l=1}^{N}a_l+\sum_{l=1}^Mb_l-2-\frac1{2\pi}\int_\Sigma\Lambda_{\delta,\eps}e^{\Phireg}\d^2x -\frac1{\pi}\int_{\partial\Sigma}\frac12\sigma_{\delta,\eps}e^{\frac12\Phireg}\d t \right)
		\end{align*}
		where (LHS) and (RHS) denote respectively the left-hand side and the right-hand side in Equation~\eqref{eq:L1}. We recognize the singular Euler characteristic for $\Div$ with $\chi(\D)=1$, $\eul=\sum_{l=1}^{N}a_l+\frac12\sum_{l=1}^Mb_l-1$. Since $\Lambda_{\delta,\eps}e^{\Phireg}\d^2x=R_{\gdiv}\d v_{\gdiv}$ and $\frac12\sigma_{\eps}e^{\frac12\Phireg}\d t=k_{\gdiv}\d l_{\gdiv}$ where $\gdiv$ solves Problem~\ref{prob:class} by the Gauss-Bonnet formula~\eqref{eq:gauss_bonnet_sing} (LHS)-(RHS) vanishes.
	\end{proof}
	In view of this statement we denote $\bm x=\left\{z_0,\cdots,z_N,\overline z_0,\cdots,\overline z_N,s_1,\cdots,s_{M+1}\right\}$ and likewise write $\bm c\coloneqq\{0,a_1,\cdots,a_N,0,a_1,\cdots,a_N,b_1\cdots,b_M,0\}$. We also set for $f$ smooth on $\Heps\cup\Reps$
	\begin{align*}
		\Itr\left[F(x)\right]\coloneqq \frac{1}{4\pi}\int_{\H} \left(f(x)+\overline{f(x)}\right))\Lambda_{\delta,\eps}(x) e^{\Phi_{\delta,\eps}(x)}\d^2x+\frac1{2\pi}\int_{\R}f(t)\sigma_{\delta,\eps}(t) e^{\frac12\Phi_{\delta,\eps}(t)}\d t.
	\end{align*}
	Equation~\eqref{eq:L1} then simplifies to, with the sum ranging from $0$ to $2N+M+2$,
	\begin{eqs}\label{eq:L1bis}
		&\Lc_{-1}^{(k)}[\Phireg]=\sum_{l\neq k}\frac{2c_kc_l}{x_l-x_k}+\Itr\left[\frac{2c_k}{x-x_k}\right].
	\end{eqs}
	
	\subsubsection{Descendants at level $2$}
	The derivative defined above corresponds to (the semi-classical limit) of a descendant of order $1$. We now turn to the descendants at level $2$, which allows to define the classical stress-energy tensor. This (regularized) descendant is defined as
	\begin{equation}
		\Lc_{-2}^{(k)}[\Phireg] \coloneqq \left(1-a_k\right)\partial^2\Phireg^{(k)}(z_k)-\frac12\left(\partial\Phireg^{(k)}(z_k)\right)^2
	\end{equation}
	for $x_k$ a bulk insertion, while for a boundary insertion $s_l$ we set
	\begin{equation}
		\Lc_{-2}^{(l)}[\Phireg] \coloneqq \frac12\left(1-b_l\right)\partial^2\Phireg^{(k)}(s_l)-\frac18\left(\partial\Phireg^{(l)}(s_l)\right)^2
	\end{equation}
	\begin{lemma}
		For any positive $\delta,\eps$ and $x_k$ any bulk or boundary insertion:
		\begin{eqs}\label{eq:L2}
			\Lc_{-2}^{(k)}[\Phireg]&=\sum_{l\neq k}\left(\frac{\delta_l}{(x_k-x_l)^2}+\frac{\Lc_{-1}^{(l)}[\Phireg]}{2(x_k-x_l)}\right)\\
			&+\Itr\left[\frac1{x_k-x}\left(\frac{c_k-1}{x_k-x}+\sum_{l\neq k}\frac{c_l}{x_l-x}\right)\right]-\frac12 \It^2\left[\frac1{(x-x_k)(y-x_k)}\right]
		\end{eqs}
		where $\delta_l\coloneqq -c_l\left(1+\frac12c_l\right)$ and $\It^2[f(x,y)]=\Itr[y\mapsto\Itr[f(x,y)]]$.
	\end{lemma}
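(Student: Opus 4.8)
The proof I have in mind is a direct computation, in the spirit of that of Lemma~\ref{lemma:formula_der}. Starting from the representation~\eqref{eq:key_formula2_reg} of $\Phireg$ through $\Hdiv$, $w_0$ and integrals against the explicit Green's function~\eqref{eq:green_H}, the plan is to compute $\partial\Phireg^{(k)}$ and $\partial^2\Phireg^{(k)}$ near $x_k$, substitute them into the definition of $\Lc_{-2}^{(k)}[\Phireg]$, and then reorganise the resulting rational and integral pieces with partial fractions and~\eqref{eq:L1bis}. All the pointwise manipulations below are legitimate since $\vphireg$ is smooth over $\overline\Sigma$ and since $x_k$ — like every other insertion point in $\bm x$ — lies outside $\mathrm{supp}(\Lambda_{\delta,\eps})\cup\mathrm{supp}(\sigma_{\delta,\eps})$, so the kernels $(\zeta-x)^{-2}$ appearing after differentiation are bounded on the support of the regularised curvatures and one may differentiate under the integral sign.

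First I would upgrade Lemma~\ref{lemma:formula_der} by keeping the dependence on the base point: differentiating~\eqref{eq:key_formula2_reg} once and cancelling, exactly as in that proof, the $\partial w_0$-terms through the singular Gauss--Bonnet formula~\eqref{eq:gauss_bonnet_sing}, one obtains for $\zeta$ in a punctured neighbourhood of $x_k$
\[
\partial\Phireg^{(k)}(\zeta)=\sum_{l\neq k}\frac{c_l}{\zeta-x_l}+\Itr\left[\frac{1}{\zeta-x}\right],
\]
the sum running over $\bm x\setminus\{x_k\}$; evaluating at $\zeta=x_k$ and multiplying by $-2c_k$ (by $-c_k$ if $x_k$ is a boundary insertion) returns~\eqref{eq:L1bis}. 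Moreover the $\overline\partial$-derivative of $\partial\Phireg^{(k)}$ is proportional to $\Lambda_{\delta,\eps}e^{\Phireg}$, which vanishes near $x_k$, so $\partial\Phireg^{(k)}$ is holomorphic there and a further differentiation in $\zeta$ gives
\[
\partial^2\Phireg^{(k)}(\zeta)=-\sum_{l\neq k}\frac{c_l}{(\zeta-x_l)^2}-\Itr\left[\frac{1}{(\zeta-x)^2}\right].
\]

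Next I would set $\zeta=x_k$ and plug these into $\Lc_{-2}^{(k)}[\Phireg]=(1-c_k)\partial^2\Phireg^{(k)}(x_k)-\tfrac12\big(\partial\Phireg^{(k)}(x_k)\big)^2$ (and into the analogous normalisation $\tfrac12(1-b_l)\partial^2-\tfrac18(\partial)^2$ for a boundary insertion, which is chosen precisely so the boundary case yields the same final formula, its compensating factors $\tfrac12$ matching those carried by the boundary parts of $\Itr$ and of the divisor). Expanding the square, the mixed term gives $-\Itr\left[\tfrac1{x_k-x}\right]\sum_{l\neq k}\tfrac{c_l}{x_k-x_l}$ and the square of the integral gives $-\tfrac12\Itr\left[\tfrac1{x_k-x}\right]^2=-\tfrac12\It^2\left[\tfrac1{(x-x_k)(y-x_k)}\right]$, the last term of~\eqref{eq:L2}; the purely rational part $\big(\sum_{l\neq k}\tfrac{c_l}{x_k-x_l}\big)^2$ splits, via $\tfrac1{(x_k-x_l)(x_k-x_{l'})}=\tfrac1{x_{l'}-x_l}\big(\tfrac1{x_k-x_{l'}}-\tfrac1{x_k-x_l}\big)$, into a diagonal contribution $\sum_l\tfrac{c_l^2}{(x_k-x_l)^2}$ and an off-diagonal contribution $2\sum_l\tfrac1{x_k-x_l}\sum_{l'\neq l,k}\tfrac{c_lc_{l'}}{x_l-x_{l'}}$.

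The hard part is then the bookkeeping identifying what remains as $\sum_{l\neq k}\big(\tfrac{\delta_l}{(x_k-x_l)^2}+\tfrac{\Lc_{-1}^{(l)}[\Phireg]}{2(x_k-x_l)}\big)+\Itr[\,\cdot\,]$. For this I would insert $\Lc_{-1}^{(l)}[\Phireg]=\sum_{m\neq l}\tfrac{2c_lc_m}{x_m-x_l}+\Itr\left[\tfrac{2c_l}{x-x_l}\right]$ and split it into three groups. Its $m=k$ term produces $\tfrac{c_kc_l}{(x_k-x_l)^2}$, which is precisely the gap between the coefficient of $(x_k-x_l)^{-2}$ obtained above, $-(1-c_k)c_l-\tfrac12 c_l^2$, and $\delta_l=-c_l(1+\tfrac12 c_l)$; its terms with $m\neq k$ reproduce, after relabelling (the sign flip of the denominator absorbing the $-\tfrac12$), the off-diagonal single-pole sum; and its integral piece $\sum_{l\neq k}\tfrac1{x_k-x_l}\Itr\left[\tfrac{c_l}{x-x_l}\right]=\Itr\big[\sum_{l\neq k}\tfrac{c_l}{(x_k-x_l)(x-x_l)}\big]$ combines with $-(1-c_k)\Itr\left[\tfrac1{(x_k-x)^2}\right]$ and $-\Itr\left[\tfrac1{x_k-x}\right]\sum_{l\neq k}\tfrac{c_l}{x_k-x_l}$ into the single integral $\Itr\big[\tfrac1{x_k-x}\big(\tfrac{c_k-1}{x_k-x}+\sum_{l\neq k}\tfrac{c_l}{x_l-x}\big)\big]$, using for each $l$ the identity $-\tfrac1{(x_k-x)(x_k-x_l)}=\tfrac1{(x_k-x)(x_l-x)}+\tfrac1{(x_k-x_l)(x-x_l)}$. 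Collecting the three groups gives~\eqref{eq:L2}. The only genuine subtlety is tracking the factor-$2$ differences between the bulk and boundary normalisations so that a single formula covers both cases, and performing the relabellings of the double sums consistently.
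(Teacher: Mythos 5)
Your proof is correct and follows essentially the same route as the paper: differentiate the Green's-function representation \eqref{eq:key_formula2_reg} twice (the $\partial w_0$ terms cancelling via Gauss--Bonnet exactly as in Lemma~\ref{lemma:formula_der}), substitute into the definition of $\Lc_{-2}^{(k)}$, and reorganise using the partial-fraction identity $\frac{1}{(z-x_k)(x_l-x_k)}=\frac{1}{(z-x_l)(x_l-x_k)}+\frac{1}{(z-x_k)(x_l-z)}$. Your bookkeeping checks out, including the signs in $\partial^2\Phireg^{(k)}$ (which are the ones consistent with the coefficient $c_k-1$ in \eqref{eq:L2}) and the factor-of-two matching between the bulk and boundary normalisations.
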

	\begin{proof}
		Based on the previous computations we have for a bulk insertion
		\begin{align*}
			&\partial^2\Phireg^{(k)}(z_k)=\sum_{l\neq k}\frac{c_l}{(x_l-z_k)^2}+\It\left[\frac{1}{(x-z_k)^2}\right]\qt{and}\left(\partial\Phireg^{(k)}(z_k)\right)^2=\\
			&\sum_{l,m\neq k}\frac{c_lc_m}{(x_l-z_k)(x_m-z_k)}+2\Itr\left[\sum_{l\neq k}\frac{c_l}{(x-z_k)(x_l-z_k)}\right]+\It^2\left[\frac1{(x-z_k)(y-z_k)}\right].
		\end{align*}
		The claim then follows by distinguishing the cases $l=m$ and $l\neq m$ in the sum over $l,m\neq k$ and by writing $\frac{1}{(z-x_k)(x_l-x_k)}=\frac{1}{(z-x_l)(x_l-x_k)}+\frac{1}{(z-x_k)(x_l-z)}$ with $z$ either $x_m$ or $x$. For a boundary insertion similar computations remain valid.
	\end{proof}
	
	
	
	\subsection{Taking the limit: classical Ward identities}
	We now investigate the $\delta,\eps\to0$ limit of the expressions obtained above. The main issue lies in the fact that the singularities that appear there may not be integrable as $\eps,\delta\to0$ so that a proper limiting procedure is needed.
	\subsubsection{Taking the limit: first derivative}
	We first focus on the first derivative:
	\begin{lemma}\label{lemma:L1}
		For $0\leq k\leq N$ the limit $\Lc_{-1}^{(k)}[\Phidiv]\coloneqq\lim\limits_{\delta,\eps\to0}\Lc_{-1}^{(k)}[\Phireg]$ is well defined. 
		The same is valid for a boundary insertion by considering for $N+1\leq l\leq N+M+1$:
		\begin{equation}\label{eq:L1_lim}
			\Lc_{-1}^{(l)}[\Phidiv]\coloneqq\lim\limits_{\delta,\eps\to0}\left(\Lc_{-1}^{(l)}[\Phi_{\delta,\eps}]+\eval{x_l+\eps}{x_l-\eps}{-}{\frac{\sigma(t)}\pi e^{\frac12\Phireg(t)}}\right).
		\end{equation}
	\end{lemma}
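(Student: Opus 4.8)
The idea is to read the limit directly off the exact identities of Lemma~\ref{lemma:formula_der}. By~\eqref{eq:L1} (respectively~\eqref{eq:L1_bord}) the quantity $\Lc_{-1}^{(k)}[\Phireg]$ is the sum of a finite algebraic expression in the punctures that does \emph{not} depend on $\delta,\eps$, of the bulk integral $\frac1{2\pi}\int_\H\bigl(\frac{a_k}{x-z_k}+\frac{a_k}{\bar x-z_k}\bigr)\Lambda_{\delta,\eps}e^{\Phireg}\,\d^2x$ (and its boundary analogue), and of the boundary integral $\frac1\pi\int_\R\frac{a_k}{t-z_k}\sigma_{\delta,\eps}e^{\frac12\Phireg}\,\d t$; so it suffices to analyse the $\delta,\eps\to0$ behaviour of these two integrals. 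Throughout I would use Lemma~\ref{lemma:approx} (the convergence $\vphireg\to\vphidiv$ in $H^1$ and the convergence of $L^p\times L^q$ pairings against $e^{\Phireg}$, $e^{\frac12\Phireg}$), together with the uniform bound on $\nnorm{\vphireg}_{H^1}$ and elliptic regularity for $-\Delta\vphireg=\Lambda_{\delta,\eps}e^{\Phireg}$ with Robin data $\partial_{\vec n}\vphireg=-2\sigma_\eps e^{\frac12\Phireg}$, which gives a uniform $C^{0,\alpha}$ bound for $\vphireg$ on compacts and a uniform $C^{1,\alpha}$ bound away from $\bm z\cup\partial\Sigma$, hence local uniform convergence $\vphireg\to\vphidiv$. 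Recall also that $\Lambda_{\delta,\eps}$ and $\sigma_\eps$ vanish near $\bm z$, so $\vphireg$ is harmonic (with homogeneous Neumann data, when the puncture is on $\R$) in a neighbourhood of each puncture of radius bounded below, and $\Phireg^{(k)}$ is smooth there; in particular $\Lc_{-1}^{(k)}[\Phireg]$ is an honest number for every $\delta,\eps>0$.

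For a bulk insertion $z_k\in\H$ the boundary integral is harmless: $t\mapsto\frac1{t-z_k}$ is smooth and bounded on $\R$, so by Lemma~\ref{lemma:approx} it converges to $\frac1\pi\int_\R\frac{a_k}{t-z_k}\sigma e^{\frac12\Phidiv}\,\d t$. For the bulk integral I would cut $\H$ into a fixed small disk $B(z_k,r_0)$ and its complement; on the complement the kernel is bounded and Lemma~\ref{lemma:approx} applies directly. On $B(z_k,r_0)$ only the holomorphic pole $\frac{a_k}{x-z_k}$ contributes (the pole at $\bar z_k$ lies outside $\H$), and I would write $e^{\Phireg}=\norm{x-z_k}^{2a_k}e^{\vphireg+\tilde H^{(k)}}$ with $\tilde H^{(k)}:=\Hdiv-2a_k\ln\norm{\cdot-z_k}$ fixed and smooth near $z_k$. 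In polar coordinates $x=z_k+\rho e^{\im\theta}$, the angular integral of the integrand is $a_k\rho^{2a_k}$ times the first Fourier coefficient of $\theta\mapsto\Lambda_{\delta,\eps}e^{\vphireg+\tilde H^{(k)}}(z_k+\rho e^{\im\theta})$; since the kernel has vanishing angular average, this coefficient sees only the non-radial part of the (uniformly controlled) cofactor, and the leading \emph{radial} Liouville correction $\propto\norm{x-z_k}^{2a_k+2}$ drops out, leaving radial integrals of the type $\int_0^{r_0}\rho^{2a_k+1}(\cdots)\,\d\rho$, convergent since $a_k>-1$, with integrands bounded uniformly in $\delta,\eps$. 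A short iteration on the remainder gives convergence, and the limit is identified with the corresponding integral for $\Phidiv$; hence $\Lc_{-1}^{(k)}[\Phireg]$ converges.

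For a boundary insertion $s_l\in\R$ the same dichotomy handles the bulk integral over $\H$: near $s_l$ the kernel is $\frac{b_l}{x-s_l}+\frac{b_l}{\bar x-s_l}=\frac{2b_l\cos\theta}{\rho}$ with $\theta\in(0,\pi)$, and $\int_0^\pi\cos\theta\,\d\theta=0$ cancels the leading singular contribution, leaving a convergent half-disk integral. The genuine divergence is carried by the boundary integral $\frac1\pi\int_\R\frac{b_l}{t-s_l}\sigma_{\delta,\eps}e^{\frac12\Phireg}\,\d t$, which is exactly why the correction term appears in~\eqref{eq:L1_lim}. Near $s_l$ one has $\Phireg=\Phireg^{(l)}+2b_l\ln\norm{\cdot-s_l}$, hence $\frac{b_l}{t-s_l}=\frac12\partial_t\bigl(\Phireg-\Phireg^{(l)}\bigr)$, so that
\[
\frac{b_l}{t-s_l}\,\sigma_{\delta,\eps}\,e^{\frac12\Phireg}=\sigma_{\delta,\eps}\,\partial_t\bigl(e^{\frac12\Phireg}\bigr)-\tfrac12\,\sigma_{\delta,\eps}\,e^{\frac12\Phireg}\,\partial_t\Phireg^{(l)}.
\]
The second term is governed by $\norm{t-s_l}^{b_l}$ times a factor that converges locally uniformly; its integrability near $s_l$ is obtained from the conical structure of $\Phireg^{(l)}$ there, in parallel with the treatment of the bulk integral above. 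Integrating the first term by parts and using that $\sigma_{\delta,\eps}$ is piecewise constant, the only surviving boundary contributions sit at the regularisation cutoffs $t=s_l\pm\eps$; after multiplication by $\frac1\pi$ they equal, up to errors that vanish as $\eps\to0$ because the transition annulus $\eps-\eps^{10}<\norm{t-s_l}<\eps$ is thin, exactly $-\frac1\pi\bigl(\sigma(s_l+\eps)e^{\frac12\Phireg(s_l+\eps)}-\sigma(s_l-\eps)e^{\frac12\Phireg(s_l-\eps)}\bigr)$, i.e.\ minus the term added in~\eqref{eq:L1_lim}. Adding that term back therefore cancels the divergence, and the remaining pieces pass to the limit by the local uniform convergence of $e^{\frac12\Phireg}$ and of its tangential derivative up to $\R$.

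The main obstacle is the \emph{uniform in $(\delta,\eps)$} control near the punctures: one needs elliptic estimates for a source that is only $L^p$-bounded for finite $p$ (and, on $\R$, for Robin data that is itself singular), refined enough to exhibit that the singular part of $\vphireg$ at $z_k$ (resp.\ $s_l$) carries the expected radial $\norm{x-z_k}^{2a_k+2}$ (resp.\ conical) structure with uniform constants — this is what makes the angular and oddness cancellations effective for every weight in $(-1,0)$; for $a_k,b_l>-1/2$ the regularised part $\Phidiv^{(k)}$ is genuinely $C^1$ at the puncture and the statement follows at once from local uniform $C^1$ convergence, so the whole point is the more singular range. A secondary, purely computational check is that the integration-by-parts boundary terms match the correction term of~\eqref{eq:L1_lim} modulo errors that disappear with the regularisation.
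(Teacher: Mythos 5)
Your plan follows essentially the same route as the paper's proof: both start from the exact identities of Lemma~\ref{lemma:formula_der}, split the integrals into a far region (handled by integrability of $e^{\Phidiv}$ and Lemma~\ref{lemma:approx}), the thin transition collar $\eps-\eps^{10}<\norm{t-x_l}<\eps$ (negligible because the integrand is $O(\eps^{b_l-1})$ on a set of width $\eps^{10}$), and a punctured neighbourhood where one integrates by parts. Your identity $\frac{b_l}{t-s_l}\sigma e^{\frac12\Phireg}=\sigma\partial_t(e^{\frac12\Phireg})-\frac12\sigma e^{\frac12\Phireg}\partial_t\Phireg^{(l)}$ is algebraically the same as the paper's rewriting $\frac{c_l}{t-x_l}\sigma e^{\frac12\Phireg}=\partial_t\big(e^{c_l\ln\norm{t-x_l}}\big)\sigma e^{\frac12\Phireg^{(l)}}$, and it produces exactly the boundary terms at $t=s_l\pm\eps$ that the counter-term in~\eqref{eq:L1_lim} removes. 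Your polar-coordinate, angular-mean-zero treatment of the bulk integral is a rephrasing of the paper's Stokes-formula computation with the exact derivative $(\partial_x+\partial_{\bar x})e^{2c_l\ln\norm{x-x_l}}$, which reduces everything to a circle integral with a well-defined limit.

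The point you yourself flag as ``the main obstacle'' is, however, exactly where the write-up stops being a proof. After the integration by parts you are left with $\int_{\eps<\norm{t-s_l}<r}\partial_t\big(\sigma e^{\frac12\Phireg^{(l)}}\big)\,\norm{t-s_l}^{b_l}\,\d t$, and you must show this converges as $\delta,\eps\to0$. Invoking ``the conical structure of $\Phireg^{(l)}$'' is not enough: from the representation~\eqref{eq:key_formula2_reg} one gets a priori only $\partial_t\Phireg^{(l)}(t)=O(\norm{t-s_l}^{b_l})$ when $b_l<0$, so the integrand is of size $\norm{t-s_l}^{2b_l}$, which is not absolutely integrable for $b_l\leq-\tfrac12$; one needs either a cancellation between the two sides of $s_l$ (delicate when $\sigma(s_l^-)\neq\sigma(s_l^+)$) or a finer expansion of $\Phireg^{(l)}$ near $s_l$ with constants uniform in $\delta,\eps$. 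The elliptic-regularity route you propose ($W^{2,p}$ for the Robin problem with $L^p$ data) does not by itself deliver this, precisely because $p<2$ in the singular range you correctly identify as the whole point; the natural source of the required control is the exact Green-function representation~\eqref{eq:key_formula2_reg}, i.e.\ the same mechanism underlying Lemma~\ref{lemma:formula_der}, applied to $\partial_t\Phireg^{(l)}$ itself. Until that quantitative step is carried out, the convergence of the post-integration-by-parts remainder for $b_l\in(-1,-\tfrac12]$ — which is the substantive content of the lemma — is asserted rather than proved.
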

	Here we have introduced the notation $\eval{b}{a}{\hspace{0.5cm}\pm}{F(t)}\coloneqq F(b)\pm F(a)$.
	\begin{proof}
		Let us consider a boundary insertion $x_l$. We write supp$(\sigma_{\eps})$ as the disjoint union of $\norm{t-x_l}>r$, $r\geq\norm{t-x_l}>\eps$ and $\eps\geq \norm{t}\geq \eps-\eps^{10}$. Over $r\geq\norm{t-x_l}>\eps$ we can write
		\[
		\frac{c_l}{t-x_l}\sigma(t) e^{\frac12\Phi_{\delta,\eps}(t)}=\partial_t \left(e^{c_l\ln\norm{t-x_l}}\right)\sigma(t)e^{\frac12\Phireg^{(l)}(t)},\qt{hence}
		\]
		\begin{align*}
			\int_{x_l-r}^{x_l-\eps}\frac{c_l}{t-x_l}\sigma(t) e^{\frac12\phi_{\delta,\eps}(t)}\d t=\eval{x_l-\eps}{x_l-r}{-}{\sigma(t) e^{\frac12\phi_{\delta,\eps}(t)}}-\int_{x_l-r}^{x_l-\eps}\partial_t \left(\sigma(t)e^{\frac12\phireg^{(l)}(t)}\right)e^{c_l\ln\norm{t-x_l}}\d t.		
		\end{align*}
		In the above expression all the terms except for $\sigma(x_l-\eps) e^{\frac12\phi_{\delta,\eps}(x_l-\eps)}$ admit a well-defined limit as $\eps\to0$. As a consequence we infer that the following limit exists:
		\[
		\lim\limits_{\delta,\eps\to0}\int_{x_l-r}^{x_l-\eps}\frac{c_l}{t-x_l}\sigma(t) e^{\frac12\phi_{\delta,\eps}(t)}\d t+\eval{x_l+\eps}{x_l-\eps}{-}{\sigma(t) e^{\frac12\phireg(t)}}.
		\]
		For the part over $\eps\geq \norm{t}\geq \eps-\eps^{10}$ we use boundedness of $\eps^{1-b_l}\frac1{t-x_l}\sigma(t) e^{\frac12\phireg(t)}$ to see that 
		\[
		\int_{\eps\geq \norm{t}\geq \eps-\eps^{10}}\frac1{t-x_l}\sigma(t) e^{\frac12\phireg(t)}\d t
		\]
		vanishes as $\eps\to0$. As for the integral over $\norm{t}>r$ it is uniformly bounded as $\eps\to0$ since $e^{\frac12\Phidiv}$ is integrable. This entails that following limit exists:
		\[
		\lim\limits_{\delta,\eps\to0}\int_{\R}\frac{c_l}{t-x_l}\sigma_\eps(t) e^{\frac12\phi_{\delta,\eps}(t)}\d t+\eval{x_l+\eps}{x_l-\eps}{-}{\sigma(t) e^{\frac12\phireg(t)}}\qt{ and is given by}
		\]
		\begin{align*}
			\int_{\norm{t-x_l}>r}\frac{c_l}{t-s_l}\sigma(t) e^{\frac12\phidiv(t)}\d t+\eval{x_l+r}{x_l-r}{-}{\sigma(t) e^{\frac12\phidiv(t)}}-\int_{\norm{t-x_l}<r}\partial_t \left(e^{\frac12\phidiv^{(l)}(t)}\right)\sigma(t)e^{c_l\ln\norm{t-x_l}}\d t.
		\end{align*}
		For the integral over $\H$ the same method remains valid by writing
		\[
		\left(\frac{c_l}{x-x_l}+\frac{c_l}{\overline x-x_l}\right)\Lambda_{\delta,\eps} e^{\phi_{\delta,\eps}(x)}=\left(\partial_x+\partial_{\bar x}\right) \left(e^{2c_l\ln\norm{x-x_l}}\right)\Lambda_{\delta,\eps} e^{\phireg^{(l)}(x)}.
		\]
		Then Stokes' formula over $B_{\delta}\coloneqq B(x_l,r)\cap\H_\delta$ on the above gives the boundary term
		\[
		\int_{\partial B(x_l,r)\cap\H_\delta}\Lambda e^{\phi_{\delta,\eps}(\xi)}\im\frac{d\bar\xi-d\xi}{2}
		\]
		because the contribution of $B(x_l,r)\cap\partial\H_\delta$ is zero since we considered the differential operator $\partial_x+\partial_{\bar x}$. This integral has a well-defined limit via Lemma~\ref{lemma:cont}. Hence the limit 
		in Equation~\eqref{eq:L1} exists. The same holds for a bulk insertion, with the boundary terms vanishing in the $\delta,\eps\to0$ limit, concluding the proof.
	\end{proof}	
	Recollecting terms we see that the descendant is explicitly given by
	\begin{eqs}\label{eq:L1_explicit}
		\Lc_{-1}^{(l)}[\Phidiv]=&\eval{x_l+r}{x_l-r}{-}{\frac{\sigma(t)}\pi e^{\frac12\Phidiv(t)}}+\frac1{4\pi}\int_{\partial B(x_l,r)\cap\H}\Lambda e^{\phidiv(\xi)}\im\left(d\bar\xi-d\xi\right)\\
		&+\Ir\left[\frac{2c_l}{x-x_l}\right]-\Is\left[\left(1+\mathds{1}_{x\in\H}\right)\partial_x\Phidiv^{(l)}(x)\right]
	\end{eqs}
	where in analogy with $\It$ we have set
	\begin{align*}
		&\Ir\left[f(x)\right]\coloneqq \frac1{4\pi}\int_{\norm{x-x_l}>r}\left(f(x)+\overline{f(x)}\right)\Lambda e^{\Phidiv(x)}\d^2x+\frac{1}{2\pi}\int_{\norm{t-x_l}>r}f(t)\sigma(t) e^{\frac12\Phidiv(t)}\d t\\
		&\Is\left[f(x)\right]\coloneqq \frac1{4\pi}\int_{\norm{x-x_l}<r}\left(f(x)+\overline{f(x)}\right)\Lambda e^{\Phidiv(x)}\d^2x+\frac{1}{2\pi}\int_{\norm{t-x_l}<r}f(t)\sigma(t) e^{\frac12\Phidiv(t)}\d t.
	\end{align*} 
	From Equation~\eqref{eq:L1_lim} it is independent of $r>0$ small enough. Moreover in the case where $c_l\geq 0$ we can take $r=0$ in the above which gives us the following explicit expression:
	\begin{equation}
		\Lc_{-1}^{(l)}[\Phidiv]=\frac{\sigma\left(x_l^+\right)-\sigma\left(x_l^-\right)}\pi e^{\frac12\Phidiv(x_l)}\mathds 1_{c_l=0}+\It\left[\frac{2c_l}{t-x_l}\right]\mathds 1_{c_l>0}.
	\end{equation}
	This descendant has the fundamental property that it coincides with the (weak) derivative of the Liouville action. Namely we will show in Proposition~\ref{prop:accessory} that in the weak sense of derivatives and for any $0\leq k\leq2N+M+2$
	\begin{equation}
		\Lc_{-1}^{(k)}[\Phidiv]=-\partial_{x_k}S_{\bm z,\bm a}.
	\end{equation}
	
	\subsubsection{Limit for the stress-energy tensor and Ward identities}
	We now turn to the analog limit for the descendant at level $2$. The method is the same as for the descendant $\Lc_{-1}$:
	\begin{lemma}\label{lemma:L2}
		For any $0\leq k\leq N$ the following limit exists and is well-defined:
		\begin{equation}
			\Lc_{-2}^{(k)}[\Phidiv]\coloneqq\lim\limits_{\delta,\eps\to0} \Lc_{-2}^{(k)}[\Phireg].
		\end{equation}
		For a boundary insertion the same statement holds true if we consider
		\begin{eqs}\label{eq:L2_lim}
			&\Lc_{-2}^{(l)}[\Phidiv]\coloneqq\lim\limits_{\delta,\eps,\rho\to0} \left(\Lc_{-2}^{(l)}[\Phireg]-\mathfrak{R}^{(l)}_{\delta,\eps}\right),\qt{where}\\
			&\mathfrak{R}^{(l)}_{\delta,\eps}=-\frac{1}{2\eps}\eval{x_l+\eps}{x_l-\eps}{+}{\frac{\sigma(t)}\pi e^{\frac12\Phireg(t)}}+\frac{\Lambda}{4\pi}\int_{(-\frac r\delta,\frac r\delta)}\frac{1}{1+t^2}  e^{\Phireg(s_l+\delta(t+i))}\d t.
		\end{eqs}
	\end{lemma}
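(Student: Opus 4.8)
The plan is to push the argument of Lemma~\ref{lemma:L1} to the second descendant. Starting from the exact identity \eqref{eq:L2}, I would take the $\delta,\eps\to0$ limit term by term. The constant $\sum_{l\neq k}\delta_l/(x_k-x_l)^2$ is inert; the first--descendant sum $\sum_{l\neq k}\Lc_{-1}^{(l)}[\Phireg]/\bigl(2(x_k-x_l)\bigr)$ converges by Lemma~\ref{lemma:L1} (restoring, for a boundary label $l\neq k$, the finite correction of \eqref{eq:L1_lim}, which is harmless since $x_l\neq x_k$); and $-\tfrac12\It^2\bigl[\tfrac1{(x-x_k)(y-x_k)}\bigr]=-\tfrac12\bigl(\Itr[\tfrac1{x-x_k}]\bigr)^2$ is under control because, by \eqref{eq:L1bis}, $\Itr[\tfrac1{x-x_k}]$ is $\Lc_{-1}^{(k)}[\Phireg]$ up to a fixed sum and the factor $2c_k$, so its limiting behaviour is given by \eqref{eq:L1_lim}--\eqref{eq:L1_explicit}. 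This reduces the statement to the one genuinely new object, the diagonal kernel with a \emph{double} pole at $x_k$,
\[
\Itr\Bigl[\frac{c_k-1}{(x_k-x)^2}\Bigr],
\]
the off--diagonal kernels $\tfrac1{(x_k-x)(x_l-x)}$ of \eqref{eq:L2} carrying only simple poles and yielding to the Lemma~\ref{lemma:L1} technique.

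For a bulk insertion $z_k$ I expect no remainder. Near $z_k$ only the $\H$--part of $\Itr$ is singular, and I would use the exact angular cancellation $\int_0^{2\pi}(z_k-x)^{-2}\,\d\theta=0$: the kernel pairs only with the second angular Fourier mode of $e^{\Phireg}$ around $z_k$. Writing $e^{\Phireg}=\norm{\cdot-z_k}^{2a_k}e^{\Phireg^{(k)}}$, the transition shell $\eps-\eps^{10}<\norm{x-z_k}<\eps$ contributes $O(\eps^{2a_k+9})\to0$ by a crude size bound; on $\{\norm{x-z_k}>\eps\}$, where $\Lambda_{\delta,\eps}\equiv\Lambda$, one uses that the non-radial Fourier modes of $\Phireg^{(k)}$ around $z_k$ decay like $\norm{x-z_k}^{|n|}$ --- the loss of Sobolev regularity at $z_k$ being confined to the radial mode, which $(z_k-x)^{-2}$ does not see --- so the second--mode density is met with a gain of $\norm{x-z_k}^{2}$, giving absolute convergence uniformly in $\delta,\eps$ (via Proposition~\ref{prop:moser_trudinger} and elliptic regularity on \eqref{eq:L1bis}). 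Passage to the limit then follows from Lemma~\ref{lemma:approx}.

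For a boundary insertion $s_l$ the same scheme works but is where $\mathfrak R^{(l)}_{\delta,\eps}$ is born, since both the $\R$-- and the $\H$--parts of $\Itr$ are singular at $s_l$. On $\R$ there is no angular freedom, so I would integrate by parts \emph{twice}: writing $\norm{t-s_l}^{a_l}(s_l-t)^{-2}$ as a second $t$--derivative of $\norm{t-s_l}^{a_l}$ and transferring the derivatives onto $\sigma e^{\frac12\Phireg^{(l)}}$, the evaluations at $t=s_l\pm\eps$ produce exactly the divergent term $-\tfrac1{2\eps}\eval{s_l+\eps}{s_l-\eps}{+}{\tfrac{\sigma(t)}\pi e^{\frac12\Phireg(t)}}$, all other boundary terms and the residual integral having limits (the transition interval again being negligible). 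On $\H$, one first writes $\bigl(\tfrac{c_l-1}{(s_l-x)^2}+\overline{(\,\cdot\,)}\bigr)\norm{x-s_l}^{2a_l}$ as the image of $\norm{x-s_l}^{2a_l}$ under the second--order real operator that is, in the spirit of Lemma~\ref{lemma:L1}, adapted to the line $\partial\H_\delta$; Stokes over $B(s_l,r)\cap\H_\delta$ then transfers the singularity onto the now-integrable weight $\norm{x-s_l}^{2a_l}$, leaving an absolutely convergent area integral, a harmless integral on $\partial B(s_l,r)$, and a line integral along $\partial\H_\delta$ which, after the change of variables $x=s_l+\delta(t+i)$, is exactly $\tfrac{\Lambda}{4\pi}\int_{(-r/\delta,r/\delta)}\tfrac1{1+t^2}e^{\Phireg(s_l+\delta(t+i))}\,\d t$. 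Subtracting $\mathfrak R^{(l)}_{\delta,\eps}$ removes precisely these two non-convergent pieces, and Lemma~\ref{lemma:approx} (with $\vphireg\to\vphidiv$ in $H^1$) identifies the limit.

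The hard part will be the boundary case, on two counts. First, one must determine the \emph{exact} asymptotics of the two divergences so that $\mathfrak R^{(l)}_{\delta,\eps}$ cancels them on the nose and no residual power of $\eps$ or $\delta$ survives; the delicate region is the collar $\{\Im x\sim\delta,\ \norm{x-s_l}\sim\eps\}$, where the bulk cutoff $\Lambda_{\delta,\eps}$ and the puncture cutoff interact and where $\sigma$ may be discontinuous, so the limits must be taken in the order ``$\eps$ then $\delta$'' as in Lemma~\ref{lemma:approx}. Second, as a consistency check forced by the exactness of \eqref{eq:L2}, one must verify that the subleading $\eps$--powers generated by the double--pole boundary integral cancel against those hidden in $-\tfrac12(\Itr[\tfrac1{x-x_k}])^2$. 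The bulk case, by contrast, only requires the robust observation that the a priori non-integrable kernel $(z_k-x)^{-2}$ pairs with a second--mode density that vanishes to order $\norm{x-z_k}^{2}$.
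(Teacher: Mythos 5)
Your reading of where the two pieces of $\mathfrak R^{(l)}_{\delta,\eps}$ come from is correct: the $-\tfrac1{2\eps}$ term is indeed produced by the double pole acting on the $\R$-part of $\Itr$, the $\tfrac{\Lambda}{4\pi}\int(1+t^2)^{-1}e^{\Phireg(s_l+\delta(t+\im))}\,\d t$ term by a line integral along $\Im x=\delta$, and your bulk argument (angular orthogonality of $\mathfrak{Re}\,(z_k-x)^{-2}$ against the low Fourier modes of $e^{\Phireg^{(k)}}$) is essentially the computation the paper performs on the circles $\partial B(z_l,\eps)$. But the term-by-term route has a genuine gap in the boundary case. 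The correction $\eval{x_m+\eps}{x_m-\eps}{-}{\tfrac{\sigma(t)}\pi e^{\frac12\Phireg(t)}}$ of \eqref{eq:L1_lim} is \emph{not} finite: since $e^{\frac12\Phireg(x_m\pm\eps)}\asymp\eps^{b_m}$, it blows up whenever $b_m<0$. "Restoring" it inside $\sum_{m\neq k}\Lc_{-1}^{(m)}[\Phireg]/(2(x_k-x_m))$ therefore modifies that sum by a divergent quantity, which must be cancelled by the divergence of the cross kernels $\Itr\bigl[\tfrac{1}{x_k-x}\tfrac{c_m}{x_m-x}\bigr]$ near $x_m$. Moreover those same cross kernels have a simple pole at $x_k$ producing an $\eps^{c_k}$-divergence, and $-\tfrac12\It^2$ diverges like $\eps^{2c_k}$ because $\Itr\bigl[\tfrac1{x-x_k}\bigr]$ itself diverges (already when $\sigma$ jumps at $x_k$, or when $c_k<0$). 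None of these divergences is subtracted by $\mathfrak R^{(l)}_{\delta,\eps}$, so they must cancel \emph{among themselves}; you defer this to a "consistency check", but this mutual cancellation is precisely the content of the lemma in the boundary case, and your proposal supplies no mechanism for it.

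The paper supplies that mechanism in one stroke (the lemma is proved inside the proof of Proposition~\ref{prop:L2}). Differentiating the representation \eqref{eq:key_formula2_reg} gives the exact identity
\begin{equation*}
\partial_x\Bigl(\frac{e^{\Phireg(x)}}{x_k-x}\Bigr)=\frac1{x_k-x}\Bigl(\frac{c_k-1}{x_k-x}+\sum_{m\neq k}\frac{c_m}{x_m-x}\Bigr)e^{\Phireg(x)}+\It\Bigl[\frac{1}{(x_k-x)(y-x)}\Bigr]e^{\Phireg(x)},
\end{equation*}
so the entire right-hand side of \eqref{eq:L2} collapses to $\Itr\bigl[\partial_x\bigl(\tfrac{e^{\Phireg}}{x_k-x}\bigr)e^{-\Phireg}\bigr]$ plus a two-fold integral that vanishes by the symmetry $\tfrac{1}{(x_k-x)(y-x)}+\tfrac{1}{(x_k-y)(x-y)}=-\tfrac{1}{(x_k-x)(x_k-y)}$. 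A single integration by parts/Stokes then produces \emph{only} boundary terms: evaluations at $s_m\pm\eps$ (which are exactly the corrections turning each $\Lc_{-1}^{(m)}[\Phireg]$ into its renormalized limit), circle integrals on $\partial B(z_m,\eps)$ (which vanish by your angular argument), and the line $\Im x=\delta$ together with the evaluations at $x_k\pm\eps$ (which constitute $\mathfrak R^{(l)}_{\delta,\eps}$). All the cancellations you would otherwise have to track by hand are thus automatic. If you wish to keep your decomposition, you must carry out the explicit matching of the $\eps^{b_m}$-, $\eps^{c_k}$- and $\eps^{2c_k}$-divergences; otherwise the total-derivative identity above is the one structural ingredient your proposal is missing.
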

	\begin{remark}
		This remainder can be simplified to
		\begin{eqs}\label{eq:L2_expr}
			&-\frac{1}{2\eps}\eval{x_l+\eps}{x_l-\eps}{+}{\frac{\sigma(t)}\pi e^{\frac12\Phireg(t)}}+2^{c_l}\sin\left(\pi c_l\right)\frac{\Gamma(c_l)\Gamma(1-2c_l)}{\Gamma(1-c_l)}\mathds{1}_{c_l\leq 0}\frac{\Lambda}{4\pi}e^{\Phireg(x_l+i\delta)}+l.o.t.
		\end{eqs}
		by using that the following integral can be evaluated (see~\cite[Lemma A.1]{Cer_HEM}):
		\[
		\int_\R \left(1+t^2\right)^{u-1}\d t=2^{u}\sin\left(\pi u\right)\frac{\Gamma(u)\Gamma(1-2u)}{\Gamma(1-u)}\cdot
		\]
	\end{remark}
	This statement is actually a consequence of the (proof of the) following Ward identities:
	\begin{proposition}\label{prop:L2}
		For a bulk or a boundary insertion the local Ward identities hold:
		\begin{equation}\label{eq:ward}
			\Lc_{-2}^{(k)}[\Phidiv]=\sum_{l\neq k}\left(\frac{\delta_l}{(x_k-x_l)^2}+\frac{\Lc_{-1}^{(l)}[\Phidiv]}{2(x_k-x_l)}\right).
		\end{equation}
	\end{proposition}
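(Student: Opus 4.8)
The plan is to pass to the limit $\delta,\eps\to0$ (together with $\rho\to0$ when $x_k$ is a boundary insertion) in the exact regularized identity \eqref{eq:L2}, which holds for \emph{every} positive $\delta,\eps$; carrying this out will simultaneously establish the existence of the limit defining $\Lc_{-2}^{(k)}[\Phidiv]$ claimed in Lemma~\ref{lemma:L2}. Write the right-hand side of \eqref{eq:L2} as $\sum_{l\neq k}\big(\frac{\delta_l}{(x_k-x_l)^2}+\frac{\Lc_{-1}^{(l)}[\Phireg]}{2(x_k-x_l)}\big)+B_{\delta,\eps}$, with $B_{\delta,\eps}\coloneqq\Itr\big[\frac1{x_k-x}\big(\frac{c_k-1}{x_k-x}+\sum_{l\neq k}\frac{c_l}{x_l-x}\big)\big]-\frac12\It^2\big[\frac1{(x-x_k)(y-x_k)}\big]$. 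The goal is to show that the first sum converges to $\sum_{l\neq k}\big(\frac{\delta_l}{(x_k-x_l)^2}+\frac{\Lc_{-1}^{(l)}[\Phidiv]}{2(x_k-x_l)}\big)$ while $B_{\delta,\eps}$, after subtracting the counterterm $\mathfrak R^{(l)}_{\delta,\eps}$ of \eqref{eq:L2_lim} in the boundary case (and nothing in the bulk case), tends to $0$.

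The meromorphic sum is governed by Lemma~\ref{lemma:L1}: the coefficients $\frac{\delta_l}{(x_k-x_l)^2}$ are constant, $\Lc_{-1}^{(l)}[\Phireg]\to\Lc_{-1}^{(l)}[\Phidiv]$ directly for a bulk $x_l$, and for a boundary $x_l$ after incorporating the correction $\big[\frac{\sigma(t)}{\pi}e^{\frac12\Phireg(t)}\big]_{x_l-\eps}^{x_l+\eps}$ from \eqref{eq:L1_lim}. Those correction terms are transferred into the study of $B_{\delta,\eps}$. This same step must also absorb the divergences that $B_{\delta,\eps}$ itself carries near the \emph{other} insertions $x_l$ ($l\neq k$): the kernel $\sum_{l\neq k}\frac{c_l}{x_l-x}$ meets the measure $\Lambda_{\delta,\eps}e^{\Phireg}$ (resp. $\sigma_\eps e^{\frac12\Phireg}$) concentrating like $|\cdot-x_l|^{2c_l}$ near $x_l$, and the resulting contributions are exactly the ones built into $\Lc_{-1}^{(l)}[\Phireg]$, so they cancel against the shift in the meromorphic coefficients.

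It remains to isolate the divergences of $B_{\delta,\eps}$ near $x_k$. Fix $r>0$ small and split every integral over $\{|x-x_k|>r\}$ and $\{|x-x_k|<r\}$. On the far region the kernels $\tfrac1{x-x_k},\tfrac1{(x-x_k)^2}$ are bounded and continuous and $\Lambda_{\delta,\eps}e^{\Phireg}\to\Lambda e^{\Phidiv}$, $\sigma_\eps e^{\frac12\Phireg}\to\sigma e^{\frac12\Phidiv}$ in $L^1$ there by Lemmas~\ref{lemma:cont} and~\ref{lemma:approx}, so it converges. On the near region, substitute $\Phireg=\Phireg^{(k)}+2c_k\ln|\cdot-x_k|$, with $\Phireg^{(k)}$ smooth up to the regularization scale, so that locally $\Lambda_{\delta,\eps}e^{\Phireg}\approx\Lambda e^{\Phireg^{(k)}}|\cdot-x_k|^{2c_k}$ and likewise on $\R$. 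For a bulk $x_k$ the integration runs over a full disk (resp. a half-disk and its mirror), and the angular integral annihilates every non-integrable contribution — the kernels $\tfrac1{(x-x_k)^2}$, $\tfrac1{x-x_k}$ and the products entering $\It^2$ all carry a nonzero winding, so $\int_0^{2\pi}e^{-in\theta}\,\d\theta=0$ for the relevant $n\geq1$ — hence $B_{\delta,\eps}\to0$ with no counterterm, which is the classical shadow of $\bar\partial T=0$ in the bulk. For a boundary $x_k$ only the half-plane is available and this cancellation fails; one instead mimics the integration-by-parts and Stokes argument from the proof of Lemma~\ref{lemma:L1}, writing the singular kernels as $\partial_t\!\left(|t-x_k|^{c_k}\right)$ (resp. $(\partial_x+\partial_{\bar x})\!\left(|x-x_k|^{2c_k}\right)$) times the smooth factor. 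The explicit evaluation $\int_\R(1+t^2)^{c_k-1}\,\d t=2^{c_k}\sin(\pi c_k)\frac{\Gamma(c_k)\Gamma(1-2c_k)}{\Gamma(1-c_k)}$ of \cite[Lemma A.1]{Cer_HEM} identifies the remaining two-dimensional divergence with the term $\frac{\Lambda}{4\pi}\int_{(-r/\delta,r/\delta)}\frac1{1+t^2}e^{\Phireg(s_l+\delta(t+i))}\,\d t$ of $\mathfrak R^{(l)}_{\delta,\eps}$, while the boundary terms of the integration by parts, combined with the corrections carried over from the meromorphic sum, reconstitute the term $-\frac1{2\eps}\big[\frac{\sigma(t)}{\pi}e^{\frac12\Phireg(t)}\big]_{x_l-\eps}^{x_l+\eps}$; after these subtractions the residual limit is $0$.

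The main obstacle is precisely this near-$x_k$ bookkeeping in the boundary case: one has to track the three radial regimes $\{|t-x_k|>r\}$, $\{r\geq|t-x_k|>\eps\}$, $\{\eps\geq|t-x_k|\geq\eps-\eps^{10}\}$ and their two-dimensional analogues on $B(x_k,r)\cap\H_\delta$, match each divergence against the precise shape of $\mathfrak R^{(l)}_{\delta,\eps}$ in \eqref{eq:L2_lim} — which in particular requires the $C^1$ convergence of $\Phireg^{(l)}$ on a fixed neighbourhood of $s_l$, uniformly down to scale $\delta$ — and check that the double integral $\It^2[\cdot]$ introduces nothing new, its divergences factorizing and cancelling against the square of the $\Lc_{-1}$-divergences in accordance with the $-\frac12(\partial\Phireg^{(k)})^2$ structure. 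The conceptually delicate point throughout is that the residual finite part genuinely vanishes: automatic in the bulk by angular averaging, on the boundary it is exactly the content of the Ward identity and must be read off from the cancellations above.
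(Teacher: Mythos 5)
Your starting point (pass to the $\delta,\eps\to0$ limit in the exact regularized identity \eqref{eq:L2}) and your accounting of the divergences near the other insertions $x_l$, $l\neq k$ (absorbed by the jump corrections built into $\Lc_{-1}^{(l)}[\Phidiv]$ in Lemma~\ref{lemma:L1}) agree with the paper. But the engine of the paper's proof is not a near/far local expansion of your error term $B_{\delta,\eps}$; it is a pair of exact identities. First, by Lemma~\ref{lemma:formula_der} the entire singular single-integral kernel is an exact derivative up to a double integral,
\begin{equation*}
\partial_x\Bigl(\tfrac1{x_k-x}e^{\Phireg(x)}\Bigr)=\tfrac1{x_k-x}\Bigl(\tfrac{c_k-1}{x_k-x}+\sum_{l\neq k}\tfrac{c_l}{x_l-x}\Bigr)e^{\Phireg(x)}+\It\Bigl[\tfrac{1}{(x_k-x)(y-x)}\Bigr],
\end{equation*}
so Stokes' formula converts the whole of $\Itr\bigl[\tfrac1{x_k-x}(\cdots)\bigr]$ into pure boundary terms plus the companion double integral $\It^2\bigl[\tfrac{1}{(x_k-x)(y-x)}\bigr]$. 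Second, the partial-fraction identity $\tfrac{1}{(x_k-x)(y-x)}+\tfrac{1}{(x_k-y)(x-y)}=-\tfrac{1}{(x_k-x)(x_k-y)}$ makes the combination $\It^2\bigl[\tfrac{1}{(x_k-x)(y-x)}+\tfrac{1}{2(x_k-x)(x_k-y)}\bigr]$ vanish by symmetrization in $x,y$. The remainder $\mathfrak R^{(l)}_{\delta,\eps}$ and the $\Lc_{-1}^{(l)}$ corrections are then simply read off the surviving boundary terms (the horizontal line integral at height $\delta$ and the jumps at $s_l\pm\eps$), the circle integrals around bulk insertions dying by the winding argument you also invoke. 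Nothing finite is left over to estimate, which is why the residual vanishes.

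The genuine gap in your plan is the double integral. In your decomposition $-\tfrac12\It^2\bigl[\tfrac{1}{(x-x_k)(y-x_k)}\bigr]$ sits alone inside $B_{\delta,\eps}$; for a boundary insertion it behaves like $-\tfrac12$ times the square of $\It\bigl[\tfrac{1}{x-x_k}\bigr]$ and therefore carries divergences of order $\eps^{2c_k}$ (and $\eps^{c_k}$) when $c_k<0$, which appear nowhere in $\mathfrak R^{(l)}_{\delta,\eps}$ (whose terms scale like $\eps^{c_k-1}$ and like the $\delta$-rescaled bulk integral). Your claim that these "factorize and cancel against the square of the $\Lc_{-1}$-divergences" has no counterpart in the identity to be proved: the right-hand side of \eqref{eq:ward} is \emph{linear} in the $\Lc_{-1}^{(l)}$, $l\neq k$, and contains no squared descendant. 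The cancellation actually occurs against the second double integral $\It^2\bigl[\tfrac{1}{(x_k-x)(y-x)}\bigr]$, which only materializes once the derivative identity above is applied to the single integral; a naive local expansion of that single integral in powers of $\norm{x-x_k}$ produces orders $\eps^{c_k-1},\eps^{c_k},\dots$ and never $\eps^{2c_k}$, so your bookkeeping cannot close. Relatedly, your concluding remark that the vanishing of the finite residual "is exactly the content of the Ward identity and must be read off from the cancellations above" is circular rather than a proof.
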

	\begin{proof}
		We rely on Equation~\eqref{eq:L2}. First of all for $\eps-\eps^{10}<\norm{x-x_k}<\eps$ the integrals
		\[
		\int_{x_k+\eps-\eps^{10}}^{x_k+\eps}\frac1{x_k-x}\left(\frac{c_k-1}{x_k-x}+\sum_{l\neq k}\frac{c_l}{x_l-x}\right)e^{\frac12\Phireg(x)}\d x
		\]
		are $o(1)$. Hence we can replace in Equation~\eqref{eq:L2} $\Lambda_{\delta,\eps}$ and $\sigma_\eps$ by respectively $\Lambda\mathds1_{\Heps}$ and $\sigma\mathds1_{\Reps}$.
		Then we use the following consequence of Equation~\eqref{eq:L1}: 
		\[
		\partial_x\left(\frac1{x_k-x}e^{\Phireg(x)}\right)=\frac1{x_k-x}\left(\frac{c_k-1}{x_k-x}+\sum_{l\neq k}\frac{c_l}{x_l-x}\right)e^{\Phireg(x)}+\It\left[\frac{1}{(x_k-x)(y-x)}\right].	
		\]
		As a consequence, using Equation~\eqref{eq:L2}, we have that the difference between the left- and right-hand sides in Equation~\eqref{eq:ward} is given by
		\begin{align*}
			&\Lc_{-2}^{(k)}[\Phireg]-\sum_{l\neq k}\left(\frac{\delta_l}{(x_k-x_l)^2}+\frac{\Lc_{-1}^{(l)}[\Phireg]}{2(x_k-x_l)}\right)=\\
			&\It\left[\partial_x\left(\frac1{x_k-x}e^{\Phireg(x)}\right)e^{-\Phireg(x)}\right]-\It^2\left[\frac{1}{(x_k-x)(y-x)}+\frac{1}{2(x_k-x)(x_k-y)}\right].
		\end{align*} 
		Since $\frac{1}{(x_k-x)(y-x)}+\frac{1}{(x_k-y)(x-y)}=-\frac{1}{(x_k-x)(x_k-y)}$, by symmetry in $x,y$ the two-fold integral vanishes in the $\delta,\eps\to0$ limit. Therefore it only remains to understand the remaining term $\It\left[\partial_x\left(\frac1{x_k-x}e^{\Phireg(x)}\right)e^{-\Phireg(x)}\right]$. By integration by parts/Stokes' formula it is equal to
		\begin{align*}
			&\frac \im{8\pi}\int_{\R}\left(\frac{1}{t+\im\delta-x_k}-\frac{1}{t-\im\delta-x_k}\right)\Lambda e^{\Phireg(t+\im\delta)}\d t+\sum_{l=1}^{N}\frac \im{8\pi}\oint_{\partial B(z_l,\eps)}\left(\frac{\d\bar \xi}{\xi-x_k}+\frac{\d\xi}{\overline \xi-x_k}\right)\Lambda e^{\Phireg(\xi)}\\
			&+\sum_{l=1}^M\eval{s_l+\eps}{s_l-\eps}{-}{\frac{1}{t-x_k}\frac{\sigma(t)}{2\pi}e^{\frac12\Phireg(t)}}.
		\end{align*}
		If $x_k\in\R$, in the integral over $\R$ we can make the change of variable $t\to \frac1\delta(t+x_k)$ to get
		\[
		\frac1{4\pi}\int_\R\frac{1}{1+t^2}\Lambda_{\delta,\eps}(x_k+\delta(t+\im))e^{\Phireg(x_k+\delta(t+\im))}\d t.
		\]
		If $x_k\in\H$ then the integral over $\R$ is a $o(1)$ since $e^{\Phireg}$ is integrable uniformly in $\delta,\eps$.
		For the integrals over the circles $\partial B(z_l,\eps)$, if $x_k$ is a bulk insertion for $z_l=x_k$ we expand $e^{\Phireg(\xi)}=\norm{\xi-z_l}^{2a_l}\left(\lambda_{\delta,\eps}+\lambda'_{\delta,\eps}(\xi+\overline \xi-z_l-\overline{z}_l)+\mc O(\norm{x-z_l}^2)\right)$ where $\lambda_{\delta,\eps}$, $\lambda'_{\delta,\eps}$ and the $\mc O$ are uniformly bounded in $\delta,\eps$. Since $\oint_{\partial B(x_k=z_l,\eps)}\norm{\xi-x_k}^{2a_k}(\xi-x_k)^p(\overline \xi-\overline x_k)^q\frac{\d\bar \xi}{\xi-x_k}=0$ for $p-q\neq 2$ the integral over $\partial B(z_l,\eps)$ is a $\mc O(\eps^{2a_l+2})$ hence vanishes in the limit. The same applies if $z_l\neq x_k$ so the sum for $1\leq l\leq N$ is a $o(1)$. Hence recollecting terms:
		\begin{align*}
			&\Lc_{-2}^{(k)}[\Phireg]-\sum_{l\neq k}\left(\frac{\delta_l}{(x_k-x_l)^2}+\frac{\Lc_{-1}^{(l)}[\Phireg]}{2(x_k-x_l)}\right)=o(1)+\sum_{\substack{1\leq l\leq M\\ s_l\neq x_k}}\eval{s_l+\eps}{s_l-\eps}{-}{\frac{1}{t-x_k}\frac{\sigma(t)}{2\pi}e^{\frac12\Phireg(t)}}\\
			&+\mathds1_{x_k\in\R}\left(\frac{1}{4\pi}\int_\R\frac{1}{1+t^2}\Lambda_{\delta,\eps}(x_k+\delta(t+\im))e^{\Phireg(x_k+\delta(t+\im))}\d t+\eval{x_k+\eps}{x_k-\eps}{-}{\frac{1}{t-x_k}\frac{\sigma(t)}{2\pi}e^{\frac12\Phireg(t)}}\right).
		\end{align*} 
		The sum ranging over $1\leq l\leq M$ corresponds to the remainder terms that appear in the definition of $\mc L_{-1}^{(l)}$ and featured in the RHS of Equation~\eqref{eq:L2}. As for the second line, we see that the $\R$-integral scales like the one of the remainder term in Lemma~\ref{lemma:L2} while the other quantity corresponds to the first term in the remainder of Equation~\eqref{eq:L2_lim}. In brief
		\begin{align*}
			&\Lc_{-2}^{(k)}[\Phireg]-\mathfrak{R}_{\delta,\eps}^{(k)}=\sum_{l\neq k}\left(\frac{\delta_l}{(x_k-x_l)^2}+\frac{\Lc_{-1}^{(l)}[\Phireg]}{2(x_k-x_l)}\right)+\sum_{\substack{1\leq l\leq M\\ s_l\neq x_k}}\frac{1}{s_l-x_k}\eval{s_l+\eps}{s_l-\eps}{-}{\frac{\sigma(t)}{2\pi}e^{\frac12\Phireg(t)}}+o(1).
		\end{align*} 
		Recollecting terms we see that the limit in Equation~\eqref{eq:L2} does indeed exist and that it is given by the Ward identity from Equation~\eqref{eq:ward} with the $\Lc_{-1}^{(l)}$'s defined by Lemma~\ref{lemma:L1}.
	\end{proof}
	
	Thanks to the previous statements we can define the stress-energy tensor $T$. To this end we set for $x=z_0\in\H\setminus\bm z$ in the bulk and for $t=s_{M+1}\in\R\setminus\bm z$ on the boundary
	\begin{eqs}
		&T(x)\coloneqq \partial^2\Phidiv(x)-\frac12\left(\partial\Phidiv(x)\right)^2,\\
		&T(t)\coloneqq -\frac{\Lambda}4 e^{\Phidiv(t)}+\lim\limits_{\delta,\eps\to0}\left(\frac12\partial^2\Phireg(t)-\frac18\left(\partial\Phireg(t)\right)^2+\frac{1}{\eps}\frac{\sigma(t)}{2\pi}e^{\frac12\Phireg(t)}\right).
	\end{eqs}
	Then in agreement with Lemma~\ref{lemma:L2} and Proposition~\ref{prop:L2} we have for $z\in\overline{\H}$
	\begin{equation}\label{eq:weight_accessory}
		T(z)=\sum_{k=1}^{2N+M}\left(\frac{\delta_k}{(z-x_k)^2}+\frac{\bm c_k}{z-x_k}\right)\text{ with }\delta_k=-c_k(1+\frac12c_k)\text{ and }\bm c_k=\frac12 \Lc_{-1}^{(k)}[\Phidiv].
	\end{equation}
	
	
	
	\section{The semi-classical limit of boundary Liouville theory}\label{sec:semi_class}
	The geometry that we have described in the previous section was completely deterministic, \textit{i.e.} the data of a Riemannian surface $(\Sigma,g)$, a divisor $\Div$ and non-positive scalar and geodesic curvatures defines a unique metric on $\Sigma$ representing $\Div$ and with prescribed curvatures. This metric can be described as the unique minimum of the Liouville action. 
	
	Based on this action Liouville Conformal Field Theory (CFT hereafter) provides a way of defining a \textit{random} geometry on $\Sigma$. The level of randomness is given by the coupling constant $\gamma\in(0,2)$, the classical theory corresponding to taking the \textit{semi-classical limit} $\gamma\to0$. 
	
	In this section we explain how to make this assertion rigorous (Theorem~\ref{thm:semi_classical}). To do so we first recall the probabilistic definition of (boundary) Liouville CFT and provide some key estimates for the so-called \textit{derivatives Gaussian Multiplicative Chaos measure}.
	
	
	
	\subsection{Definition of boundary Liouville theory}
	The definition of (boundary) Liouville CFT relies on a probabilistic framework. We recall here some of the material from~\cite{DKRV,DRV16,GRV16,HRV16,Wu} to which we refer for additional details. Hereafter we let $\Sigma$ be a compact connected Riemannian surface equipped with a smooth Riemannian metric $g$.
	
	\subsubsection{Gaussian Free Field and Gaussian Multiplicative Chaos}
	The Gaussian Free Field (GFF hereafter) is the random element $\X$ of $H^{-1}(\Sigma,g)$~\cite{Ber, PW} with covariance kernel
	\begin{equation}
		\expect{\X(x)\X(y)}=G_g(x,y)\qt{for all $x\neq y$ in $\overline{\Sigma}$}
	\end{equation}
	where $G_g$ is the Green's function from Subsection~\ref{subsec:green}. In particular if $\partial\Sigma$ is non-empty the GFF has Neumann boundary conditions. Because $\X$ is not a true function we regularize it by setting for $\rho>0$  $\X_\rho$ to be the average of $\X$ over geodesic (semi)-circles of radii $\rho$.
	
	The theory of Gaussian Multiplicative Chaos~\cite{Kah,DS10, RV08, Ber, Sha} (GMC in the sequel) provides a meaning to the exponential of the GFF $\X$. Namely for $\gamma\in(0,2)$ the following limits hold in probability and in the sense of weak convergence of Radon measures on $\Sigma$:
	\begin{equation}
		e^{\gamma\X}\d v_g\coloneqq \lim\limits_{\rho\to0}\rho^{\frac{\gamma^2}{2}}e^{\gamma\X_\rho}\d v_g\qt{and}e^{\frac\gamma2\X}\d v_g\coloneqq \lim\limits_{\rho\to0}\rho^{\frac{\gamma^2}{4}}e^{\frac\gamma2\X_\rho}\d v_g.
	\end{equation}
	Thanks to Lemma~\ref{lemma:approx_green} we know that for $g'=e^{\varphi}g$ in the conformal class of $g$
	\begin{equation}
		e^{\gamma\X}\d v_{g'}\eqlaw e^{\gamma\left(\X+\frac Q2\ln\varphi\right)}\d v_{g}\qt{and}e^{\frac\gamma2\X}\d v_{g'}\eqlaw e^{\frac\gamma2\left(\X+\frac Q2\ln\varphi\right)}\d v_{g}
	\end{equation}
	with $Q=\frac\gamma2+\frac2\gamma$. This property is crucial to get the Weyl anomaly of correlation functions, which allows in particular to assume without any loss of generality that $g$ is a uniform metric of type $I$. We will work under this assumption in the sequel. 
	
	\subsubsection{Liouville correlation functions}
	Let us now pick a divisor $\Diva$ of the form
	\begin{eqs}\label{eq:seiberg1}
		&\Diva=\sum_{k=1}^N\alpha_k z_k+\frac12\sum_{l=1}^M\beta_l s_l\qt{with $z_k\in\Sigma$ and $s_l\in\partial\Sigma$ for all $k,l$ and where}\\
		&\alpha_k<Q\qt{and}\beta_l<Q\qt{for all $1\leq k\leq N$ and $1\leq l\leq M$.}
	\end{eqs}
	In analogy with the singular Euler characteristic we set for such a divisor and $\gamma\in(0,2)$
	\begin{equation}
		\chi_\gamma(\Diva)\coloneqq \frac 12\left(Q\chi(\Sigma)-\sum_{k=1}^N\alpha_k-\frac12\sum_{l=1}^M\beta_l\right).
	\end{equation}
	Hereafter we assume that $\chi_\gamma(\Diva)<0$; together with Equation~\eqref{eq:seiberg1} these are the Seiberg bounds under which Liouville correlation functions are well-defined. Like before we write $\bm z=(z_1,\cdots,z_{N+M})\coloneqq(z_1,\cdots,z_N,s_1,\cdots,s_M)$ and $\bm\alpha=(\alpha_1,\cdots,\alpha_{N+M})\coloneqq (\alpha_1,\cdots,\alpha_N,\beta_1,\cdots,\beta_M)$.
	
	We also choose $\mu>0$ continuous over $\Sigma$ as well as $\mu_\partial:\partial\Sigma\to\R^+$ continuous on each connected component of $\partial\Sigma\setminus\bm z$. Liouville correlation functions are then~\cite{DKRV,DRV16,GRV16,HRV16,Wu} 
	\begin{equation}
		\ps{\prod_{k=1}^NV_{\alpha_k}(z_k)\prod_{l=1}^MV_{\beta_l}(s_l)}_{\gamma,\mu,\mu_\partial}\coloneqq \lim\limits_{\rho\to0}\frac1{\mc Z_g}\int_\R\expect{e^{-\tilde S^{(\rho)}_{\bm z,\bm\alpha}(\X_\rho+c)}}\d c,\qt{with}
	\end{equation}
	\begin{eqs}
		\tilde S^{(\rho)}_{\bm z,\bm\alpha}(\phi)\coloneqq &\frac1{4\pi}\int_\Sigma\left(Q R_g\phi+2\mu \rho^{\frac{\gamma^2}{2}}e^{\gamma\phi}\right)\d v_g+\frac1{2\pi}\int_{\partial\Sigma}\left(Q k_g\phi+4\mu_\partial \rho^{\frac{\gamma^2}{4}}e^{\frac\gamma2\phi}\right)\d l_g\\
		&-\sum_{k=1}^N\alpha_k\phi(z_k)-\frac12\sum_{l=1}^M\beta_l\phi(s_l)-\left(\frac12\sum_{k=1}^N\alpha_k^2+\frac14\sum_{l=1}^M\beta_l^2\right)\ln\rho
	\end{eqs}
	and $\mc Z_g\coloneqq\left(\frac{\det'\left(-\frac1{2\pi}\Delta_g\right)}{v_g(\Sigma)}\right)^{\frac12}$ a regularized determinant.  
	More generally, we can define the law of the Liouville field $\Phi$ to be the Borel measure on $H^{-1}(\Sigma,g)$ uniquely defined (via Riesz-Markov-Kakutani representation theorem) by the assignment for $F$ continuous and positive over $H^{-1}(\Sigma,g)$ of
	\begin{equation}
		\ps{F(\Phi)\prod_{k=1}^NV_{\alpha_k}(z_k)\prod_{l=1}^MV_{\beta_l}(s_l)}_{\gamma,\mu,\mu_\partial}\coloneqq \lim\limits_{\rho\to0}\frac1{\mc Z_g}\int_\R\expect{F(\X_\rho+c)e^{-\tilde S^{(\rho)}_{\bm z,\bm\alpha}(\X_\rho+c)}}\d c.
	\end{equation}
	Hereafter we will denote by $\d\mathbb P_{\gamma,\bm z,\bm a}$ the law of the Liouville field thus defined, that is
	\begin{equation}
		\mathbb E_{\gamma,\bm z,\bm a}\left[F(\Phi)\right]\coloneqq \frac{\ps{F(\Phi)\prod_{k=1}^NV_{\alpha_k}(z_k)\prod_{l=1}^MV_{\beta_l}(s_l)}_{\gamma,\mu,\mu_\partial}}{\ps{\prod_{k=1}^NV_{\alpha_k}(z_k)\prod_{l=1}^MV_{\beta_l}(s_l)}_{\gamma,\mu,\mu_\partial}}\cdot
	\end{equation}
	
	Using Girsanov's theorem 
	the correlation functions and the law of the Liouville field can be written in a more convenient way. To be more explicit let $W$ be as in Lemma~\ref{lemma:approx_green} and set
	\begin{equation}
		\Hdiva\coloneqq \sum_{k=1}^{N+M}\alpha_k G_g(\cdot,z_k)\qt{and}G(\bm z,\bm\alpha)\coloneqq -\frac12\sum_{k\neq l}\alpha_l\alpha_lG_g(z_k,z_l)-\frac12\sum_{k=1}^{N+M}\alpha_k^2 W(z_k).
	\end{equation}
	Then if $g$ is uniform of type $I$ the correlation functions can be written as
	\begin{eqs}
		&\ps{\prod_{k=1}^NV_{\alpha_k}(z_k)\prod_{l=1}^MV_{\beta_l}(s_l)}_{\gamma,\mu,\mu_\partial}=e^{-G(\bm z,\bm\alpha)}\frac1{\mc Z_g}\int_\R e^{-\frac2\gamma\chi_\gamma c}\E\Bigg[F(\X+\Hdiva+c)\\
		&\hspace{2.5cm}\exp\left(-\frac1{2\pi}\int_{\Sigma}\mu e^{\gamma\left(\X+\Hdiva+c\right)}\d v_g-\frac2{\pi}\int_{\partial\Sigma}\mu_\partial e^{\frac\gamma2\left(\X+\Hdiva+c\right)}\d l_g\right)\Bigg]\d c.
	\end{eqs}
	
	
	
	\subsection{Additional probabilistic tools} 
	The semi-classical limit of Liouville CFT relies on probabilistic techniques related to the framework introduced above. We provide here the main tools used in this respect: derivative(s) Gaussian Multiplicative Chaos measures. Part of the material presented here is adapted from~\cite{LRV_semi1, LRV_semi2} to which we refer for more details.
	
	\subsubsection{Derivative(s) Gaussian Multiplicative Chaos}
	The derivative GMC measure was introduced in~\cite{LRV_Mabuchi}. It (formally) corresponds to taking the $\gamma$ derivative of the GMC measures $e^{\gamma\X}\d v_g$ and $e^{\frac\gamma2\X}\d l_g$. We propose here to generalize this definition to higher derivatives; as we will see second order ones naturally appear in the $\gamma\to0$ limit of Liouville CFT.
	
	Let $k$ be a non-negative integer and $\gamma\in[0,1)$. We define for any $\rho>0$ the signed measures
	\begin{equation}
		\mc D_{\gamma,\rho}^{(k)}[\X]\d v_g\coloneqq \left(\frac{\partial}{\partial\gamma}\right)^k\left( \rho^{\frac{\gamma^2}2}e^{\gamma \X_\rho}\right)\d v_g,\quad \mc D_{\partial,\gamma,\rho}^{(k)}[\X]\d l_g\coloneqq \left(\frac{\partial}{\partial\gamma}\right)^k\left( \rho^{\frac{\gamma^2}2}e^{\gamma \X_\rho}\right)\d l_g.
	\end{equation}
	\begin{proposition}\label{prop:DGMC}
		Let $\gamma\in[0,1)$. Then for any non-negative integer $k$ and $f$ continuous and bounded over $\overline\Sigma$ the following limits exist in $L^2$:
		\begin{equation}
			\int_\Sigma f\mc D_{\gamma}^{(k)}[\X]\d v_g\coloneqq\lim\limits_{\rho\to0}\int_\Sigma f\mc D_{\gamma,\rho}^{(k)}[\X]\d v_g,\quad\int_\Sigma f\mc D_{\partial,\gamma}^{(k)}[\X]\d l_g\coloneqq\lim\limits_{\rho\to0}\int_\Sigma f\mc D_{\partial,\gamma,\rho}^{(k)}[\X]\d l_g.
		\end{equation}
	\end{proposition}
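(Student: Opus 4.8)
The plan is to recognise the integrated derivative GMC as an honest $\gamma$-derivative of the ordinary integrated GMC, to view the latter as a holomorphic family in a \emph{complex} coupling constant, and then to transfer $L^2(\Omega)$-convergence from the GMC to each of its $\gamma$-derivatives by means of Cauchy's integral formula. I describe the bulk case; the boundary case is identical, with the divergence $G_g(x,y)=-\log d_g(x,y)+O(1)$ of Lemma~\ref{lemma:approx_green} replaced by $-2\log d_g$ and the bulk renormalisation replaced by the boundary one. Fix $f$ continuous and bounded on $\overline\Sigma$ and, for $\rho>0$ and $\gamma\in\C$, set $M_{\gamma,\rho}\coloneqq\int_\Sigma f\,\rho^{\gamma^2/2}e^{\gamma\X_\rho}\,\d v_g$. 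Since $\X_\rho$ is almost surely a bounded continuous function on $\overline\Sigma$, for fixed $\rho$ the integrand is entire in $\gamma$ and one may differentiate under the integral sign, so that $\int_\Sigma f\,\mc D_{\gamma,\rho}^{(k)}[\X]\,\d v_g=\partial_\gamma^k M_{\gamma,\rho}$, with $\gamma\mapsto M_{\gamma,\rho}$ almost surely entire.

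The first substantial step is to show that, for $\gamma$ in the complex disk $D\coloneqq\{\zeta\in\C:\ |\zeta|<\sqrt2\}$, the family $(M_{\gamma,\rho})_{\rho>0}$ is Cauchy in $L^2(\Omega)$ as $\rho\to0$, \emph{locally uniformly} in $\gamma\in D$; granting this, $M_{\gamma,\rho}$ converges to a limit $M_\gamma$ and $\gamma\mapsto M_\gamma$ is a holomorphic $L^2(\Omega)$-valued map on $D$. I would prove this by the classical second-moment estimate: expand $\E\big[|M_{\gamma,\rho}-M_{\gamma,\rho'}|^2\big]$ into double integrals over $\Sigma\times\Sigma$, apply the Cameron--Martin/Girsanov formula, and use Lemma~\ref{lemma:approx_green} to see that the integrands are dominated near the diagonal by $|f(x)f(y)|\,d_g(x,y)^{-|\gamma|^2}$ up to bounded factors, which is integrable on $\Sigma\times\Sigma$ precisely because $|\gamma|^2<2$; Cauchyness and local uniformity in $\gamma$ then follow as in the standard GMC literature~\cite{DKRV,DRV16,GRV16,HRV16,Wu,LRV_semi1,LRV_semi2}. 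This is exactly where the hypothesis $\gamma\in[0,1)$ enters: it places $[0,1)$ well inside $D$, so that around any $\gamma\in[0,1)$ one can pick $r>0$ with $\overline{D(\gamma,r)}\subset D$.

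For such $\gamma$ and $r$, Cauchy's formula for derivatives gives, for every $\rho>0$,
\[
\int_\Sigma f\,\mc D_{\gamma,\rho}^{(k)}[\X]\,\d v_g=\partial_\gamma^k M_{\gamma,\rho}=\frac{k!}{2\pi\im}\oint_{|\zeta-\gamma|=r}\frac{M_{\zeta,\rho}}{(\zeta-\gamma)^{k+1}}\,\d\zeta.
\]
Since $M_{\zeta,\rho}\to M_\zeta$ in $L^2(\Omega)$ uniformly for $\zeta$ on the circle $|\zeta-\gamma|=r$, the right-hand side converges in $L^2(\Omega)$, which gives the existence of $\int_\Sigma f\,\mc D_{\gamma}^{(k)}[\X]\,\d v_g$; it equals $\frac{k!}{2\pi\im}\oint_{|\zeta-\gamma|=r}(\zeta-\gamma)^{-(k+1)}M_\zeta\,\d\zeta$, i.e.\ $\partial_\gamma^k M_\gamma$ in the $L^2(\Omega)$-holomorphic sense. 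The same argument, with the boundary GMC and the divergence $-2\log d_g$ from Lemma~\ref{lemma:approx_green}, yields the statement for $\int_{\partial\Sigma}f\,\mc D_{\partial,\gamma}^{(k)}[\X]\,\d l_g$.

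The main obstacle is the intermediate step: proving that the complexified GMC $\gamma\mapsto M_\gamma$ exists, with convergence locally uniform in $\gamma$, on a neighbourhood of $[0,1)$ — this is the only place where real work (the Girsanov bookkeeping and the near-diagonal integrability) is required, and one must take care that the convergence on the Cauchy contour is genuinely uniform in $\zeta$, not merely pointwise. An alternative that avoids complexification is to argue directly from the pointwise identity $\mc D_{\gamma,\rho}^{(k)}[\X](x)=P_{k,\rho}(\gamma;\X_\rho(x))\,\rho^{\gamma^2/2}e^{\gamma\X_\rho(x)}$, where $P_{k,\rho}$ is a degree-$k$ Hermite-type polynomial whose coefficients grow only poly-logarithmically in $\rho$, and to run the second-moment (or martingale) estimate for $\E\big[\big|\int_\Sigma f\,\big(\mc D_{\gamma,\rho}^{(k)}-\mc D_{\gamma,\rho'}^{(k)}\big)[\X]\,\d v_g\big|^2\big]$ directly, checking that the polynomial prefactors do not alter the integrability exponent $d_g(x,y)^{-\gamma^2}$; this route is more bookkeeping-heavy, so I would favour the Cauchy-formula argument.
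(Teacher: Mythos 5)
Your route is genuinely different from the paper's. The paper proves $L^2$-Cauchyness of $\int_\Sigma f\,\mc D^{(k)}_{\gamma,\rho}[\X]\d v_g$ directly: it writes the second moment of the difference between two regularizations as $\partial_{\gamma_1}^k\partial_{\gamma_2}^k$ (evaluated at $\gamma_1=\gamma_2=\gamma$) of an explicit double integral obtained by Gaussian computation, and reduces everything to the absolute convergence of $\int_{\Sigma^2}\norm{G_g(x,y)}^{k}e^{\gamma^2G_g(x,y)}\d v_g(x)\d v_g(y)$ — the polynomial-in-$G$ prefactors produced by those derivatives are exactly the ``bookkeeping-heavy'' alternative you mention at the end. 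Your complexification-plus-Cauchy-formula argument packages the same second-moment input more elegantly, and transferring $L^2(\Omega)$ convergence from a locally uniformly convergent holomorphic family to all its $\gamma$-derivatives in one stroke is a legitimate and arguably cleaner way to obtain all orders $k$ simultaneously.

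There is, however, one genuine error in your intermediate step: the domain of $L^2$ convergence of the complexified \emph{bulk} GMC is not $\{\zeta\in\C:\norm{\zeta}<\sqrt2\}$. By Lemma~\ref{lemma:approx_green} the Neumann Green's function diverges like $-2\log d_g(x,y)$ near boundary points, so for $x,y\in\Sigma$ close to $\partial\Sigma$ the second-moment integrand of the bulk measure is of order $d_g(x,y)^{-2\norm{\zeta}^2}$, which is integrable over $\Sigma\times\Sigma$ only for $\norm{\zeta}<1$. This — and not the placement of $[0,1)$ ``well inside'' $\{\norm{\zeta}<\sqrt2\}$ — is the reason the proposition is restricted to $\gamma\in[0,1)$; the paper states this explicitly (``we only need $\gamma<\sqrt2$ in a compact inside the bulk, but close to the boundary we must have $\gamma<1$''). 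Your argument survives the correction: for every $\gamma\in[0,1)$ one can still choose $r>0$ with $\overline{D(\gamma,r)}\subset\{\norm{\zeta}<1\}$, and the Cauchy integral representation then yields the stated $L^2$ limits. (The boundary measure, which carries the exponent $\gamma/2$ against the $-2\log d_g$ singularity on a one-dimensional set, is indeed fine up to $\gamma<\sqrt2$, as you and the paper both note; the binding constraint comes from the bulk measure near $\partial\Sigma$.) You should also make explicit that the uniformity of the $L^2$ convergence on the Cauchy contour follows from the second-moment bound being uniform on compact subsets of $\{\norm{\zeta}<1\}$, since that is the one point where your argument genuinely needs more than pointwise convergence.
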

	\begin{proof}
		For the first point let us take $f$ continuous and bounded over $\overline\Sigma$. Then for $\rho,\rho'>0$
		\begin{align*}
			&\expect{\left(\int_\Sigma f\mc D_{\gamma,\rho}^{(n)}\d v_g-\int_\Sigma f\mc D_{\gamma,\rho'}^{(n)}\d v_g\right)^2}=\partial_{\gamma_1}^n\partial_{\gamma_2^n
			}\\
			&\int_{\Sigma^2}f(x)f(y)\expect{\left(\rho^{\frac{\gamma_1^2}2}e^{\gamma_1\X_\rho(x)}-\rho'^{\frac{\gamma_1^2}2}e^{\gamma_1\X_{\rho'}(x)}\right)\left(\rho^{\frac{\gamma_2^2}2}e^{\gamma_2\X_\rho(x)}-\rho'^{\frac{\gamma_2^2}2}e^{\gamma_2\X_{\rho'}(x)}\right)}\d v_g(x)\d v_g(y)
		\end{align*}
		evaluated at $\gamma_1=\gamma_2=\gamma$. The term on the second line is equal to
		\begin{align*}
			&\int_{\Sigma^2}f(x)f(y)e^{\frac{\gamma_1^2}{2}W_\rho(x)}\left(e^{\frac{\gamma_2^2}{2}W_\rho(y)}e^{\gamma_1\gamma_2G_{\rho,\rho}(x,y)}-e^{\frac{\gamma_2^2}{2}W_{\rho'}(y)}e^{\gamma_1\gamma_2G_{\rho,\rho'}(x,y)}\right)\d v_g(x)\d v_g(y)\\
			&-\int_{\Sigma^2}f(x)f(y)e^{\frac{\gamma_1^2}{2}W_{\rho'}(x)}\left(e^{\frac{\gamma_2^2}{2}W_\rho(y)}e^{\gamma_1\gamma_2G_{\rho',\rho}(x,y)}-e^{\frac{\gamma_2^2}{2}W_{\rho'}(y)}e^{\gamma_1\gamma_2G_{\rho',\rho'}(x,y)}\right)\d v_g(x)\d v_g(y)
		\end{align*}
		where $G_{\rho,\rho'}(x,y)\coloneqq \left(G_\rho(x,\cdot)\right)_{\rho'}(y)$ and $W_\rho(x)\coloneqq G_\rho(x,x)+\ln\rho=W(x)+o(1)$ (see Lemma~\ref{lemma:approx_green}). Now for any positive integer $k$ and for $\gamma<1$ the following integral
		\[
		\int_\Sigma \norm{G_g(x,y)}^{k}e^{\gamma^2G_g(x,y)}\d v_g(x)\d v_g(y)
		\]
		is absolutely convergent (we only need $\gamma<\sqrt2$ in a compact inside the bulk, but close to the boundary we must have $\gamma<1$). This implies that the sequence of $\left(\int_\Sigma f\mc D_{\gamma,\rho}^{(n)}\d v_g\right)_{\rho>0}$ is Cauchy in $L^2$ as $\rho\to0$. The same computations remain valid for the boundary measure as soon as $\gamma<\sqrt2$, showing that for any $f$ continuous and bounded over $\overline\Sigma$ the sequence \\
		$\left(\int_\Sigma f\mc D_{\gamma,\rho}^{(n)}\d v_g+\int_{\partial\Sigma} f\mc D_{\partial\gamma,\rho}^{(n)}\d l_g\right)_{\rho>0}$ is Cauchy in $L^2$, hence convergent as $\rho\to0$.
		
		The second statement follows from the first one by definition.
	\end{proof}
	
	In analogy with~\cite{LRV_Mabuchi} we call such measures \textit{Derivatives Gaussian Multiplicative Chaos measures} (DGMC). They admit the following alternative representation: for $\rho>0$ set
	\begin{equation}
		e^{\gamma\X_\rho-\frac{\gamma^2}{2}\ln\rho}=\sum_{n\geq0}\frac{\gamma^n}{n!}H_n^{(\rho)}(\X_\rho)\qt{and}e^{\frac\gamma2\X_\rho-\frac{\gamma^2}{4}\ln\rho}=\sum_{n\geq0}\frac{(\gamma/2)^n}{n!}H_{\partial,n}^{(\rho)}(\X_\rho)
	\end{equation}
	for some (generalized) Hermite polynomials $H_n^{(\rho)}$ (explicitly $H_n^{(\rho)}=\ln(\rho)^{\frac n2}\Herm_n\left(\frac{\X_\rho}{\sqrt{\ln\rho}}\right)$ where the $\Herm_n$'s are the usual Hermite polynomials). More generally around $\gamma_0\in(0,1)$ we write 
	\begin{equation}
		e^{\gamma\X_\rho-\frac{\gamma^2}{2}\ln\rho}=\sum_{n\geq0}\frac{(\gamma-\gamma_0)^n}{n!}H_{n,\gamma_0}^{(\rho)}(\X_\rho)e^{\gamma_0\X_\rho-\frac{\gamma_0^2}{2}\ln\rho}
	\end{equation}
	and likewise for the boundary measure.
	Then Proposition~\ref{prop:DGMC} is equivalent to the fact that for any non-negative $n$ and $\gamma\in(0,1)$ the following limits exist:
	\begin{eqs}
		&H_{n,\gamma}(\X)e^{\gamma\X}\d v_g\coloneqq \lim\limits_{\rho\to0}H_{n,\gamma}^{(\rho)}(\X_\rho)e^{\gamma\X_\rho-\frac{\gamma^2}{2}\ln\rho}\d v_g\qt{and}\\
		&H_{\partial,n,\gamma}(\X)e^{\frac\gamma2\X}\d l_g\coloneqq\lim\limits_{\rho\to0}H_{\partial,n,\gamma}^{(\rho)}(\X_\rho)e^{\frac\gamma2\X_\rho-\frac{\gamma^2}{4}\ln\rho}\d l_g.
	\end{eqs}
	Besides for any non-negative $n$ and $\gamma\in[0,1)$, $\mc D_\gamma^{(n)}[\X]\d v_g=H_{n,\gamma}(\X)e^{\gamma\X}\d v_g$ and for $n=1,2$:
	\[
	H_{1,\gamma}^{(\rho)}(\X_\rho)=\X_\rho+\gamma\ln\rho\qt{and}H_{2,\gamma}^{(\rho)}(\X_\rho)=\X_\rho^2+\ln\rho+2\ln\rho\left(\gamma\X_\rho+\frac{\gamma^2}2\ln\rho\right).
	\]
	\subsubsection{Some properties of the DGMC measures}	
	Having defined them, we record some elementary facts about DGMC measures. First of all we have the following Taylor's formula:
	\begin{lemma}\label{lemma:DGMC_DL}
		For any $\gamma \in (0,1)$, $\gamma_0\in[0,1)$, $n\geq 0$ and $f$ continuous and bounded over $\overline\Sigma$
		\begin{equation}
			\int_\Sigma fe^{\gamma\X}\d v_g=\sum_{k=0}^{n-1}\frac{(\gamma-\gamma_0)^k}{k!}\int_\Sigma f\mc D_{n,\gamma_0}\d v_g+\int_{\gamma_0}^{\gamma}\frac{(\gamma-\gamma')^{n-1}}{(n-1)!}\left(\int_\Sigma f \mc D_{\gamma'}^{(n)}\d v_g\right)\d\gamma'.
		\end{equation}
	\end{lemma}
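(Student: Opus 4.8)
The plan is to reduce the statement to the classical one-variable Taylor expansion with integral remainder, applied to the (random) smooth function $\gamma'\mapsto \int_\Sigma f\,\rho^{\gamma'^2/2}e^{\gamma'\X_\rho}\d v_g$ at the regularized level, and then to pass to the limit $\rho\to0$. More precisely, for fixed $\rho>0$ the map
\[
F_\rho(\gamma')\coloneqq \int_\Sigma f\,\rho^{\frac{\gamma'^2}{2}}e^{\gamma'\X_\rho}\d v_g
\]
is smooth in $\gamma'$ (one may differentiate under the integral sign since $f$ is bounded, $\overline\Sigma$ is compact and $\X_\rho$ is a continuous Gaussian field), with $F_\rho^{(k)}(\gamma')=\int_\Sigma f\,\mc D_{\gamma',\rho}^{(k)}[\X]\d v_g$ by the very definition of the regularized DGMC integrands. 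Taylor's theorem with integral remainder at the point $\gamma_0$ then gives, for every $\rho>0$,
\begin{equation}\label{eq:taylor_rho}
F_\rho(\gamma)=\sum_{k=0}^{n-1}\frac{(\gamma-\gamma_0)^k}{k!}F_\rho^{(k)}(\gamma_0)+\int_{\gamma_0}^{\gamma}\frac{(\gamma-\gamma')^{n-1}}{(n-1)!}F_\rho^{(n)}(\gamma')\d\gamma'.
\end{equation}
(Note $\gamma_0\in[0,1)$ is allowed; at $\gamma_0=0$ the regularized object is simply $\int_\Sigma f\d v_g$ plus its derivatives, which causes no trouble.)

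The next step is to take $\rho\to0$ in \eqref{eq:taylor_rho}, term by term. The left-hand side converges in $L^2$ to $\int_\Sigma fe^{\gamma\X}\d v_g$ (this is the standard GMC convergence, covered by Proposition~\ref{prop:DGMC} with $k=0$, valid since $\gamma<1$), and each of the finitely many terms in the sum converges in $L^2$ to $\int_\Sigma f\mc D_{k,\gamma_0}\d v_g$ by Proposition~\ref{prop:DGMC}. So everything rests on showing that the remainder integral converges in $L^2$ to $\int_{\gamma_0}^{\gamma}\frac{(\gamma-\gamma')^{n-1}}{(n-1)!}\big(\int_\Sigma f\mc D_{\gamma'}^{(n)}\d v_g\big)\d\gamma'$.

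I expect this exchange of $\lim_{\rho\to0}$ with $\int_{\gamma_0}^\gamma\d\gamma'$ to be the main (and really only) point requiring care. The clean way to handle it is to revisit the $L^2$ estimate from the proof of Proposition~\ref{prop:DGMC}: there one bounds
\[
\expect{\Big(\int_\Sigma f\,\mc D_{\gamma',\rho}^{(n)}\d v_g-\int_\Sigma f\,\mc D_{\gamma',\rho''}^{(n)}\d v_g\Big)^2}
\]
by quantities controlled by $\int_{\Sigma^2}\norm{G_g(x,y)}^{2n}e^{\gamma'^2 G_g(x,y)}\d v_g(x)\d v_g(y)$, which is finite and \emph{locally uniform in $\gamma'$} on the compact subinterval $[\gamma_0\wedge\gamma,\gamma_0\vee\gamma]\subset[0,1)$ — here we use crucially $\gamma<1$ so that $\sup_{\gamma'}\gamma'^2<1$ and the near-boundary integrability threshold is respected. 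Hence $\rho\mapsto \int_\Sigma f\mc D_{\gamma',\rho}^{(n)}\d v_g$ is Cauchy in $L^2$ uniformly in $\gamma'$ over this interval; combined with the fact that the weight $(\gamma-\gamma')^{n-1}/(n-1)!$ is bounded on the interval, a Minkowski-integral-inequality argument (or dominated convergence in $L^2$) shows that $\int_{\gamma_0}^\gamma \frac{(\gamma-\gamma')^{n-1}}{(n-1)!}\big(\int_\Sigma f\mc D_{\gamma',\rho}^{(n)}\d v_g\big)\d\gamma'$ is itself Cauchy in $L^2$ as $\rho\to0$, with limit equal to the integral of the pointwise ($\rho\to0$) limits. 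Passing to the limit in \eqref{eq:taylor_rho} then yields the claimed identity in $L^2$, hence almost surely along a subsequence and, since both sides are fixed $L^2$ random variables, as an identity of $L^2$ random variables. The boundary-measure version is identical, using the $\gamma<\sqrt2$ (and here $\gamma<1$) integrability noted in the proof of Proposition~\ref{prop:DGMC}, and the mixed bulk-plus-boundary statement follows by adding the two.
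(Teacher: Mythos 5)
Your proposal is correct and follows essentially the same route as the paper: apply Taylor's formula with integral remainder to the regularized quantity $\rho^{\gamma^2/2}e^{\gamma\X_\rho}$ and pass to the $\rho\to0$ limit using the $L^2$ convergence supplied by Proposition~\ref{prop:DGMC}. The only difference is that you spell out the uniform-in-$\gamma'$ Cauchy estimate needed to exchange $\lim_{\rho\to0}$ with the $\d\gamma'$ integral in the remainder term, a point the paper's two-line proof leaves implicit.
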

	\begin{proof}
		For any positive $\rho$, this equality is simply Taylor's formula. Taking the $\rho\to 0$ limit is granted by Proposition~\ref{prop:DGMC} thanks to which we have $L^2$ convergence of both sides.
	\end{proof}
	This property is elementary but nonetheless crucial to take the semi-classical limit $\gamma\to0$.
	Another key property in this perspective is the existence of uniform bounds in $\gamma\in(0,1)$ for functionals of DGMC measures. Assume that $\Lambda\geq 0$ and $\sigma\geq0$ are bounded respectively over $\Sigma$ and $\partial\Sigma$, with $\Lambda\not\equiv0$. Let $\Div$ be a divisor like in the previous section and set
	\begin{equation}
		\ps{f}_{\bm z,\bm a}\coloneqq \frac1{4\pi}\int_\Sigma f\Lambda\d v_{\gdiv}+\frac{1}{4\pi}\int_{\partial\Sigma} f\sigma\d l_{\gdiv}\qt{and}m_{\bm z,\bm a}(f)\coloneqq \frac{\ps{f}_{\bm z,\bm a}}{\ps 1_{\bm z,\bm a}}
	\end{equation}
	for $f$ continuous and bounded over $\overline{\Sigma}$ and with $\gdiv=e^{\phidiv}g$ the solution of Problem~\ref{prob:class} given by Theorem~\ref{thm:prescribe}.
	In addition to ensure that the semi-classical limit is well-defined we will need existence of some negative exponential moments of these DGMC measures: 
	\begin{lemma}\label{lemma:lim_der_GMC}
		For any $c\in\R$ we have the uniform bound
		\begin{equation}\label{eq:DGMC_sup}
			\sup\limits_{\gamma\in(0,\frac12)}\expect{\exp\left(-\frac1{4\pi}\int_\Sigma \Lambda \tilde{\mc D}^{(2)}_\gamma[\X+c]\d v_{\gdiv}- \frac1{4\pi}\int_{\partial\Sigma} \sigma\tilde{\mc D}^{(2)}_\gamma[\X+c]\d l_{\gdiv}\right)}<\infty\qt{where}
		\end{equation}
		\[
		\tilde{\mc D}^{(2)}_\gamma[\X+c]\d v_{\gdiv}\coloneqq 2\int_0^{1}\left(\mc  D_{u\gamma}^{(2)}[\X+c]\d v_{\gdiv}\right)u\d u,\text{ }\tilde{\mc D}^{(2)}_\gamma[\X+c]\d l_{\gdiv}\coloneqq 2\int_0^{1}\left(\mc  D^{(2)}_{\partial,u\gamma}[\X+c]\d l_{\gdiv}\right)u\d u.
		\]
		Moreover for any $F$ continuous and bounded over $H^{-1}(\Sigma,g)$
		\begin{eqs}\label{eq:gam_DGMC}
			\lim\limits_{\gamma\to0}&\expect{F(\X)\exp\left(-\frac1{4\pi}\int_\Sigma \Lambda \tilde{\mc D}^{(2)}_\gamma[\X+c]\d v_{\gdiv}-\frac1{4\pi}\int_{\partial\Sigma} \sigma\tilde{\mc D}^{(2)}_\gamma[\X+c]\d l_{\gdiv}\right)}\\
			=&\expect{F(\X)\exp\left(-\ps{H_{2}[\X+c]}_{\bm z,\bm a}\right)}.
		\end{eqs}
	\end{lemma}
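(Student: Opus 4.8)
The plan is to prove the uniform bound \eqref{eq:DGMC_sup} first — this is the substantial part — and then to deduce the limit \eqref{eq:gam_DGMC} from it by a uniform‑integrability argument. The key observation for the bound is that the weight $2u\,\d u$ in the definition of $\tilde{\mc D}^{(2)}_\gamma$ is a probability measure on $[0,1]$. Writing
\[
V_{\gamma'}\coloneqq\frac1{4\pi}\int_\Sigma\Lambda\,\mc D^{(2)}_{\gamma'}[\X+c]\,\d v_{\gdiv}+\frac1{4\pi}\int_{\partial\Sigma}\sigma\,\mc D^{(2)}_{\partial,\gamma'}[\X+c]\,\d l_{\gdiv},
\]
the random variable $A_\gamma$ in the exponential of \eqref{eq:DGMC_sup} is exactly $\int_0^1 V_{u\gamma}\,2u\,\d u$, so Jensen's inequality gives $e^{-A_\gamma}\leq\int_0^1 e^{-V_{u\gamma}}\,2u\,\d u$ and hence $\sup_{\gamma\in(0,1/2)}\E[e^{-A_\gamma}]\leq\sup_{\gamma'\in[0,1/2)}\E[e^{-V_{\gamma'}}]$. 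Thus it suffices to bound $\E[e^{-V_{\gamma'}}]$ uniformly over $\gamma'\in[0,1/2)$, noting $V_0=\ps{H_2[\X+c]}_{\bm z,\bm a}$.

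Before that one must extend the $L^2$‑construction of Proposition~\ref{prop:DGMC} (stated for continuous bounded test functions) to the curvature weights $\Lambda\,\d v_{\gdiv}$ on $\Sigma$ and $\sigma\,\d l_{\gdiv}$ on $\partial\Sigma$. Since $\Lambda,\sigma$ are bounded and $e^{\phidiv}\in L^p_{\mathrm{loc}}(\Sigma,g)$, $e^{\frac12\phidiv}\in L^p_{\mathrm{loc}}(\partial\Sigma,g)$ for $p<p_*$, the second‑moment computation of that proof — which reduces to the finiteness of integrals $\iint|G_g(x,y)|^{k}e^{\gamma'^2 G_g(x,y)}\,\d\nu(x)\,\d\nu(y)$ against the above weights $\d\nu$ — stays valid and is uniform on $[0,1/2)$: the exponent $\gamma'^2<\tfrac14$ leaves room in the bulk, near $\partial\Sigma$ (where $\gamma'$ must stay below $1$), and near the punctures (where as $\gamma'\to0$ the requirement degenerates to $a_k,b_l>-1$). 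In particular $\gamma'\mapsto V_{\gamma'}$ is continuous from $[0,1/2)$ to $L^2$. The endpoint $\gamma'=0$ is then handled directly: $H_2[\X+c]=H_2[\X]+2c\X+c^2$, where $H_2[\X]$ is the Wick (renormalised) square plus the continuous function $W$ of Lemma~\ref{lemma:approx_green}, so $V_0=\ps{H_2[\X]}_{\bm z,\bm a}+2c\ps{\X}_{\bm z,\bm a}+c^2\ps1_{\bm z,\bm a}$; the last two terms are a Gaussian linear functional and a constant, while for the second‑order chaos one diagonalises, $\ps{H_2[\X]}_{\bm z,\bm a}=\sum_n\lambda_n(\xi_n^2-1)+\ps{W}_{\bm z,\bm a}$ with $(\xi_n)$ i.i.d.\ standard Gaussian and $\lambda_n\geq0$ the eigenvalues of the positive operator $\sqrt{\Gamma_g}M_\nu\sqrt{\Gamma_g}$ ($\Gamma_g$ the GFF covariance, $\d\nu=\tfrac1{4\pi}(\Lambda\,\d v_{\gdiv}+\sigma\,\d l_{\gdiv})$), so that $\E[e^{-\sum_n\lambda_n(\xi_n^2-1)}]=\prod_n e^{\lambda_n}(1+2\lambda_n)^{-1/2}<\infty$ because $\sum_n\lambda_n^2$ is comparable to $\iint|G_g(x,y)|^2\,\d\nu(x)\,\d\nu(y)<\infty$ (Hilbert–Schmidt). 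This simultaneously gives the finiteness of the right‑hand side of \eqref{eq:gam_DGMC}.

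The range $\gamma'\in(0,1/2)$ is the main obstacle, and is treated along the lines of~\cite{LRV_semi1,LRV_semi2}. Pointwise lower bounds on $\mc D^{(2)}_{\gamma'}$ are far too lossy — at scale $\rho$ one has $\mc D^{(2)}_{\gamma',\rho}[\X+c]=\big[(\X_\rho+c+\gamma'\ln\rho)^2+\ln\rho\big]\rho^{\gamma'^2/2}e^{\gamma'(\X_\rho+c)}$, and merely discarding the square leaves the divergent $\ln\rho$ — so one must keep the cancellation between the squared term and its renormalisation. The plan is to tilt the law of $\X$ by the normalised GMC density $\rho^{\gamma'^2/2}e^{\gamma'\X_\rho}$ (Girsanov); under the tilted measure the bracket becomes the renormalised square of the shifted field, controlled below uniformly in $\gamma'$ by the same Hilbert–Schmidt estimate as in the endpoint case, while the remaining GMC total mass $\int_\Sigma\Lambda e^{\gamma'(\X+c)}\d v_{\gdiv}+\int_{\partial\Sigma}\sigma e^{\frac{\gamma'}2(\X+c)}\d l_{\gdiv}$ has moments of every order strictly below $2/\gamma'^2$, hence of order $\geq 8$ for all $\gamma'\in(0,1/2)$. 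Combining these yields $\sup_{\gamma'\in[0,1/2)}\E[e^{-V_{\gamma'}}]<\infty$, and therefore \eqref{eq:DGMC_sup}.

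Finally, for \eqref{eq:gam_DGMC}: applying \eqref{eq:DGMC_sup} with $(1+\eta)\Lambda$, $(1+\eta)\sigma$ in place of $\Lambda,\sigma$ (still admissible, still $\Lambda\not\equiv0$) shows $\sup_{\gamma\in(0,1/2)}\E[e^{-(1+\eta)A_\gamma}]<\infty$, so the family $(e^{-A_\gamma})_{\gamma\in(0,1/2)}$ is uniformly integrable. On the other hand $A_\gamma=\int_0^1 V_{u\gamma}\,2u\,\d u\to\int_0^1 V_0\,2u\,\d u=V_0=\ps{H_2[\X+c]}_{\bm z,\bm a}$ in $L^2$, hence in probability, as $\gamma\to0$, by Minkowski's integral inequality together with the $L^2$‑continuity of $\gamma'\mapsto V_{\gamma'}$ at $0$. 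Since $F$ is bounded and continuous on $H^{-1}(\Sigma,g)$, it follows that $F(\X)e^{-A_\gamma}\to F(\X)e^{-\ps{H_2[\X+c]}_{\bm z,\bm a}}$ in probability, and the family is uniformly integrable (bounded times uniformly integrable), whence the convergence holds in $L^1$; taking expectations gives \eqref{eq:gam_DGMC}.
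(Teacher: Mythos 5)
Your reduction via Jensen's inequality is valid as an inequality ($2u\,\d u$ is a probability measure, so $e^{-A_\gamma}\leq\int_0^1e^{-V_{u\gamma}}\,2u\,\d u$), but it does not simplify the problem --- it makes it harder, and the step where you then bound $\sup_{\gamma'\in(0,1/2)}\E[e^{-V_{\gamma'}}]$ is where the proof breaks down. The Taylor remainder $A_\gamma$ has a structural a priori lower bound coming from $e^x-1-x\geq 0$: for each fixed $\gamma>0$ one has $-A_\gamma\leq C\gamma^{-2}+\gamma^{-1}\ps{\X}_{\bm z,\bm a}$, a Gaussian, which is exactly how the paper gets finiteness at fixed $\gamma$ before addressing uniformity. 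The un-averaged object $V_{\gamma'}$ has no such bound: discarding the square in $[(\X_\rho+c+\gamma'\ln\rho)^2+\ln\rho]\rho^{\gamma'^2/2}e^{\gamma'(\X_\rho+c)}$ leaves $\ln\rho$ times the GMC mass, which diverges --- as you yourself observe --- and there is no pointwise substitute. Your proposed fix, ``tilt the law of $\X$ by the normalised GMC density via Girsanov,'' is not a legitimate operation here: Girsanov reweights by $e^{\gamma'\X_\rho(x_0)-\frac{\gamma'^2}{2}\E[\X_\rho(x_0)^2]}$ for a \emph{fixed} point $x_0$ (or, after expanding, for moments of $V_{\gamma'}$), but the GMC density in $V_{\gamma'}$ sits inside a spatial integral which itself sits inside an exponential, so there is no measure change that ``makes the bracket a renormalised square of a shifted field'' while leaving a GMC mass whose polynomial moments you can then invoke. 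No combination of a Hilbert--Schmidt bound on the Wick square with $L^p$ moments of the GMC mass is exhibited that actually dominates $e^{-V_{\gamma'}}$; finiteness of $\E[e^{-V_{\gamma'}}]$ for $\gamma'$ bounded away from $0$ is itself a nontrivial open step in your write-up, not covered by the references you cite.

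Moreover, even near $\gamma'=0$ the Jensen reduction does not help: $L^2$-continuity of $\gamma'\mapsto V_{\gamma'}$ gives convergence in probability of $e^{-V_{\gamma'}}$ to $e^{-V_0}$, but Fatou then only bounds $\E[e^{-V_0}]$ by $\liminf\E[e^{-V_{\gamma'}}]$ --- the wrong direction for a uniform bound --- so you are back to the original difficulty. The paper's proof (following Lacoin--Rhodes--Vargas) attacks the remainder $R_{\gamma,\rho}$ directly at the mesoscopic scale $\rho_\gamma=e^{-\gamma^{-1/8}}$, where $\E[\X_{\rho_\gamma}^2]\sim\gamma^{-1/8}$: on the complement of the event $\mc A=\{\ps{\X_\rho^2\mathds 1_{\X_\rho^2>\gamma^{-1/2}}}_{\bm z,\bm a}\geq e^{-\gamma^{-1/8}}\}$ the factor $e^{\gamma\X_{\rho_\gamma}}$ is uniformly close to $1$, reducing the bound to the pure Wick-square case (your $\gamma'=0$ computation, which is correct), while $\P(\mc A)$ and the scale-comparison event $\mc B$ are super-exponentially small. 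This mechanism exploits $\gamma\to0$ in an essential way and is precisely what is missing from your argument. On the positive side, your endpoint computation $\E[e^{-\sum_n\lambda_n(\xi_n^2-1)}]=\prod_ne^{\lambda_n}(1+2\lambda_n)^{-1/2}$, your remark that Proposition~\ref{prop:DGMC} must be extended from bounded continuous test functions to the singular weights $\Lambda\,\d v_{\gdiv}$, $\sigma\,\d l_{\gdiv}$, and your derivation of~\eqref{eq:gam_DGMC} from~\eqref{eq:DGMC_sup} by uniform integrability (inflating the curvatures by $1+\eta$) are all sound and in the last case somewhat more careful than the paper's ``a.s.\ convergence plus uniform bound'' phrasing; but they do not repair the central gap.
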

    \begin{proof}
        Since the arguments developed are very similar to that of~\cite[Lemma 3.6]{LRV_semi1} (see also~\cite{Hua_Phi} and the proof of~\cite[Proposition 4.2]{LRV_semi2}), we will be brief and highlight the main differences with our setting.
	To start with for any $\gamma\in(0,1)$ the expectation term in Equation~\eqref{eq:DGMC_sup} is well-defined by using Lemma~\ref{lemma:DGMC_DL} together with Girsanov's theorem (see the proof of Theorem~\ref{thm:semi_classical}). Moreover since we have the almost sure convergence $\tilde{\mc D}^{(2)}_\gamma[\X+c]\d v_{\gdiv}\to H_2[\X+c]\d v_{\gdiv}$ as $\gamma\to0$ via Lemma~\ref{lemma:DGMC_DL} (and jointly with the boundary measure) we only need to prove that the expectation term in Equation~\eqref{eq:DGMC_sup} is uniformly bounded for $\gamma$ close to $0$. To this end we set 
	\[
		R_{\gamma,\rho}\coloneqq \int_\Sigma \Lambda \tilde{\mc D}^{(2)}_{\gamma,\rho}[\X]\d v_{\gdiv}+\int_{\partial\Sigma} \sigma\tilde{\mc D}^{(2)}_{\gamma,\rho}[\X]\d l_{\gdiv}\qt{and}\rho_\gamma\coloneqq e^{-\gamma^{-\frac18}}
	\]
	so that $\expect{\X_{\rho_\gamma}(x)^2}=\gamma^{-\frac18}+W_{\rho_\gamma}(x)+o(1)$. Then in the same fashion as in~\cite[Equation (3.25)]{LRV_semi1} or~\cite[Equation (5.14)]{LRV_semi2} we have the uniform bound $\sup\limits_{\gamma\in(0,\frac12)}\expect{e^{-R_{\gamma,\rho_\gamma}}}<\infty$. Indeed
	\begin{align*}
		\expect{e^{-R_{\gamma,\rho_\gamma}}}=\expect{e^{-R_{\gamma,\rho_\gamma}}\left(\mathds 1_{\mc A}+\mathds 1_{\mc A^c}\right)}\qt{where}\mc A\coloneqq \left\{\ps{\X_\rho(\cdot)^2\mathds1_{\X_\rho(\cdot)^2>\gamma^{-\frac12}}}_{\bm z,\bm a}\geq e^{-\gamma^{-\frac18}}\right\}.
	\end{align*}
	Then by Markov inequality $\mathbb P(\mc A)\leq e^{\gamma^{-\frac18}} \expect{\ps{\X_\rho(\cdot)^2\mathds1_{\X_\rho(\cdot)^2>\gamma^{-\frac12}}}_{\bm z,\bm a}}\leq c e^{\gamma^{-\frac18}-c\gamma^{-\frac38}}$ for some $c>0$ where the last bound is obtained by Gaussian computations. Likewise under the event $\mc A^c$ we can bound $R_{\gamma,\rho_\gamma}\geq \ps{H_2[\X-m_{\bm z,\bm a}(\X)]}_{\bm z,\bm a}+C$ for some $C$ independent of $\gamma$. Hence 
	\[
		\expect{e^{-R_{\gamma,\rho_\gamma}}}\leq c e^{c\gamma^{-\frac18}-c\gamma^{-\frac38}}+ C\expect{\exp\left(-\ps{H_{2}[\X-m_{\bm z,\bm a}(\X)]}_{\bm z,\bm a}\right)\mathds 1_{\mc A^c}}
	\]
	(see~\cite{LRV_semi1, LRV_semi2} for more details on the computations). Now the last expectation term above is finite thanks to Lemma~\ref{lemma:unif_DGMC}. This shows validity of the bound $\sup\limits_{\gamma\in(0,\frac12)}\expect{e^{-R_{\gamma,\rho_\gamma}}}<\infty$ so it remains to compare, for a given (small) $\gamma$, $R_{\gamma,\rho_\gamma}$ with $R_{\gamma,0}$. For this we consider
	\[
		\mc B\coloneqq\left\{\ps{\X-\X_{\rho_\gamma}}_{\bm z,\bm a}\geq \gamma^2\right\}\cup\left\{\ps{M_{\gamma,0}-M_{\gamma,\rho_\gamma}}_{\bm z,\bm a}\geq \gamma^3\right\}
	\]
	where, for $\rho\geq 0$, $M_{\gamma,\rho}\d v_g=e^{\gamma\X_{\rho}-\frac{\gamma^2}2\ln\rho}\d v_g$ on $\Sigma$ and $M_{\gamma,\rho}\d l_g=e^{\frac\gamma2\X_{\rho}-\frac{\gamma^2}4\ln\rho}\d l_g$ on $\partial\Sigma$. 
	Then by the exact same proof as the one of~\cite[Equation (3.28)]{LRV_semi1} (the arguments used there remain valid in our setting, see also~\cite[Lemma 5.3]{LRV_semi2}) we have the bound $\mathbb P(\mc B)\leq ce^{-c\gamma^{-2-\frac18}}$. Thus
	\begin{align*}
		\expect{e^{-R_{\gamma,0}}}&=\expect{e^{-R_{\gamma,\rho_\gamma}}e^{\left(R_{\gamma,\rho_\gamma}-R_{\gamma,0}\right)}\mathds1_{\mc B^c}}+\expect{e^{-R_{\gamma,0}}\mathds1_{\mc B}}
			\leq C\expect{e^{-R_{\gamma,\rho_\gamma}}}+ce^{c\gamma^{-2}-c\gamma^{-2-\frac18}}
	\end{align*}
	which is uniformly bounded near $\gamma=0$. The arguments are still true when we consider $\X+c$ instead of $\X$, thus concluding the proof.
    \end{proof}
	
	\subsubsection{Massive Gaussian Free Field}
    We now turn to the definition of the massive GFF that arises in the semi-classical limit of boundary Liouville theory. To this end we will need the following facts, which are classical (see \textit{e.g.}~\cite[Chapter 5]{Taylor_PDE}) when the underlying surface is smooth. And though the surface $(\Sigma,\gdiv)$ considered here has conical singularities and corners, they remain valid in this context too. The reader may consult \textit{e.g.}~\cite{MaWe} for more details.

	Let us consider the positive quadratic form
	\begin{equation}
		Q_{\Lambda,\sigma}(f,g)\coloneqq \frac1{2\pi}\int_\Sigma \left(\ps{\nabla_{\gdiv}f,\nabla_{\gdiv}h}_{\gdiv}^2+\Lambda fh\right)\d v_{\gdiv}+\frac1{2\pi}\int_{\partial\Sigma}\sigma fh\d l_{\gdiv}
	\end{equation}
	with domain $\dot{H}^1(\Sigma,\gdiv)\coloneqq\{f\in H^1(\Sigma,\gdiv),m_{\bm z,\bm a}(f)=0\}$. It defines a self-adjoint operator 
	\begin{equation}
		D_{\Lambda,\sigma}=\Lambda-\Delta_{\gdiv}\qt{with domain}\mc D_{\Lambda,\sigma}\coloneqq\left\{f\in \dot H^{2}(\Sigma,\gdiv)\text{ with }\left(\sigma+\partial_{n_{\gdiv}}\right)f=0\right\},
	\end{equation}
	\textit{i.e.} a massive Laplacian with Robin boundary conditions. 
    
	$D_{\Lambda,\sigma}$ admits a complete basis of orthonormal (for the $L^2(\Sigma,\gdiv)$ scalar product) eigenfunctions $(e_n)_{n\geq0}$ with (strictly) positive ordered eigenvalues $(\lambda_{\Lambda,\sigma,n})_{n\geq0}$. Indeed the arguments developed in~\cite[Chapter 5, Section 1 and Section A]{Taylor_PDE} apply verbatim to the case where the metric has conical singularities and where the boundary conditions are Robin ones, by considering the positive quadratic form $Q_{\Lambda,\sigma}$ instead of $(-\Delta u,u)$ there. This shows that $D_{\Lambda,\sigma}$ has an inverse $T$ which is positive and self-adjoint over $L^2(\Sigma,\gdiv)$. Moreover since from Proposition~\ref{prop:moser_trudinger} we have a compact embedding $\dot H^1(\Sigma,\gdiv)\hookrightarrow L^2(\Sigma,\gdiv)$ we know that $T$ is compact. As such there is an orthonormal basis of $L^2(\Sigma,\gdiv)$ that consists of eigenfunctions of $T$, and thus of $D_{\Lambda,\sigma}$. We can then define a Green's function $G_{\bm z,\bm a}(x,y)=\sum_{n\geq 1}\frac{e_n(x)e_n(y)}{\lambda_{\Lambda,\sigma,n}}$, which is such that for any $f$ smooth over $\bar\Sigma$ and any $x\in\Sigma$:
	\begin{eqs}
		&f(x)-m_{\bm z,\bm a}(f)=\\
		&\frac{1}{2\pi}\int_\Sigma G_{\bm z,\bm a}(x,y)\left(\Lambda-\Delta_{\gdiv}\right)f(y)\d v_{\gdiv}(y)+\frac{1}{2\pi}\int_\Sigma G_{\bm z,\bm a}(x,y)\left(\sigma+\partial_{n_{\gdiv}}\right)f(y)\d l_{\gdiv}(y).
	\end{eqs}
    Likewise we can define a positive quadratic form $\delta_{\Lambda,\sigma}$ with domain $\dot{H}^{-1}(\Sigma,\gdiv)$
    \begin{equation}
        \delta_{\Lambda,\sigma}(f,h)\coloneqq \ps{(\dot Gf)h}_{\bm z,\bm a},\qt{where}(\dot Gf)(x)\coloneqq\frac1{2\pi}\int_\Sigma \dot G(x,y)f(y)\d v_g(y).
    \end{equation}
    Here $\dot G$ is Neumann's Green function normalized to have $\ps{\dot G(x,\cdot)}_{\bm z,\bm a}=0$ for all $x\in\bar\Sigma$. Then 
    \begin{equation}
        \delta_{\Lambda,\sigma}(e_k,e_l)=\sum_{n\geq 1}Q_{\Lambda,\sigma}(e_k,e_n)\ps{e_n, (\dot G e_l)}_{\gdiv}-\delta_{k,l}
    \end{equation}
     for any positive integers $k,l$, and $\delta_{0,0}(e_k,e_l)=0$.
	\begin{lemma}\label{lemma:spec_mass}
		We have $\sum_{n\geq 1}\frac{1}{\lambda_{\Lambda,\sigma,n}^2}<\infty$. Likewise set $\delta\coloneqq\left(\delta_{\Lambda,\sigma}(e_k,e_l)\right)_{k,l\geq 1}$. Then 
        \begin{equation}
            0<\det\left((\rm I+\delta)e^{-\delta}\right)<\infty.
        \end{equation}
    \end{lemma}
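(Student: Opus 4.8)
The plan is to treat the two assertions in turn, the first being a spectral estimate that then feeds into the second. \emph{Summability of $\lambda_{\Lambda,\sigma,n}^{-2}$:} since $\Lambda\geq0$ on $\Sigma$ and $\sigma\geq0$ on $\partial\Sigma$, the form $Q_{\Lambda,\sigma}$ dominates $\tfrac1{2\pi}\int_\Sigma\norm{\nabla_{\gdiv}f}_{\gdiv}^2\,\d v_{\gdiv}$ on $\dot H^1(\Sigma,\gdiv)$, so by the min--max principle $\lambda_{\Lambda,\sigma,n}\geq\tfrac1{2\pi}\mu_n$, with $(\mu_n)_{n\geq1}$ the nonzero Neumann eigenvalues of $-\Delta_{\gdiv}$. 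The surface $(\Sigma,\gdiv)$ carries conical singularities and corners but has finite volume and finite boundary length (because $a_k,b_l>-1$), and on such surfaces the Laplacian still obeys Weyl's law $\mu_n\sim c\,n$ (part of the spectral package of~\cite{Taylor_PDE,MaWe} already invoked above). Hence $\sum_n\lambda_{\Lambda,\sigma,n}^{-2}\lesssim\sum_n n^{-2}<\infty$; equivalently $T=D_{\Lambda,\sigma}^{-1}$ is Hilbert--Schmidt, which — since $T$ is the integral operator with kernel $\tfrac1{2\pi}G_{\bm z,\bm a}$ — is the statement $G_{\bm z,\bm a}\in L^2(\Sigma\times\Sigma,v_{\gdiv}^{\otimes2})$ (this last fact can also be obtained by hand: $G_{\bm z,\bm a}$ has only a logarithmic singularity on the diagonal, is locally bounded away from it, and near the conical points is tamed by $e^{\Hdiv}\in L^p_{\mathrm{loc}}$ for some $p>1$, after which Fubini closes the estimate).

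\emph{Reduction for the determinant:} the matrix $\delta=(\delta_{\Lambda,\sigma}(e_k,e_l))_{k,l\geq1}$ is real symmetric, hence a self-adjoint operator on $\ell^2$, and $\det\big((\mathrm I+\delta)e^{-\delta}\big)$ is by definition its Carleman ($2$-)regularized determinant $\det{}_2(\mathrm I+\delta)$. By the standard theory of regularized determinants, once $\delta$ is Hilbert--Schmidt this quantity is finite and equals the absolutely convergent product $\prod_j(1+\mu_j)e^{-\mu_j}$ over the eigenvalues $\mu_j$ of $\delta$; Hilbert--Schmidtness forces $\mu_j\to0$, so only finitely many factors can fail to be positive, and $\det{}_2(\mathrm I+\delta)>0$ exactly when $\mathrm I+\delta$ is positive definite. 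It thus remains to show (i) $\delta$ is Hilbert--Schmidt, and (ii) $\mathrm I+\delta\geq c\,\mathrm I$ for some $c>0$.

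\emph{Step (i):} substituting $Q_{\Lambda,\sigma}(e_k,e_n)=\tfrac{\lambda_{\Lambda,\sigma,k}}{2\pi}\delta_{k,n}$ (integration by parts against the Robin eigenfunction $e_n$) into the identity $\delta_{\Lambda,\sigma}(e_k,e_l)=\sum_{n\geq1}Q_{\Lambda,\sigma}(e_k,e_n)\ps{e_n,\dot Ge_l}_{\gdiv}-\delta_{k,l}$ recorded above rewrites $\mathrm I+\delta$ as the operator with entries $\tfrac{\lambda_{\Lambda,\sigma,k}}{2\pi}\ps{e_k,\dot Ge_l}_{\gdiv}$. Since $\dot G$ is an inverse-Laplacian-type kernel, $\tfrac1{2\pi}D_{\Lambda,\sigma}\dot G$ differs from $\mathrm I$ only by a finite-rank correction plus the composition of the bounded multiplier $\Lambda$ (and a boundary trace term) with the smoothing operator $\dot G$; in two dimensions such an order-$(-2)$ operator has square-summable singular values, so $\mathrm I+\delta=\mathrm I+\mathcal K$ with $\mathcal K$ Hilbert--Schmidt, i.e.\ $\delta$ is Hilbert--Schmidt. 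Concretely this amounts to checking $\sum_{k,l}\delta_{\Lambda,\sigma}(e_k,e_l)^2<\infty$, with the $\lambda_n^{-2}$-summability of the first part as the quantitative input.

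\emph{Step (ii) and the main difficulty:} for the lower bound one uses that $\mathrm I+\delta$ is, up to conjugation, the ratio of the two positive self-adjoint covariance operators $G_{\bm z,\bm a}=D_{\Lambda,\sigma}^{-1}$ and $\dot G$ — this is precisely why $e^{-\delta}$ appears, it is the Wick renormalization turning the divergent $\det{}_1$ into the convergent $\det{}_2$ — and tracking the normalizing constants yields $\mathrm I+\delta\geq c\,\mathrm I$ for some $c>0$, so that $-1\notin\mathrm{spec}(\delta)$ and $\det{}_2(\mathrm I+\delta)>0$. Together with the reduction above, (i) and (ii) give $0<\det\big((\mathrm I+\delta)e^{-\delta}\big)<\infty$. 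The main obstacle is step (i): making the ``order $-2$'' heuristic rigorous requires careful bookkeeping of the two measures $v_g$ and $v_{\gdiv}$, of the bulk/boundary split in $\ps{\cdot}_{\bm z,\bm a}$, and of the non-uniform sup-norms of the eigenfunctions $e_k$ near the conical points; the abstract input on $\det{}_2$ and the Weyl asymptotics are entirely standard.
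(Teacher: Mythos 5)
Your proof is correct and takes essentially the same route as the paper: the first claim is reduced to square-integrability of the Green kernel of $D_{\Lambda,\sigma}^{-1}$ (equivalently, Weyl's law on the singular surface --- the paper states both options), and the second follows from the Hilbert--Schmidt property of the operator underlying $\delta_{\Lambda,\sigma}$, which makes the Carleman determinant $\det{}_2(\mathrm{I}+\delta)=\det\big((\mathrm{I}+\delta)e^{-\delta}\big)$ convergent. Your explicit step (ii) establishing strict positivity (conjugating $\mathrm{I}+\delta$ to a product of positive covariance operators and using compactness to rule out $-1\in\mathrm{spec}(\delta)$) is actually more careful than the paper's proof, which only asserts finiteness of $\ln\det\big((\mathrm{I}+\delta)e^{-\delta}\big)$ and leaves positivity implicit in the Gram-matrix computation of the subsequent lemma.
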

	\begin{proof}
		The first point follows from $\int_{\Sigma^2}\norm{G_{\bm z,\bm a}(x,y)}^2\d v_{\gdiv}(x)\d v_{\gdiv}(y)<\infty$, which is true since  $G_{\bm z,\bm a}(x,\cdot)$ belongs to any $L^q(\Sigma,g)$ for $q<\infty$ while $e^{\phi_{\bm z,\bm a}}$ to $L^p(\Sigma,g)$ for some $p>1$. Alternatively this follows from Weyl's law for the Laplacian on surfaces with conical singularities proved \textit{e.g.} in~\cite{BS87, Che83}. The same argument applies to $\delta_{\Lambda,\sigma}$, showing that the corresponding self-adjoint operator is Hilbert-Schmidt, and thus $\rm{Tr}(\delta_{\Lambda,\sigma}^2)<\infty$. Choosing the orthonormal basis of $L^2(\Sigma,\gdiv)$ given by the $(e_k)_{k\geq 1}$ shows that the matrix $\delta$ satisfies $\rm{Tr}(\delta^2)<\infty$, which in turn implies that $\ln\det((I+\delta)e^{-\delta})$ is finite.
	\end{proof}
	Based on these estimates we can show the following:
	\begin{lemma}\label{lemma:unif_DGMC}
		The following quantities are finite and positive for any $c\in\R$: 
		\begin{equation}
			\expect{\exp\left(-\ps{H_{2}[\X+c]}_{\bm z,\bm a}\right)},\quad\expect{\exp\left(-\ps{H_{2}[\X-m_{\bm z,\bm a}(\X)]}_{\bm z,\bm a}\right)}.
		\end{equation}
        Moreover set $w_{\bm z,\bm a}\coloneqq\ps{W+\expect{\left(m_{\bm z,\bm a}(\X)\right)^2}}_{\bm z,\bm a}$ (recall $W$ from Lemma~\ref{lemma:approx_green}). Then
        \begin{equation}
            \expect{\exp\left(-\ps{H_{2}[\X-m_{\bm z,\bm a}(\X)]}_{\bm z,\bm a}\right)}=e^{-w_{\bm z,\bm a}}\det\left((\rm I+\delta)e^{-\delta}\right)^{-\frac12}.
        \end{equation}
	\end{lemma}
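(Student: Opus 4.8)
The plan is to expand the renormalised square $H_2$ in Wiener chaos and then carry out an explicit Gaussian computation in a basis adapted to the massive operator $D_{\Lambda,\sigma}$. First, from $H_2^{(\rho)}(\X_\rho)=\X_\rho^2+\ln\rho$ and $\E[\X_\rho(x)^2]=-\ln\rho+W(x)+o(1)$ (Lemma~\ref{lemma:approx_green}) one obtains, in $L^2$,
\[
H_2[\X+c]=\,{:}\X^2{:}\,+\,W+2c\X+c^2 ,
\]
where ${:}\X^2{:}$ is the Wick square associated with the covariance $G_g$, so that $\ps{H_2[\X+c]}_{\bm z,\bm a}=\ps{{:}\X^2{:}}_{\bm z,\bm a}+\ps{W}_{\bm z,\bm a}+2c\ps{\X}_{\bm z,\bm a}+c^2\ps{1}_{\bm z,\bm a}$. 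Since $m_{\bm z,\bm a}(\X)=\ps{\X}_{\bm z,\bm a}/\ps{1}_{\bm z,\bm a}$ is a centred Gaussian variable and $\ps{\X-m_{\bm z,\bm a}(\X)}_{\bm z,\bm a}=0$, the field $\bar\X\coloneqq\X-m_{\bm z,\bm a}(\X)$ is centred Gaussian with covariance the normalised Neumann Green function $\dot G$, and the same manipulation gives $\ps{H_2[\bar\X]}_{\bm z,\bm a}=\ps{{:}\bar\X^2{:}}_{\bm z,\bm a}+\ps{\dot W}_{\bm z,\bm a}$, where ${:}\bar\X^2{:}$ is the Wick square for $\dot G$, $\dot W$ its renormalised diagonal, and $\ps{\dot W}_{\bm z,\bm a}$ is computed explicitly from Lemma~\ref{lemma:approx_green}.

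For the finiteness and positivity, set $\d\mu_{\bm z,\bm a}\coloneqq\frac1{4\pi}\Lambda\,\d v_{\gdiv}+\frac1{4\pi}\sigma\,\d l_{\gdiv}$, so that $\ps{f}_{\bm z,\bm a}=\int_{\overline\Sigma}f\,\d\mu_{\bm z,\bm a}$, and let $C$ be the integral operator on $L^2(\mu_{\bm z,\bm a})$ with kernel $\dot G$. It is nonnegative (Green function) and Hilbert--Schmidt, since $\mathrm{Tr}(C^2)=\int\!\!\int\dot G(x,y)^2\,\d\mu_{\bm z,\bm a}(x)\d\mu_{\bm z,\bm a}(y)<\infty$ by the integrability of $G_g$ and $e^{\phidiv}$ already used in the proof of Lemma~\ref{lemma:spec_mass}. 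Expanding $\bar\X$ along an $L^2(\mu_{\bm z,\bm a})$-orthonormal eigenbasis of $C$ (eigenvalues $\lambda_j\ge0$, $\sum_j\lambda_j^2<\infty$) yields $\ps{{:}\bar\X^2{:}}_{\bm z,\bm a}=\sum_j(\eta_j^2-\lambda_j)$ with $(\eta_j)$ independent, $\eta_j\sim\mathcal N(0,\lambda_j)$; hence
\[
\E\big[e^{-\ps{{:}\bar\X^2{:}}_{\bm z,\bm a}}\big]=\prod_j\frac{e^{\lambda_j}}{\sqrt{1+2\lambda_j}}\in(0,\infty),
\]
the product converging absolutely because $e^{\lambda}/\sqrt{1+2\lambda}=1+O(\lambda^2)$. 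Multiplying by the finite constant $e^{-\ps{\dot W}_{\bm z,\bm a}}$ proves finiteness and positivity of the second expectation in the statement; the first one is obtained in the same way after diagonalising ${:}\X^2{:}$ (operator with kernel $G_g$ on $L^2(\mu_{\bm z,\bm a})$), the extra linear term $2c\ps{\X}_{\bm z,\bm a}$ merely multiplying each one-dimensional Gaussian integral by $\exp(O(c^2b_j^2\lambda_j))$ with $\sum_j b_j^2\lambda_j=\Var{\ps{\X}_{\bm z,\bm a}}<\infty$.

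For the explicit formula one reruns the Gaussian computation in the $L^2(\Sigma,\gdiv)$-orthonormal eigenbasis $(e_n)_{n\ge1}$ of $D_{\Lambda,\sigma}$. The key point is that $\delta_{\Lambda,\sigma}(f,h)=\ps{(\dot Gf)h}_{\bm z,\bm a}$ encodes precisely the difference between the massless Dirichlet form (inverted by $\dot G$) and the massive one $Q_{\Lambda,\sigma}$ (inverted by $G_{\bm z,\bm a}=\sum_{n\ge1}e_ne_n/\lambda_{\Lambda,\sigma,n}$), so that in this basis $\ps{{:}\bar\X^2{:}}_{\bm z,\bm a}$ becomes the quadratic form of the matrix $\delta$ of Lemma~\ref{lemma:spec_mass}; performing the (regularised) Gaussian integral then gives $\det((\mathrm{I}+\delta)e^{-\delta})^{-1/2}$, the $e^{-\delta}$ counterterm being exactly the one generated by the point-splitting renormalisation. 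It then remains to check that the residual additive constant --- gathering $\ps{\dot W}_{\bm z,\bm a}$, the diagonal logarithmic divergence of $G_{\bm z,\bm a}$, the removal of the zero mode (the truncation $n\ge1$, cf.\ $\delta_{0,0}(e_k,e_l)=0$) and the $\ps{\cdot}_{\bm z,\bm a}$-centring built into $\dot G$ and into $m_{\bm z,\bm a}(\X)$ --- equals exactly $w_{\bm z,\bm a}=\ps{W+\E[m_{\bm z,\bm a}(\X)^2]}_{\bm z,\bm a}$.

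I expect this last bookkeeping to be the main obstacle. Because the conical singularities force all the operators in play to be only Hilbert--Schmidt and not trace class, the naive Fredholm determinant of the difference operator diverges and one must work throughout with the regularised determinant $\det((\mathrm{I}+\delta)e^{-\delta})$ of Lemma~\ref{lemma:spec_mass}; matching this spectral regularisation with the point-splitting regularisation of the derivative GMC measures --- and in particular pinning down the constant $w_{\bm z,\bm a}$, with the correct treatment of the zero mode and of the $\ps{\cdot}_{\bm z,\bm a}$-mean of the field --- is the delicate part, the Gaussian and chaos manipulations themselves being standard.
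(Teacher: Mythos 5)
Your treatment of the first assertion (finiteness and positivity) is correct and is in fact a clean variant of the paper's argument: instead of expanding $\bar\X\coloneqq\X-m_{\bm z,\bm a}(\X)$ along the eigenbasis of $D_{\Lambda,\sigma}$ and manipulating Gram matrices, you diagonalise the covariance operator with kernel $\dot G$ directly on $L^2(\mu_{\bm z,\bm a})$ and read off the product $\prod_j e^{\lambda_j}(1+2\lambda_j)^{-1/2}$, whose convergence only needs the Hilbert--Schmidt bound $\sum_j\lambda_j^2<\infty$ --- the same input as Lemma~\ref{lemma:spec_mass}. The handling of the linear term for $\X+c$ via $\sum_jb_j^2\lambda_j=\Var{\ps{\X}_{\bm z,\bm a}}<\infty$ is also fine, and this route arguably makes the positivity and finiteness more transparent than the paper's Gram-matrix computation.

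The second assertion is where the proposal has a genuine gap. You assert the determinant identity, claim that in the $D_{\Lambda,\sigma}$-eigenbasis the quadratic form \lq\lq becomes" $\delta$, and then explicitly defer the matching of the additive constant with $w_{\bm z,\bm a}$ as delicate bookkeeping you have not carried out. Two points. First, you are looking for that constant in the wrong place: it does not arise from any interplay between the spectral and the point-splitting regularisations. The paper disposes of it in one line \emph{before} any spectral theory, through the identity $\ps{H_{2}[\bar\X]}_{\bm z,\bm a}=\ps{\bar\X^2-\expect{\bar\X^2}}_{\bm z,\bm a}+w_{\bm z,\bm a}$, which follows from $H_2^{(\rho)}(\cdot)=(\cdot)^2+\ln\rho$, Lemma~\ref{lemma:approx_green} and the definition of $m_{\bm z,\bm a}(\X)$; after this the remaining Gaussian functional is exactly centred and the regularised determinant carries no residual constant. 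Your own first paragraph already isolates this constant as $\ps{\dot W}_{\bm z,\bm a}$ --- it only remains to evaluate it, which is elementary. Second, and more importantly, the identification of $\expect{\exp\left(-\ps{\bar\X^2-\expect{\bar\X^2}}_{\bm z,\bm a}\right)}$ with $\det\left((\mathrm I+\delta)e^{-\delta}\right)^{-\frac12}$ is precisely the content of the lemma and cannot be waved through: the matrix $\delta$ is written in the $L^2(\Sigma,\gdiv)$-orthonormal basis $(e_n)$ of $D_{\Lambda,\sigma}$, which is orthogonal neither for $L^2(\mu_{\bm z,\bm a})$ nor for the covariance of $\bar\X$, so one must actually perform the finite-dimensional computation $\expect{\exp\left(-\ps{\bar\X_N^2}_{\bm z,\bm a}\right)}=\det(D_{0,N})^{\frac12}\det(D_{\Lambda,\sigma,N})^{-\frac12}$ together with $\expect{\ps{\bar\X_N^2}_{\bm z,\bm a}}=\frac12\mathrm{Tr}\left(D_{\Lambda,\sigma,N}D_{0,N}^{-1}-\mathrm I_N\right)$, recognise $\delta_N=D_{\Lambda,\sigma,N}D_{0,N}^{-1}-\mathrm I_N$, and then pass to the limits $N\to\infty$ and $\rho\to0$ using Lemma~\ref{lemma:spec_mass} and a uniform bound. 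This is the step the paper carries out and your proposal omits.
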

	\begin{proof}
		We write $\bar\X\coloneqq \X-m_{\bm z,\bm a}(\X)$. First of all, from Lemma~\ref{lemma:approx_green} and by definition of $H_2$ we have $\ps{H_2[\bar\X]}_{\bm z,\bm a}=\ps{\bar\X^2-\expect{\bar\X^2}}_{\bm z,\bm a}+w_{\bm z,\bm a}$.
        Let $(e_n)_{n\geq0}$ be the sequence of eigenfunctions of $D_{\Lambda,\sigma}$ described above and set $x_n\coloneqq \ps{\bar\X, e_n}_{\gdiv}$ for $n\geq 1$: they are centered Gaussian random variables with covariance kernel $\expect{x_kx_l}=Q_{0,0}(e_k,e_l)$. Moreover for any $N\geq 1$
		\[
		      \bar\X=\bar\X_{N}+\bar\X_{>N}\coloneqq\sum_{n=1}^Nx_ne_n+\sum_{n\geq N+1}x_ne_n
		\]
		where the second sum converges in $H^{-1}(\Sigma,g)$ almost surely via Lemma~\ref{lemma:spec_mass}. Now by Gaussian computations and using the definition of $\ps{\cdot}_{\bm z,\bm a}$ we can evaluate
		\[
		\expect{\exp\left(-\ps{\bar\X_N^2}_{\bm z,\bm a}\right)}=\frac{\det\left(D_{0,N}\right)^{\frac12}}{\det\left(D_{{\Lambda,\sigma},N}\right)^{\frac12}}\qt{and}\expect{\ps{\bar\X_N^2}_{\bm z,\bm a}}=\frac12\text{Tr}\left(D_{{\Lambda,\sigma},N}D_{0,N}^{-1}-\mathrm I_N\right)
		\]
		where $D_{0,N}$ and $D_{\Lambda,\sigma}$ are the $N\times N$ (Gram thus invertible) matrices with entries $\left(D_{0,N}\right)_{k,l}=Q_{0,0}(e_k,e_l)$ and $\left(D_{\Lambda,\sigma,N}\right)_{k,l}=Q_{\Lambda,\sigma}(e_k,e_l)=\lambda_k\delta_{k,l}$. Hence 
		\[
		\expect{\exp\left(-\ps{\bar\X_N^2-\expect{\bar\X_N^2}}_{\bm z,\bm a}\right)}=\det\left((\mathrm I_N+\delta_N)e^{-\delta_N}\right)^{-\frac12}\text{ with }\delta_N\coloneqq D_{\Lambda,\sigma,N}D_{0,N}^{-1}-\mathrm{I}_N.
		\]
        Thanks to Lemma~\ref{lemma:spec_mass}, such Gaussian computations thus allow to infer that
		\[
		      \expect{\exp\left(-\sum_{n,m\geq 1}\left(x_nx_m-\expect{x_nx_m}\right)\ps{e_ne_m}_{\bm z,\bm a}\right)}=\det\left((\mathrm I+\delta)e^{-\delta}\right)^{-\frac12}
		\]
        where the latter is finite and non-zero. To conclude it thus remains to show that indeed 
        \[
            \expect{\exp\left(-\sum_{n,m\geq 1}\left(x_nx_m-\expect{x_nx_m}\right)\ps{e_ne_m}_{\bm z,\bm a}\right)}=\expect{\exp\left(-\ps{\bar\X^2-\expect{\bar \X^2}}_{\bm z,\bm a}\right)}
        \]
        where the right-hand side is defined via $\lim\limits_{\rho\to0}\ps{\bar\X_\rho^2-\expect{\bar \X_\rho^2}}_{\bm z,\bm a}$. For this we use the almost sure limit $\lim\limits_{\rho\to0}\ps{\bar\X_\rho^2-\expect{\bar \X_\rho^2}}_{\bm z,\bm a}=\sum_{n,m\geq 1}\left(x_nx_m-\expect{x_nx_m}\right)\ps{e_ne_m}_{\bm z,\bm a}$ together with the uniform bound $\sup\limits_{\rho>0}\expect{\exp\left(-\ps{\bar\X_\rho^2-\expect{\bar \X_\rho^2}}_{\bm z,\bm a}\right)}<\infty$ which follows from finiteness of $\expect{\exp\left(-\ps{\bar\X^2-\expect{\bar \X^2}}_{\bm z,\bm a}\right)}$. Recollecting terms gives the second item.
	\end{proof}
	As we now expect from the previous arguments the latter gives rise to a massive GFF:
	\begin{lemma}\label{lemma:massive_GMC}
		Define a probability measure $\mathbb P_{\bm z,\bm a}$ over $H^{-1}(\Sigma,\gdiv)$ whose Radom-Nikodym derivative with respect to that of $\X$ is given by $\d\mathbb P_{\bm z,\bm a}\propto \exp\left(-\ps{H_{2}[\X-m_{\bm z,\bm a}(\X)]}_{\bm z,\bm a}\right)\d\mathbb P$. Then, under $\mathbb P_{\bm z,\bm a}$, $\X-m_{\bm z,\bm a}(\X)$ is a Gaussian field with covariance kernel given by $G_{\bm z,\bm a}$.
	\end{lemma}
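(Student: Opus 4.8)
The plan is to view $\mathbb P_{\bm z,\bm a}$ as a quadratic (Gaussian) change of measure and simply read off the covariance of the tilted field, reusing the Gaussian computation already performed in the proof of Lemma~\ref{lemma:unif_DGMC}. Write $\bar\X:=\X-m_{\bm z,\bm a}(\X)$. By the identity $\langle H_2[\bar\X]\rangle_{\bm z,\bm a}=\langle\bar\X^2-\E[\bar\X^2]\rangle_{\bm z,\bm a}+w_{\bm z,\bm a}$ used there, the Radon--Nikodym density $\d\mathbb P_{\bm z,\bm a}/\d\mathbb P$ depends on $\X$ only through $\bar\X$ and equals, up to a finite positive normalising constant (Lemma~\ref{lemma:unif_DGMC}), $\exp\big(-\langle\bar\X^2-\E[\bar\X^2]\rangle_{\bm z,\bm a}\big)$, i.e.\ the exponential of minus the Wick square of $\bar\X$ integrated against $\tfrac1{4\pi}\Lambda\,\d v_{\gdiv}+\tfrac1{4\pi}\sigma\,\d l_{\gdiv}$. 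As this is a quadratic functional of $\bar\X$, even under $\bar\X\mapsto-\bar\X$, the field $\bar\X$ stays centred Gaussian under $\mathbb P_{\bm z,\bm a}$, and it only remains to show that its covariance kernel is $G_{\bm z,\bm a}$.

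To identify this covariance I would expand $\bar\X=\sum_{n\geq1}x_ne_n$ in the $L^2(\Sigma,\gdiv)$-orthonormal eigenbasis $(e_n)_{n\geq1}$ of $D_{\Lambda,\sigma}$, with $x_n=\langle\bar\X,e_n\rangle_{\gdiv}$ and convergence in $H^{-1}(\Sigma,g)$ provided by Lemma~\ref{lemma:spec_mass}, exactly as in the proof of Lemma~\ref{lemma:unif_DGMC}. Since $Q_{\Lambda,\sigma}(e_k,e_l)=\lambda_{\Lambda,\sigma,k}\delta_{k,l}$ while $Q_{\Lambda,\sigma}(f,h)=Q_{0,0}(f,h)+2\langle fh\rangle_{\bm z,\bm a}$, the truncated weight $\exp\big(-\sum_{k,l\leq N}(x_kx_l-\E[x_kx_l])\langle e_ke_l\rangle_{\bm z,\bm a}\big)$ shifts the precision matrix of the Gaussian vector $(x_1,\dots,x_N)$ by exactly the difference between the Gram matrices of $Q_{\Lambda,\sigma}$ and $Q_{0,0}$, turning it into $\operatorname{diag}(\lambda_{\Lambda,\sigma,k})$; this is precisely the finite-dimensional computation carried out in the proof of Lemma~\ref{lemma:unif_DGMC} (whose normalising determinant $\det((\mathrm I+\delta)e^{-\delta})^{-1/2}$ encodes this shift), and it gives that the tilted covariance of $(x_1,\dots,x_N)$ is $\operatorname{diag}(1/\lambda_{\Lambda,\sigma,k})$. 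But $\operatorname{diag}(1/\lambda_{\Lambda,\sigma,k})$ is exactly the covariance of the coordinates $(\langle\Y,e_k\rangle_{\gdiv})_k$ of a Gaussian field $\Y$ with kernel $G_{\bm z,\bm a}=\sum_{n\geq1}\lambda_{\Lambda,\sigma,n}^{-1}e_n(\cdot)e_n(\cdot)$. Letting $N\to\infty$ — licit by the uniform exponential-moment bound of Lemma~\ref{lemma:unif_DGMC} together with $\sum_n\lambda_{\Lambda,\sigma,n}^{-2}<\infty$ — and reconstructing $\bar\X$ from its modes then identifies the law of $\bar\X$ under $\mathbb P_{\bm z,\bm a}$ with that massive GFF with Robin boundary conditions.

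Conceptually this is nothing but the statement that adding the Wick-quadratic weight shifts the inverse covariance of the recentred Neumann GFF, whose Cameron--Martin form is $Q_{0,0}$, by twice the bilinear form $(f,h)\mapsto\langle fh\rangle_{\bm z,\bm a}$, producing the form $Q_{\Lambda,\sigma}$ — whose Green's function is by construction $G_{\bm z,\bm a}$ — the boundary part of $\langle fh\rangle_{\bm z,\bm a}$ simultaneously turning the Neumann condition into the Robin one $(\sigma+\partial_{n_{\gdiv}})\,\cdot=0$. The one genuinely technical point is the justification of this Gaussian computation in infinite dimensions, namely the interchange of the $\rho\to0$ regularisation, the $N\to\infty$ mode truncation and the tilting; this however has already been settled in Lemmas~\ref{lemma:spec_mass} and~\ref{lemma:unif_DGMC}, so that given those inputs the proof reduces to the short linear-algebra identity sketched above.
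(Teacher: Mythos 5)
Your route is genuinely different from the paper's. The paper proves this lemma in one stroke by computing the Laplace functional $\E_{\bm z,\bm a}\bigl[e^{\ps{\X,f}_{\Lambda,\sigma}}\bigr]$ for a smooth test function $f$ with $m_{\bm z,\bm a}(f)=0$, where $\ps{f,h}_{\Lambda,\sigma}$ is the bilinear form of $D_{\Lambda,\sigma}$ including the Robin boundary term: a Girsanov (Cameron--Martin) shift of $\X$ by $f$ turns the tilted expectation back into the normalising constant times $e^{\frac12\ps{f,f}_{\Lambda,\sigma}}$, which identifies the law of $\bar\X\coloneqq\X-m_{\bm z,\bm a}(\X)$ as Gaussian with covariance $G_{\bm z,\bm a}$ without ever decomposing into modes. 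You instead expand in the eigenbasis of $D_{\Lambda,\sigma}$ and argue mode by mode via a finite-dimensional tilt.

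As written, that reduction has a genuine gap. First, the precision matrix of the marginal $(x_1,\dots,x_N)$ under $\mathbb P$ is \emph{not} the truncated Gram matrix $\bigl(Q_{0,0}(e_k,e_l)\bigr)_{k,l\leq N}$: the basis $(e_n)$ diagonalizes $Q_{\Lambda,\sigma}$ but not $Q_{0,0}$, so truncating the quadratic form and inverting do not commute (the marginal precision is a Schur complement, not the $N\times N$ block), and hence adding $2\ps{e_ke_l}_{\bm z,\bm a}$ to it does not yield $\mathrm{diag}(\lambda_{\Lambda,\sigma,k})$ at finite $N$. Second, the actual Radon--Nikodym weight involves all modes: the cross terms $\sum_{k\leq N<l}x_kx_l\ps{e_ke_l}_{\bm z,\bm a}$ do not vanish (the $e_n$ are not orthogonal for $\ps{\cdot\,\cdot}_{\bm z,\bm a}$), so the marginal of $(x_1,\dots,x_N)$ under $\mathbb P_{\bm z,\bm a}$ is not the finite-dimensional tilt of its marginal under $\mathbb P$. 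Both defects are only asymptotically negligible, and the uniform exponential-moment bound of Lemma~\ref{lemma:unif_DGMC} together with $\sum_n\lambda_{\Lambda,\sigma,n}^{-2}<\infty$ controls normalising constants but does not by itself pin down the limiting covariance. The economical repair is exactly the paper's argument: take $f=\sum_{k\leq N}\theta_ke_k$ and compute $\E_{\bm z,\bm a}\bigl[e^{\ps{\X,f}_{\Lambda,\sigma}}\bigr]$ by a Cameron--Martin shift, which delivers the joint law of the modes in one step and bypasses the truncation issue entirely.
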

	\begin{proof}
		Set $\ps{f,h}_{\Lambda,\sigma}\coloneqq \frac1{2\pi}\int_\Sigma f(\Lambda-\Delta_{\gdiv})h\d v_{\gdiv}+\frac1{2\pi}\int_{\partial\Sigma}f(\sigma+\partial_{n_{\gdiv}})h\d l_{\gdiv}$ and let $f$ be smooth over $\overline{\Sigma}$ with $m_{\bm z,\bm a}(f)=0$. Then using Girsanov's theorem we get
		\begin{align*}
			\expect{e^{\ps{\X,f}_{\Lambda,\sigma}}e^{-\ps{H_{2}[\X-m_{\bm z,\bm a}(\X)]}_{\bm z,\bm a}}}=\expect{e^{2\ps{(\X+f) f}_{\bm z,\bm a}}e^{-\ps{H_{2}[\X+f-m_{\bm z,\bm a}(\X+f)]}_{\bm z,\bm a}}}e^{\frac1{4\pi}\int_\Sigma\norm{\nabla f}^2\d v}
		\end{align*}
		Simplifying the latter we conclude that $\E_{\bm z,\bm a}\left[e^{\ps{\X ,f}_{\Lambda,\sigma}}\right]=e^{\frac12\ps{f,f}_{\Lambda,\sigma}}$ as expected  .
	\end{proof}
	We will call the field $\X_{\bm z,\bm a}$ thus defined a massive GFF in the metric $\gdiv$ with mass $\Lambda$ and Robin boundary conditions $(\sigma+\partial_{n_{\gdiv}})\X_{\bm z,\bm a}=0$.
	
	
	
	\subsection{The semi-classical limit}
	The classical theory pertaining Liouville CFT should be recovered by taking the coupling constant $\gamma$ to $0$, corresponding to the semi-classical limit. 
	We show that this is indeed the case and are actually more precise since we are able to write a two-terms expansion for the Liouville field $\Phi$. This expansion takes (formally) the form $\Phi=\frac1\gamma \phidiv +\tilde\Phi$ as $\gamma\to0$, where $\phidiv$ is the classical field and $\tilde\Phi$ a massive GFF: 
	\begin{theorem}\label{thm:semi_classical}
		Let $\Div$ be a divisor on $\Sigma$ with $a_k,b_l>-1$ for all $k,l$ and $\eul<0$. Let $\Lambda\not\equiv 0$ (resp.  $\sigma$) be non-negative and continuous over $\Sigma$ (resp. $\partial\Sigma\setminus\bm z$), and $\phidiv$ be the corresponding solution for Problem~\ref{prob:class}. Then for any $F$ continuous bounded over $H^{-1}(\Sigma,g)$
		\begin{eqs}\label{eq:semi_classical}
			&\ps{F\left(\Phi-\frac1\gamma\phidiv\right)\prod_{k=1}^{N}V_{-\frac{2a_k}\gamma}(z_k)\prod_{l=1}^{M}V_{-\frac{2\beta_l}\gamma}(s_l)}_{\gamma,\frac{\Lambda}{\gamma^2},\frac{\sigma}{\gamma^2}}=\frac{e^{-\frac1{\gamma^2}S_{\bm z,\bm a}}}{\mc Z_g}[F]_\gamma,\qt{where}\\
			&\lim\limits_{\gamma\to 0}[F]_\gamma=e^{-\frac12\chi(\Sigma)c_*} \int_\R\expect{F(\X+c)e^{-\ps{H_2(\X+c)}_{\bm z,\bm a}}}\d c.
		\end{eqs}
	\end{theorem}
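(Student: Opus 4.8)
The plan is to follow the scheme of~\cite{LRV_semi1,LRV_semi2}, adapting it to a boundary with non‑constant curvature; the new probabilistic ingredients are Lemmas~\ref{lemma:DGMC_DL},~\ref{lemma:lim_der_GMC} and~\ref{lemma:unif_DGMC}, and the new geometric fact used throughout is that $\vphidiv$ is the \emph{unconstrained} minimiser of $I_{\bm z,\bm a}$. By the Weyl anomaly we may assume $g$ is a uniform metric of type $I$. First I would write the left‑hand side of~\eqref{eq:semi_classical} probabilistically and apply Girsanov's theorem to the Vertex Operators: with $\alpha_k=-2a_k/\gamma$ the singular part $\Hdiva$ equals $\tfrac1\gamma\Hdiv$, and one gets $e^{-\frac1{\gamma^2}G(\bm z,\bm a)}\mc Z_g^{-1}\int_\R e^{-2\chi_\gamma c}\,\E\big[F(\X+c-\tfrac1\gamma\vphidiv)\,e^{-R_\gamma(\X,c)}\big]\,\d c$, where $2\chi_\gamma=\tfrac2\gamma\eul+\tfrac\gamma2\chi(\Sigma)$ and $R_\gamma(\X,c)=\tfrac{\Lambda}{2\pi\gamma^{2}}\int_\Sigma e^{\gamma(\X+\Hdiva+c)}\,\d v_g+\tfrac{2\sigma}{\pi\gamma^{2}}\int_{\partial\Sigma}e^{\frac\gamma2(\X+\Hdiva+c)}\,\d l_g$ (the $\rho\to0$ renormalisation coinciding, after division by $\gamma^2$, with the one of Proposition~\ref{prop:def_action}). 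Then I would perform a Cameron--Martin shift $\X\mapsto\X+\tfrac1\gamma(\vphidiv-m_g(\vphidiv))$ together with the zero‑mode translation $c\mapsto c+\tfrac1\gamma m_g(\vphidiv)$: the argument of $F$ becomes $\X+c$; the two GMC measures are dressed into $\tfrac{\Lambda}{2\pi\gamma^2}e^{\gamma c}\int_\Sigma e^{\gamma\X}e^{\phidiv}\,\d v_g$ and its boundary analogue; the shift creates a Radon--Nikodym density of exponent $-\tfrac1\gamma\ps{\X,\vphidiv}_\nabla-\tfrac1{4\pi\gamma^2}\int_\Sigma\norm{\nabla_g\vphidiv}_g^2\,\d v_g$; and the zero‑mode weight picks up a factor $e^{-\frac1{\gamma^2}2\eul\, m_g(\vphidiv)}e^{-\frac12\chi(\Sigma) m_g(\vphidiv)}$.

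The heart of the argument is then to expand everything in powers of $\gamma$ using the Taylor formula for DGMC measures, Lemma~\ref{lemma:DGMC_DL} at $\gamma_0=0$: $\int_\Sigma e^{\gamma\X}e^{\phidiv}\,\d v_g=\int_\Sigma e^{\phidiv}\,\d v_g+\gamma\int_\Sigma\X e^{\phidiv}\,\d v_g+\tfrac{\gamma^2}2\int_\Sigma \tilde{\mc D}^{(2)}_\gamma[\X]e^{\phidiv}\,\d v_g$ with $\tilde{\mc D}^{(2)}_\gamma[\X]\to H_2[\X]$ a.s.\ (and likewise on the boundary and for $e^{\gamma c}$). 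Three things happen. (i) The $\gamma^{-2}$ contributions assemble exactly into $-\tfrac1{\gamma^2}I_{\bm z,\bm a}(\vphidiv)$, which together with $e^{-\frac1{\gamma^2}G(\bm z,\bm a)}$ yields the announced prefactor $e^{-\frac1{\gamma^2}S_{\bm z,\bm a}}$ by Proposition~\ref{prop:def_action}; this is precisely why $\vphidiv$ is the correct function to subtract. (ii) The $\gamma^{-1}$ contributions vanish: the terms linear in $\X$ combine into $-\tfrac1\gamma$ times the first variation of $I_{\bm z,\bm a}$ at $\vphidiv$, zero since $\vphidiv$ solves~\eqref{eq:curv_problem}; the terms linear in $c$ combine into $-\tfrac1\gamma$ times $\partial_c$ of the same functional in the constant direction, zero by the Gauss--Bonnet identity $\tfrac{\Lambda}{2\pi}\int_\Sigma e^{\phidiv}\,\d v_g+\tfrac1\pi\int_{\partial\Sigma}\sigma e^{\phidiv/2}\,\d l_g=-2\eul$. (iii) The $\gamma^0$ contribution is $-\ps{\tilde{\mc D}^{(2)}_\gamma[\X+c]}_{\bm z,\bm a}-\tfrac12\chi(\Sigma)m_g(\vphidiv)$ (using $H_2[\X+c]=H_2[\X]+2c\X+c^2$), so that $[F]_\gamma=e^{-\frac12\chi(\Sigma)c_*}\int_\R\E\big[F(\X+c)\,e^{-\ps{\tilde{\mc D}^{(2)}_\gamma[\X+c]}_{\bm z,\bm a}+\eps_\gamma(\X,c)}\big]\,\d c$ with $c_*=m_g(\vphidiv)$ and $\eps_\gamma(\X,c)$ a remainder vanishing pointwise as $\gamma\to0$.

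It then remains to pass to the limit inside $\E[\cdot]$ and $\int_\R\d c$. Pointwise in $c$, almost surely the integrand converges to $F(\X+c)\,e^{-\ps{H_2[\X+c]}_{\bm z,\bm a}}$, since $\tilde{\mc D}^{(2)}_\gamma[\X+c]\to H_2[\X+c]$ a.s.\ (Lemma~\ref{lemma:DGMC_DL}), $\eps_\gamma\to0$ and $F$ is bounded continuous. For a fixed $c$, domination follows from the uniform bound~\eqref{eq:DGMC_sup} of Lemma~\ref{lemma:lim_der_GMC} (bound $F$ by $\nnorm{F}_\infty$ and rewrite the relevant quantity in terms of $\tilde{\mc D}^{(2)}_\gamma$), the limiting moment being finite by Lemma~\ref{lemma:unif_DGMC}. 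To control the non‑compact $c$‑integration I would \emph{not} expand $e^{\gamma c}$ but work with $\bar c=\gamma c$: after factoring out $e^{-\frac1{\gamma^2}S_{\bm z,\bm a}}$, the $\gamma^{-2}$‑order exponent of the $c$‑integrand equals $\tfrac1{\gamma^2}V(\bar c)$ with $V(\bar c)=I_{\bm z,\bm a}(\vphidiv)-I_{\bm z,\bm a}(\vphidiv+\bar c)=-P(e^{\bar c}-1-\bar c)-Q(e^{\bar c/2}-1-\bar c/2)$, where $P=\tfrac{\Lambda}{2\pi}\int_\Sigma e^{\phidiv}\,\d v_g>0$ (as $\Lambda\not\equiv0$) and $Q=\tfrac2\pi\int_{\partial\Sigma}\sigma e^{\phidiv/2}\,\d l_g\ge0$; since $\vphidiv$ minimises $I_{\bm z,\bm a}$ one has $V\le0$ with equality only at $\bar c=0$, while $V(\bar c)=-\ps{1}_{\bm z,\bm a}\bar c^2+O(\bar c^3)$ near $0$ (note $\tfrac P2+\tfrac Q8=\ps{1}_{\bm z,\bm a}$) and $V\to-\infty$ at infinity. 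This gives a $\gamma$‑uniform dominating function for the $c$‑integral, decaying like $e^{-c^2\ps{1}_{\bm z,\bm a}}$ near $c=O(1)$ and super‑exponentially beyond --- precisely the Gaussian weight in~\eqref{eq:semi_classical}. Dominated convergence then yields~\eqref{eq:semi_classical}; recasting it into the form of Theorem~\ref{thm:semi_classical_intro}, with the massive free field $\X_{\bm z,\bm a}$ of Lemma~\ref{lemma:massive_GMC}, follows by one more Girsanov shift.

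I expect the main obstacle to be this last, quantitative step: making the domination uniform \emph{simultaneously} in the Gaussian field and in the non‑compact zero mode. One must control the negative exponential moments of the regularised second‑order DGMC uniformly in $\gamma$ near $0$ --- this is the content of Lemma~\ref{lemma:lim_der_GMC}, whose proof rests on a truncation at the scale $\rho_\gamma=e^{-\gamma^{-1/8}}$, on Markov/Gaussian estimates to discard a rare event, and on the finiteness in Lemma~\ref{lemma:unif_DGMC} --- and one must keep the $c$‑tails integrable as $\gamma\to0$, which is exactly where the hypotheses $\eul<0$, $\Lambda\not\equiv0$ and $\sigma\ge0$ enter, through the inequality $V\le0$ above. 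A secondary subtlety worth stressing is that the $\gamma^{-1}$ cancellation occurs in \emph{two} separate directions --- the field direction (Euler--Lagrange equation for the Liouville action) and the zero‑mode direction (Gauss--Bonnet) --- and that the surviving $\gamma$‑independent constant is exactly $-\tfrac12\chi(\Sigma)c_*$ with $c_*=m_g(\vphidiv)$.
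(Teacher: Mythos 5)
Your proposal is correct and follows essentially the same route as the paper: Girsanov applied to the vertex insertions, the Cameron--Martin shift by $\frac1\gamma(\vphidiv-c_*)$ together with the zero-mode translation $c\mapsto c+\frac1\gamma c_*$, cancellation of the order-$\gamma^{-1}$ terms via the Euler--Lagrange equation for $I_{\bm z,\bm a}$ in the field direction and Gauss--Bonnet in the constant direction, and passage to the limit through the DGMC Taylor formula and the uniform negative-moment bounds of Lemmas~\ref{lemma:lim_der_GMC} and~\ref{lemma:unif_DGMC}. The only point where you deviate is the control of the non-compact $c$-integral, where you keep the deterministic potential $V(\gamma c)\leq 0$ unexpanded and exploit the minimality of $\vphidiv$, while the paper instead recentres via $c\mapsto c-m_{\bm z,\bm a}(\X)$ to exhibit the Gaussian weight $e^{-c^2\ps{1}_{\bm z,\bm a}}$ directly; both devices serve the same purpose and your version is, if anything, more explicit about the tail behaviour.
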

	In the statement we have denoted $H_2(\X+c)=H_2(\X)+2cH_1(\X)+c^2$. The partition function $[1]_0$ can be evaluated thanks to Lemma~\ref{lemma:unif_DGMC}.
	An immediate consequence of the above statement together with Lemma~\ref{lemma:massive_GMC} is the following:
	\begin{corollary}
		The assignment for $F$ continuous bounded over $H^{-1}(\Sigma,g)$ of
		\begin{equation}
			\ps{F}_{\bm z,\bm a}\coloneqq \lim\limits_{\gamma\to0}\frac{\ps{F\left(\Phi-\frac1\gamma\phidiv\right)\V}_{\gamma,\frac{\Lambda}{\gamma^2},\frac{\sigma}{\gamma^2}}}{\ps{\V}_{\gamma,\frac{\Lambda}{\gamma^2},\frac{\sigma}{\gamma^2}}}
		\end{equation}
		describes the law of $\X_{\bm z,\bm a}+C$ where $\X_{\bm z,\bm a}$ is a massive GFF with covariance kernel $G_{\bm z,\bm a}$ (see Lemma~\ref{lemma:massive_GMC}) and $C$ is a centered Gaussian with variance $\frac1{\ps{2}_{\bm z,\bm a}}\cdot$
	\end{corollary}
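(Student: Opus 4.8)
The plan is to obtain the statement directly from Theorem~\ref{thm:semi_classical} by completing the square in the shift variable $c$. First I would apply Theorem~\ref{thm:semi_classical} to the numerator (with the functional $F$) and to the denominator (with $F\equiv1$): in the ratio the prefactor $e^{-\frac1{\gamma^2}S_{\bm z,\bm a}}/\mc Z_g$ cancels, as does the constant $e^{-\frac12\chi(\Sigma)c_*}$, and since Lemma~\ref{lemma:unif_DGMC} guarantees that $\expect{e^{-\ps{H_2[\X-m_{\bm z,\bm a}(\X)]}_{\bm z,\bm a}}}$ is finite and strictly positive, the denominator has a positive finite limit. Hence the limit defining $\ps{F}_{\bm z,\bm a}$ exists and equals $[F]_0/[1]_0$, that is
\[
\ps{F}_{\bm z,\bm a}=\frac{\int_\R\expect{F(\X+c)e^{-\ps{H_2(\X+c)}_{\bm z,\bm a}}}\,\d c}{\int_\R\expect{e^{-\ps{H_2(\X+c)}_{\bm z,\bm a}}}\,\d c}.
\]

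The heart of the matter is then a completion of the square in $c$. Writing $m\coloneqq m_{\bm z,\bm a}(\X)=\ps{\X}_{\bm z,\bm a}/\ps{1}_{\bm z,\bm a}$, I would use the identity $H_2(\X+c)=H_2(\X)+2cH_1(\X)+c^2$ recorded after Theorem~\ref{thm:semi_classical} (with $H_1(\X)=\X$), applied both with the deterministic $c$ and with the random shift $c=-m$ — the latter being exactly how $H_2[\X-m_{\bm z,\bm a}(\X)]$ is understood in Lemmas~\ref{lemma:unif_DGMC} and~\ref{lemma:massive_GMC}, the $\ln\rho$-renormalisation being independent of the shift. Subtracting the two expansions and pairing against $\ps{\cdot}_{\bm z,\bm a}$, a short computation using $\ps{\X}_{\bm z,\bm a}=m\,\ps{1}_{\bm z,\bm a}$ yields
\[
\ps{H_2(\X+c)}_{\bm z,\bm a}=\ps{H_2[\X-m_{\bm z,\bm a}(\X)]}_{\bm z,\bm a}+\ps{1}_{\bm z,\bm a}\,(c+m)^2.
\]

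Plugging this into the two integrals, I would substitute $c'=c+m$ (so that $\X+c=(\X-m_{\bm z,\bm a}(\X))+c'$) and interchange the $c'$-integral with the expectation over $\X$ — legitimate because $F$ is bounded, $m_{\bm z,\bm a}(\X)$ is a.s.\ finite, and the relevant negative exponential moment is integrable by Lemma~\ref{lemma:unif_DGMC}. The Gaussian integral over $c'$ contributes a factor $\sqrt{\pi/\ps{1}_{\bm z,\bm a}}$, common to numerator and denominator, and exhibits a centred Gaussian $C$ of variance $\frac1{2\ps{1}_{\bm z,\bm a}}=\frac1{\ps{2}_{\bm z,\bm a}}$ independent of $\X$; after cancellation,
\[
\ps{F}_{\bm z,\bm a}=\frac{\expect{e^{-\ps{H_2[\X-m_{\bm z,\bm a}(\X)]}_{\bm z,\bm a}}\,\E_C\!\left[F\big((\X-m_{\bm z,\bm a}(\X))+C\big)\right]}}{\expect{e^{-\ps{H_2[\X-m_{\bm z,\bm a}(\X)]}_{\bm z,\bm a}}}}.
\]

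Finally I would read off the answer: by definition of the tilted measure $\mathbb P_{\bm z,\bm a}$ of Lemma~\ref{lemma:massive_GMC}, the right-hand side is $\mathbb E_{\bm z,\bm a}\otimes\E_C\!\left[F((\X-m_{\bm z,\bm a}(\X))+C)\right]$, and that lemma says precisely that under $\mathbb P_{\bm z,\bm a}$ the field $\X-m_{\bm z,\bm a}(\X)$ equals the massive GFF $\X_{\bm z,\bm a}$ with covariance kernel $G_{\bm z,\bm a}$, independent of $C$ — which is the claim. I do not expect a serious obstacle here, the substance being entirely contained in Theorem~\ref{thm:semi_classical}, Lemma~\ref{lemma:unif_DGMC} and Lemma~\ref{lemma:massive_GMC}; the only step requiring a little care is the square-completion identity (bookkeeping of the $\ps{\cdot}_{\bm z,\bm a}$-pairings) and the justification of the Fubini interchange.
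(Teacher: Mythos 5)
Your proposal is correct and follows essentially the same route as the paper: the change of variable $c\to c-m_{\bm z,\bm a}(\X)$ and the resulting Gaussian weight $e^{-c^2\ps{1}_{\bm z,\bm a}}$ are exactly what the paper carries out at the end of the proof of Theorem~\ref{thm:semi_classical}, after which the corollary is read off from Lemma~\ref{lemma:massive_GMC} and Lemma~\ref{lemma:unif_DGMC}. Your square-completion identity and the identification of the variance $\frac1{2\ps{1}_{\bm z,\bm a}}=\frac1{\ps{2}_{\bm z,\bm a}}$ check out.
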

	\begin{proof}
		By definition of the Liouville field, under the assumptions of Theorem~\ref{thm:semi_classical}
		\begin{align*}
			&\ps{F\left(\Phi-\frac1\gamma\phidiv\right)\prod_{k=1}^{N}V_{-\frac{2a_k}\gamma}(z_k)\prod_{l=1}^{M}V_{-\frac{2\beta_l}\gamma}(s_l)}_{\gamma,\frac{\Lambda}{\gamma^2},\frac{\sigma}{\gamma^2}}=e^{-\frac1{\gamma^2}G(\bm z,\bm a)}\int_\R e^{-\frac2\gamma \chi_\gamma c}\d c\\
			&\hspace{1cm}\expect{F\Big(\Hdiv+\gamma\left(\X+c\right)\Big)\exp\left(-\frac1{\gamma^2}\left(\frac1{2\pi}\int_\Sigma \Lambda e^{\gamma\left(\X+c\right)}\d v_{\gdiv}+\frac2{\pi}\int_{\partial\Sigma} \sigma e^{\frac\gamma2\left(\X+c\right)}\d l_{\gdiv}\right)\right)}
		\end{align*}
		where recall $G(\bm z,\bm a)$ from Proposition~\ref{prop:def_action} and $\chi_\gamma\coloneqq \eul+\frac{\gamma^2}{4}\chi(\Sigma)$. Now let $\phidiv$ be the solution of Problem~\ref{prob:class} given by Theorem~\ref{thm:prescribe} and write it under the form $\phidiv=\vphidiv+\Hdiv$, with $\vphidiv=\X_*+c_*$ for $c_*=m_g(\vphidiv)$. We can then combine the change of variable $c\to c+\frac1\gamma c_*$ in the integral together with Girsanov's theorem for
		\begin{align*}
			:e^Z:\hspace{0.2cm}=e^{Z-\frac1{2}\expect{Z^2}},\qt{with}Z\coloneqq -\frac 1\gamma\left(\frac{1}{2\pi}\int_\Sigma \X(-\Delta\X_*)\d v_g+\frac{1}{2\pi}\int_{\partial\Sigma} \X(\partial_{n_g}\X_*)\d l_g\right),
		\end{align*}
		inducing a shift of $\frac1\gamma\X_*$, to rewrite the correlation functions under the form
		\begin{align*}
			&\ps{F\left(\Phi-\frac1\gamma\phidiv\right)\prod_{k=1}^{N}V_{-\frac{2a_k}\gamma}(z_k)\prod_{l=1}^{M}V_{-\frac{2\beta_l}\gamma}(s_l)}_{\gamma,\frac{\Lambda}{\gamma^2},\frac{\sigma}{\gamma^2}}=e^{-\frac1{\gamma^2}\left(G(\bm z,\bm a)+2\chi_\gamma c_*\right)}\int_\R e^{-\frac2\gamma \chi_\gamma c}\d c\\
			&\E\Bigg[F\Big(\phidiv+\gamma\left(\X+c\right)\Big)
			e^{-\frac1{\gamma^2}\left(\frac1{2\pi}\int_\Sigma \Lambda e^{\gamma\left(\X+c\right)}e^{\vphidiv}\d v_{\gdiv}+\frac2{\pi}\int_{\partial\Sigma} \sigma e^{\frac\gamma2\left(\X+c\right)}e^{\frac12\vphidiv}\d l_{\gdiv}\right)}:e^{Z}:\Bigg].
		\end{align*}
		Now by definition of $\vphidiv$ we know that 
		\[
		Z=\frac{1}{2\pi}\int_\Sigma \left(-\frac 1\gamma\X\right)\Lambda e^{\vphidiv}\d v_{\gdiv}+\frac{2}{\pi}\int_{\partial\Sigma} \left(-\frac 1{2\gamma}\X\right)\sigma e^{\frac12\vphidiv}\d v_{\gdiv}.
		\]
		Moreover the variance of $Z$ is given by
		\[
		\expect{Z^2}=\frac1{\gamma^2}\frac1{2\pi}\int_\Sigma \norm{\nabla_g\X_*}^2_g\d v_g.
		\]
		As a consequence we arrive to the following equality:
		\begin{eqs}
			&\ps{F\left(\Phi-\frac1\gamma\phidiv\right)\prod_{k=1}^{N}V_{\alpha_k}(z_k)\prod_{l=1}^{M}V_{\beta_l}(s_l)}_{\gamma,\mu,\mu_\partial}=e^{-\frac1{\gamma^2}S_{\bm z,\bm a}}[F]_\gamma,\qt{where}\\
			&[F]_\gamma\coloneqq e^{-\frac12\chi(\Sigma) c_*} 
			\int_\R e^{-\frac2\gamma \chi_\gamma c}\d c\\
			&\hspace{2.5cm}\E\Bigg[F\left(\X+c\right)e^{-\frac1{\gamma^2}\left(\frac1{2\pi}\int_\Sigma  \left(e^{\gamma\left(\X+c\right)}-1-\gamma\X\right)\d v_*+\frac2\pi\int_{\partial\Sigma}\left( e^{\frac\gamma2\left(\X+c\right)} -1-\frac\gamma2\X\right)\d l_*\right)}\Bigg]
		\end{eqs}
		and with $S_{\bm z,\bm a}=G(\bm z,\bm a)+I_{\bm z,\bm a}$ being exactly the Liouville action as given by Proposition~\ref{prop:def_action}. 
		
		Therefore it remains to investigate the limit as $\gamma\to0$ of $[F]_\gamma$. To do so we first note that thanks to the Gauss-Bonnet formula from Proposition~\ref{prop:gauss-bonnet} 
		\[
		\frac1{2\pi}\int_\Sigma \Lambda e^{\vphidiv}\d v_{\gdiv}+\frac2\pi\int_{\partial\Sigma}\sigma e^{\frac12\vphidiv}\d l_{\gdiv}=2\eul,\qt{so that}
		\]
		\begin{align*}
			&[F]_\gamma= e^{-\frac12\chi(\Sigma)c_*}
			\int_\R e^{-\frac\gamma2\chi(\Sigma)c}\E\Bigg[F\left(\X+c\right)e^{-\frac1{4\pi}\int_\Sigma  \tilde{\mc D}_\gamma[\X+c]\d v_*-\frac1{4\pi}\int_{\partial\Sigma}\tilde{\mc D}^\partial_\gamma[\X+c]\d l_*}\Bigg]\d c.
		\end{align*}
		Now thanks to Lemma~\ref{lemma:lim_der_GMC} we know that the expectation term in the above converges to the one in Equation~\eqref{eq:semi_classical}. To ensure the uniform convergence of the integral in $c$ we can make the change of variable $c\to c-m_{\bm z,\bm a}(\X)$ to rewrite it as, up to a $o(1)$ term,
		\[
		\int_\R\expect{F(\overline \X+c)e^{-\frac{1}{4\pi}\int_\Sigma H_2(\overline\X)\d v_*-\frac{1}{4\pi}\int_{\partial\Sigma} H_2(\overline\X)\d l_*}}e^{-\frac{c^2}{2V}}\d c\qt{with}
		\]
		$\overline{\X}\coloneqq \X-m_{\bm z,\bm a}(\X)$ and $\frac1{2V}=\ps{1}_{\bm z,\bm a}$. The proof is thus concluded thanks to Lemma~\ref{lemma:unif_DGMC}.
	\end{proof}

	
	
	
	\section{Implications of the semi-classical limit}\label{sec:accessory}
	In the previous section we have shown that for $\gamma\in(0,2)$ Liouville CFT describes a random geometry that will concentrate on the classical one described in Section~\ref{sec:unif} in the semi-classical limit $\gamma\to0$. We provide here some of its classical implications in the case where $\Sigma=\H$:
	\begin{theorem}\label{thm:accessory_HEM}
		In the setting of Theorem~\ref{thm:SET} the accessory parameters that appear in the expansion of the stress-energy tensor $T$ are given, for $1\leq k\leq N$ and $1\leq l\leq M$, by
		\begin{equation}
			\bm c_k=-\frac12\partial_{z_k} S_{\bm z,\bm a}\qt{and}\bm c_l=-\frac12\partial_{s_l} S_{\bm z,\bm a}.
		\end{equation}
		They are subject to classical global Ward identities for $0\leq n\leq 2$:
		\begin{equation}
			\sum_{k=1}^{N}2\mathfrak{Re}\left(z_k^n\bm c_k+nz_k^{n-1}\delta_k\right)+\sum_{l=1}^{M}\left(s_l^n\bm c_l+ns_l^{n-1}\delta_l\right)=0.
		\end{equation}
		Moreover the classical field $\Phidiv$ is such that, in the weak sense of derivatives,
		\begin{equation}\label{eq:HEM_class}
			\left\lbrace \begin{array}{ll}
				\left(\partial_z^2+\frac12 T(z)\right)e^{-\frac12\Phidiv(z)}=0&\text{in }\H\setminus\bm z\\
				\left(\partial_t^2+2 T(t)\right)e^{-\frac14\Phidiv(t)}=\left(\sigma(t)^2-\frac{\Lambda}{2}\right)e^{\frac34\Phidiv(t)}&\text{on }\R\setminus\bm z.
			\end{array} \right.
		\end{equation}
	\end{theorem}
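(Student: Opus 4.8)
The plan is to treat the three assertions separately, relying on the local computations of Section~\ref{sec:SET} and on the variational characterization of the classical action from Proposition~\ref{prop:def_action}. For the \emph{accessory parameters}, note that by~\eqref{eq:weight_accessory} one has $\bm c_k=\frac12\Lc_{-1}^{(k)}[\Phidiv]$, so it suffices to prove $\Lc_{-1}^{(k)}[\Phidiv]=-\partial_{x_k}S_{\bm z,\bm a}$, the right-hand side being understood in the regularized sense of~\eqref{eq:access_intro}. I would differentiate the regularized action $S_{\delta,\eps}$ in a puncture $x_k$: since $\vphireg$ minimizes the regularized Liouville functional, its first variation vanishes, so this derivative reduces to the \emph{explicit} dependence of $S_{\delta,\eps}$ on $x_k$, which enters only through $\Hdiv$ (via $G_0(\cdot,z_k)$) and through the insertion terms. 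Computing this explicit derivative with the fundamental formula~\eqref{eq:key_formula2_reg} reproduces exactly $-\Lc_{-1}^{(k)}[\Phireg]$, together with — for a boundary puncture — precisely the boundary counterterms appearing in the definitions~\eqref{eq:L1_lim} and~\eqref{eq:access_intro}. Passing to the limit $\delta,\eps\to0$ via Lemma~\ref{lemma:L1} and the uniform convergence of $S_{\delta,\eps}$ in the puncture variables established right after Lemma~\ref{lemma:approx} then gives $\partial_{x_k}S_{\bm z,\bm a}=-\Lc_{-1}^{(k)}[\Phidiv]$; this is the content of Proposition~\ref{prop:accessory}.

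For the \emph{global Ward identities}, the point is that $\Sigma=\H$ carries no puncture at $\infty$ (a boundary point once $\Sigma$ is compactified to a closed disk), so that $T$, viewed as a quadratic differential, is regular there. Concretely, expanding~\eqref{eq:key_formula2} near $\infty$ and using $2w_0(z)=-4\log|z|+O(|z|^{-2})$ together with the boundedness of $\Hdiv$ and of the two convolution terms, one gets $\Phidiv(z)=-4\log|z|+\lambda_\infty+O(|z|^{-1})$ with a remainder admitting a (conjugate-)Laurent expansion; substituting into $T(z)=\partial_z^2\Phidiv(z)-\frac12(\partial_z\Phidiv(z))^2$, the $z^{-2}$ and $z^{-3}$ contributions from the logarithmic part cancel and $T(z)=O(z^{-4})$ as $z\to\infty$. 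Matching this decay against $T(z)=\sum_k\big(\frac{\delta_k}{(z-x_k)^2}+\frac{\bm c_k}{z-x_k}\big)$ — equivalently, $\oint_{|z|=R}z^nT(z)\,\d z\to 0$ for $n=0,1,2$ — and recalling that bulk punctures occur in conjugate pairs $x_k,\overline{x_k}$ with coefficients $\bm c_k,\overline{\bm c_k}$, produces the three stated identities.

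For the \emph{higher equations of motion}, the bulk one is immediate: $\Phidiv$ is smooth on $\H\setminus\bm z$ by Theorem~\ref{thm:prescribe} and $T(z)=\partial_z^2\Phidiv-\frac12(\partial_z\Phidiv)^2$ by definition, so $\partial_z^2 e^{-\frac12\Phidiv}=\big(-\frac12\partial_z^2\Phidiv+\frac14(\partial_z\Phidiv)^2\big)e^{-\frac12\Phidiv}=-\frac12 T(z)e^{-\frac12\Phidiv}$. For the boundary one I would work on a connected component of $\R\setminus\bm z$, where $\sigma$ is constant and, by boundary elliptic regularity for the Robin problem with smooth data on a smooth arc, $\Phidiv$ is smooth up to $\R$. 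On such a component $T(t)=\lim_{z\to t}T(z)$ (both equal the rational function of Theorem~\ref{thm:SET}, which is continuous up to $\R\setminus\bm z$), so I compute $\partial_z\Phidiv$ and $\partial_z^2\Phidiv$ at $y=0$ from tangential data: the Liouville equation expresses $\partial_y^2\Phidiv|_{y=0}$ in terms of $\partial_t^2\Phidiv$ and $e^{\Phidiv}$, and differentiating the boundary condition $\partial_y\Phidiv|_{y=0}=2\sigma e^{\frac12\Phidiv}$ along the boundary (licit since $\sigma$ is locally constant) expresses $\partial_t\partial_y\Phidiv|_{y=0}$ in terms of $\partial_t\Phidiv$ and $e^{\frac12\Phidiv}$. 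This yields a local formula of the form $T(t)=\frac12\partial_t^2\Phidiv-\frac18(\partial_t\Phidiv)^2+\big(c_1\Lambda+c_2\sigma^2\big)e^{\Phidiv}$ with explicit constants $c_1,c_2$; substituting it into the left-hand side of the boundary equation, the $\partial_t^2\Phidiv$ and $(\partial_t\Phidiv)^2$ terms cancel exactly and only the asserted multiple of $e^{\frac34\Phidiv}$ survives. Being classical on each component, the identity holds in the weak sense on $\R\setminus\bm z$.

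I expect the boundary higher equation of motion to be the main obstacle. The delicate bookkeeping is to reconcile the \emph{intrinsic} definition of $T(t)$ — the $\delta,\eps\to0$ limit with its $\frac1\eps$ counterterm, which encodes the boundary layer produced by cutting an $\eps$-hole around the test point in the regularization of Subsection~\ref{subsec:phireg} — with the interior value $\lim_{z\to t}T(z)$ used above; carrying these contributions through is exactly what pins down $c_1,c_2$, hence the precise coefficient of $e^{\frac34\Phidiv}$. A secondary point needing care is establishing enough boundary regularity of $\Phidiv$ to differentiate the Robin condition tangentially. The accessory-parameter step is conceptually simpler but also requires attention, since $\partial_{x_k}S_{\bm z,\bm a}$ exists only through the regularized limit~\eqref{eq:access_intro}, so one must verify that the variational identity for $S_{\delta,\eps}$ survives the $\delta,\eps\to0$ passage.
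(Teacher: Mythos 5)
Your proposal takes a genuinely different route from the paper. The paper proves all three assertions by taking the semi-classical limit $\gamma\to0$ of the corresponding \emph{quantum} identities from~\cite{Cer_HEM}: the accessory-parameter formula comes from the integration-by-parts identity~\eqref{eq:IPP} for the quantum descendant $\mc L_{-1}^{(l)}$ combined with Theorem~\ref{thm:semi_classical} (which converts $\mc L_{-1}^{(l)}\ps{\V}/\ps{\V}$ into $\frac1{\gamma^2}\Lc_{-1}^{(l)}[\Phireg]$ and $\ln\ps{\V}$ into $-\frac1{\gamma^2}S_{\delta,\eps}$); the global Ward identities are the $\gamma\to0$ limit of~\eqref{eq:global_ward_CFT}; and the higher equations of motion are the $\gamma\to0$ limit of~\eqref{eq:HEM_CFT}, with the coefficient $\frac14(\sigma^2-\frac\Lambda2)$ read off from the explicit quantum constant $c_\gamma(\bm\mu)$. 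You instead stay entirely within the deterministic theory: an envelope-theorem computation for $\partial_{x_k}S_{\delta,\eps}$, decay $T(z)=O(z^{-4})$ at infinity for the Ward identities, and a boundary PDE computation for the HEM. Your first two arguments are sound in outline (the variational derivation of the accessory parameters is essentially the classical route of~\cite{HJ06}, acknowledged in the paper, though you should note that in the regularized action the cutoffs $\Lambda_{\delta,\eps},\sigma_\eps$ themselves depend on the puncture positions, and it is the motion of the excised window that produces the counterterm in~\eqref{eq:access_intro}; the residue-matching at infinity is standard once the expansion $\Phidiv(z)=-4\ln\norm{z}+\lambda_\infty+O(\norm{z}^{-1})$ is justified with a differentiable remainder). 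What the paper's route buys is that all three identities come for free, with their exact constants, from already-established CFT statements; what your route would buy is an independent, purely classical proof.

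The genuine gap is the boundary higher equation of motion. Your plan --- express $T(t)$ through tangential data using the Liouville equation and the tangentially differentiated Robin condition, then cancel the $\partial_t^2\Phidiv$ and $(\partial_t\Phidiv)^2$ terms --- is precisely the heuristic displayed in the introduction, which the paper explicitly states ``needs a proper justification'' and then deliberately bypasses via the quantum HEM. The crux is your unproved identification $T(t)=\lim_{z\to t}T(z)$ together with the claim that the finite part of the $\delta,\eps$-regularization can be absorbed into explicit constants $c_1,c_2$: the definition of $T(t)$ carries the divergent counterterm $\frac1\eps\frac{\sigma(t)}{2\pi}e^{\frac12\Phireg(t)}$ because $\partial_t^2\Phireg(t)$ itself diverges at rate $\eps^{-1}$ (the regularized field satisfies a Neumann condition in the $\eps$-window around $t$, and reinstating the Robin condition at distance $\eps$ creates a boundary layer), and the \emph{finite} remainder of this cancellation --- not just its divergent part --- is exactly what determines the coefficient of $e^{\frac34\Phidiv}$. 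Indeed, a naive computation assuming $\Phidiv\in C^2(\overline\H)$ locally and $T(t)=\lim_{z\to t}\bigl(\partial_z^2\Phidiv-\frac12(\partial_z\Phidiv)^2\bigr)$ yields $\frac12\partial_t^2\Phidiv-\frac18(\partial_t\Phidiv)^2+\bigl(\frac\Lambda4+\frac{\sigma^2}2\bigr)e^{\Phidiv}$ and hence the \emph{wrong sign} on the $\Lambda$-term of the right-hand side of~\eqref{eq:HEM_class}, compared with the stated $\frac14(\sigma^2-\frac\Lambda2)$; reconciling this requires the full boundary-layer analysis of $\Phireg-\Phidiv$ near $t$ (and near the boundary at scale $\delta$, which is where the $-\frac\Lambda4 e^{\Phidiv(t)}$ in the definition of $T(t)$ comes from, cf.\ the $\R$-integral in $\mathfrak R^{(l)}_{\delta,\eps}$ of Lemma~\ref{lemma:L2}). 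You flag this as ``the main obstacle'' but do not carry it out, and without it the coefficient --- the only nontrivial content of the boundary equation --- is not established. In the paper this coefficient is obtained only as the $\gamma\to0$ limit of $c_\gamma(\bm\mu)\ps{V_{\frac{3\gamma}2}(t)\V}/\ps{\V}$, a genuinely quantum input.
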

	Like in Section~\ref{sec:SET} the definition of the derivative of the Liouville action requires some care and has to be understood in the weak sense. We first explain how to address this issue and then show that the higher equations of motion~\eqref{eq:HEM_class} are valid. To do so we will strongly rely on the semi-classical statement proved in the previous section. Recall that to do so we have considered the scalings $\mu=\frac{\Lambda}{\gamma^2}$ and $\mu_\partial=\frac{\sigma}{\gamma^2}$ as well as $\alpha_k=-\frac{2a_k}{\gamma}$ as $\gamma\to0$. 
	
	
	
	\subsection{Accessory parameters}
	In order to compute the accessory parameters $\bm c_k$ we first need to gather some information about Liouville CFT and more specifically on the descendant at order $1$ $\mc L_{-1}$ from~\cite{Cer_HEM}, which is the quantum analog of the $\Lc_{-1}$ considered before.
	
	\subsubsection{Descendants in Liouville CFT} For any positive $\delta,\eps,\gamma$, the regularized $\mc L_{-1}^{(l)}$ descendant is defined in~\cite[Equation (3.2)]{Cer_HEM} within correlation functions by setting
	\begin{eqs}\label{eq:L1_CFT}
		&\mc L_{-1}^{(l)}\ps{\prod_{k=1}^NV_{\alpha_k}(z_k)\prod_{m=1}^MV_{\beta_m}(s_m)}_{\gamma,\delta,\eps}=\sum_{k\neq l}\frac{\alpha_k\beta_l}{2(z_k-s_l)}\ps{\V}_{\delta,\eps}-\It\left[\frac{\gamma\beta_l}{2(t-s_l)}\right]
	\end{eqs}
	with $\ps{\V}_{\gamma,\delta,\eps}\coloneqq\ps{\prod_{k=1}^NV_{\alpha_k}(z_k)\prod_{m=1}^MV_{\beta_m}(s_m)}_{\gamma,\delta,\eps}$. By construction (see~\cite[Corollary 3.3]{Cer_HEM}) it satisfies for any smooth, compactly supported test function $f$ on $\R\setminus\left(\bm z\setminus\{s_l\}\right)$ 
	\begin{equation}\label{eq:IPP}
		\int_\R \mc L_{-1}^{(l)}\ps{\V}_{\gamma,\delta,\eps}f(s_l)\d s_l=-\int_\R \ps{\V}_{\gamma,\delta,\eps}\partial f(s_l)\d s_l.
	\end{equation}
	The $\mc L_{-1}^{(l)}$ descendant is defined within correlation functions by taking an appropriate limit $\delta,\eps\to0$ of the regularized one, see~\cite[Lemma 3.2]{Cer_HEM}.
	
	\subsubsection{Computation of the accessory parameters}
	Based on this knowledge we are now in position to compute the accessory parameters for the stress-energy tensor:
	\begin{proposition}\label{prop:accessory}
		For any $1\leq l\leq M+1$ the following limit exists:
		\begin{equation}
			\partial_{s_l}S_{\bm z,\bm a}\coloneqq\lim\limits_{\delta,\eps\to0}\left(\partial_{s_l}S_{\delta,\eps}-\eval{s_l+\eps}{s_l-\eps}{-}{2\sigma(t) e^{\frac12\phireg(t)}}\right).
		\end{equation}
		Moreover it coincides with the weak derivative of $S_{\bm z,\bm a}$ and we have the equality
		\begin{equation}
			\Lc_{-1}^{(l)}[\phidiv]=-\partial_{s_l}S_{\bm z,\bm a}.
		\end{equation}
		The same holds for a bulk insertion: $\Lc_{-1}^{(k)}[\phidiv]=-\partial_{z_k}S_{\bm z,\bm a}$ for $0\leq k\leq 2N+1$.
	\end{proposition}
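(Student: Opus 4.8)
The plan is to separate the statement into two parts. First I would show, at the regularized level (where $\phireg$ is smooth), that $\partial_{s_l}S_{\delta,\eps}$ exists, that its corrected version $\partial_{s_l}S_{\delta,\eps}-\eval{s_l+\eps}{s_l-\eps}{-}{2\sigma(t)e^{\frac12\phireg(t)}}$ equals $-\Lc_{-1}^{(l)}[\Phireg]+o(1)$, and hence (by Lemma~\ref{lemma:L1}) has a finite limit equal to $-\Lc_{-1}^{(l)}[\phidiv]$; this already yields existence of $\partial_{s_l}S_{\bm z,\bm a}$ and the identity $\Lc_{-1}^{(l)}[\phidiv]=-\partial_{s_l}S_{\bm z,\bm a}$. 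Second, using the quantum $\mc L_{-1}$ descendant of~\cite{Cer_HEM} together with the semi-classical limit (Theorem~\ref{thm:semi_classical}), I would identify this limiting object with the distributional derivative of $s_l\mapsto S_{\bm z,\bm a}$.

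For the first part, I would differentiate $S_{\delta,\eps}=S_{\delta,\eps}(\phireg)$ directly in $s_l$. Since $\vphireg$ is smooth and is the minimizer of the regularized Liouville action, the chain-rule contribution coming from $\partial_{s_l}\vphireg\in H^1$ vanishes, so $\partial_{s_l}S_{\delta,\eps}$ only picks up: (a) the explicit dependence through $\Hdiv$ inside the bulk and boundary integrals and through the term $G(\bm z,\bm a)$ of Proposition~\ref{prop:def_action}; (b) the dependence through the position of the excised disc $B_g(s_l,\eps)$ and of the supports of the cut-off curvatures $\Lambda_{\delta,\eps},\sigma_\eps$, which by Leibniz' rule produces the boundary contribution $\eval{s_l+\eps}{s_l-\eps}{-}{2\sigma(t)e^{\frac12\phireg(t)}}$ up to $o(1)$ terms, together with boundary contributions on the circles $\partial B_g(z_k,\eps)$ that vanish as $\delta,\eps\to0$ by the same estimates as in the proof of Proposition~\ref{prop:gauss-bonnet}. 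Inserting the representation formula~\eqref{eq:key_formula2_reg} into the remaining explicit terms — exactly as in the proof of Lemma~\ref{lemma:formula_der}, with the generalized Gauss--Bonnet identity~\eqref{eq:gauss_bonnet_sing} absorbing the $w_0$-terms — one recognizes them as $-\Lc_{-1}^{(l)}[\Phireg]$ (cf.~\eqref{eq:L1bis}). Subtracting the boundary term and letting $\delta,\eps\to0$, Lemma~\ref{lemma:L1} gives convergence to $-\Lc_{-1}^{(l)}[\phidiv]$ while Lemma~\ref{lemma:approx} and the subsequent lemma give $S_{\delta,\eps}\to S_{\bm z,\bm a}$ (locally uniformly in $s_l$). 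The bulk identity $\Lc_{-1}^{(k)}[\phidiv]=-\partial_{z_k}S_{\bm z,\bm a}$ is obtained the same way and is in fact cleaner, since no boundary correction is needed and the contributions on $\partial B_g(z_k,\eps)$ vanish outright.

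For the second part, I would use the quantum integration-by-parts identity~\eqref{eq:IPP}: for any test function $f$ supported in $\R\setminus(\bm z\setminus\{s_l\})$ one has $\int_\R \mc L_{-1}^{(l)}\ps{\V}_{\gamma,\delta,\eps}\,f(s_l)\,\d s_l=-\int_\R\ps{\V}_{\gamma,\delta,\eps}\,\partial f(s_l)\,\d s_l$; letting $\delta,\eps\to0$ (using the limiting descendant of~\cite[Lemma 3.2]{Cer_HEM} and convergence of the regularized correlation functions under the curvature cut-offs) the identity persists at the unregularized level, so $\mc L_{-1}^{(l)}\ps{\V}_\gamma=\partial_{s_l}\ps{\V}_\gamma$ as distributions, hence pointwise. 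Specializing to the semi-classical scalings $\mu=\Lambda/\gamma^2,\ \mu_\partial=\sigma/\gamma^2,\ \alpha_k=-2a_k/\gamma,\ \beta_l=-2b_l/\gamma$ and integrating $\gamma^2\,\partial_{s_l}\log\ps{\V}_\gamma=\gamma^2\,\mc L_{-1}^{(l)}\ps{\V}_\gamma/\ps{\V}_\gamma$ over an interval $(a,b)\subset\R\setminus(\bm z\setminus\{s_l\})$: by Theorem~\ref{thm:semi_classical} with $F\equiv1$ the left-hand side equals $-(S_{\bm z,\bm a}(b)-S_{\bm z,\bm a}(a))+o(1)$ (the factor $[1]_\gamma/\mc Z_g$ being uniformly bounded above and below), while the integrand on the right converges pointwise to $\Lc_{-1}^{(l)}[\phidiv](s_l)$ — this being precisely the $\gamma\to0$ limit of~\eqref{eq:L1_CFT} after rescaling, which reproduces~\eqref{eq:L1_explicit} — and a domination argument (using Lemma~\ref{lemma:lim_der_GMC}) lets one pass the limit inside the integral. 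Hence $\int_a^b\Lc_{-1}^{(l)}[\phidiv](s_l)\,\d s_l=-(S_{\bm z,\bm a}(b)-S_{\bm z,\bm a}(a))$, i.e. $\Lc_{-1}^{(l)}[\phidiv]=-\partial_{s_l}S_{\bm z,\bm a}$ in the weak sense; combined with the first part this proves that the regularized limit coincides with the weak derivative, and the bulk case follows identically.

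The main obstacle I anticipate lies in the first part: carefully isolating inside $\partial_{s_l}S_{\delta,\eps}$ the several boundary contributions coming from differentiating the cut-offs $\Lambda_{\delta,\eps},\sigma_\eps$ and the excised discs, checking that they collapse up to $o(1)$ to the single correction $\eval{s_l+\eps}{s_l-\eps}{-}{2\sigma(t)e^{\frac12\phireg(t)}}$, and that the remainder is exactly $-\Lc_{-1}^{(l)}[\Phireg]$; this is the point where the qualitative difference between bulk and boundary punctures manifests itself. A secondary difficulty is justifying, in the second part, the interchange of the $\delta,\eps\to0$ and $\gamma\to0$ limits and the local uniformity in $s_l$ needed to turn the quantum identity~\eqref{eq:IPP} into the classical weak-derivative statement; both are controlled by the estimates already established for the DGMC measures (Lemma~\ref{lemma:lim_der_GMC}) and for the regularized action (Lemma~\ref{lemma:approx}).
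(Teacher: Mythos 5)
Your architecture inverts the paper's. The paper's proof consists solely of (a corrected form of) your second step and never differentiates the classical action by hand: for fixed $\delta,\eps>0$ it integrates the quantum identity~\eqref{eq:IPP} against a test function, takes $\gamma\to0$ using Theorem~\ref{thm:semi_classical} together with the explicit formulas~\eqref{eq:L1_CFT} and~\eqref{eq:L1} to obtain $\int_\R\Lc_{-1}^{(l)}[\Phireg]\,f\,\d s_l=\int_\R S_{\delta,\eps}\,\partial_{s_l}f\,\d s_l$, i.e. $\Lc_{-1}^{(l)}[\Phireg]=-\partial_{s_l}S_{\delta,\eps}$ weakly at fixed regularization, and only then sends $\delta,\eps\to0$, where Lemma~\ref{lemma:L1} supplies existence of the corrected limit and the correction term integrates to zero against test functions. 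Your first step --- computing $\partial_{s_l}S_{\delta,\eps}$ directly via an envelope argument --- is essentially the route of~\cite{HJ06} (acknowledged in the introduction) and is viable in principle, but as written it has a gap: discarding the chain-rule contribution requires knowing that $s_l\mapsto\vphireg$ is differentiable in $H^1$, which needs an implicit-function-theorem argument for the Euler--Lagrange equation that is established nowhere in the paper; and the collapse of all the moving-support boundary terms (from the excised discs, from $\Lambda_{\delta,\eps}$ and $\sigma_\eps$, and from the $\norm{\nabla_g\Hdiv}^2$ renormalization) into the single correction $\eval{s_l+\eps}{s_l-\eps}{-}{2\sigma(t) e^{\frac12\phireg(t)}}$ is exactly the hard bookkeeping, which you defer to ``the main obstacle.''

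A second concrete issue lies in your step two: you propose to send $\delta,\eps\to0$ at the quantum level first and only afterwards take $\gamma\to0$. That order is not controlled by the paper's tools: the unregularized descendant $\mc L_{-1}^{(l)}\ps{\V}_\gamma$ involves a singular integral of $\tfrac{1}{t-s_l}$ against a GMC measure carrying an insertion at $s_l$, and Theorem~\ref{thm:semi_classical} (which treats only bounded continuous functionals $F$) says nothing about the semi-classical limit of such observables. The paper's order --- $\gamma\to0$ at fixed $\delta,\eps$, where the integrands in~\eqref{eq:L1_CFT} are bounded away from the singularity, followed by a purely classical $\delta,\eps\to0$ via Lemma~\ref{lemma:L1} --- is what makes the argument close. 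Note finally that if your first step were completed, your second step would be redundant: locally uniform convergence $S_{\delta,\eps}\to S_{\bm z,\bm a}$ together with pointwise convergence of the corrected derivatives already identifies the limit with the weak derivative by integration against test functions, with no quantum input at all.
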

	\begin{proof}
		For any $1\leq l\leq M+1$, let us take $f$ compactly supported on $\R\setminus\left(\bm z\setminus\{s_l\}\right)$.
		Then for any positive $\gamma,\delta,\eps$ in agreement with Equation~\eqref{eq:IPP}
		\begin{align*}
			\int_\R \frac{\mc L_{-1}^{(l)}\ps{\V}_{\gamma,\delta,\eps}}{\ps{\V}_{\gamma,\delta,\eps}}f(s_l)\d s_l=-\int_\R \ln\left(\ps{\V}_{\gamma,\delta,\eps}\right) \partial_{s_l}f(s_l)\d s_l
		\end{align*}
		Now as $\gamma\to0$, Theorem~\ref{thm:semi_classical} together with the explicit expressions for $\mc L_{-1}^{(l)}\ps{\V}_{\gamma,\delta,\eps}$ from Equation~\eqref{eq:L1_CFT} and for $\Lc_{-1}^{(l)}[\Phireg]$ (Equation~\eqref{eq:L1}) give (using the scalings in $\gamma$ of $\mu,\mu_{\partial},\alpha_k$)
		\[
		\frac{\mc L_{-1}^{(l)}\ps{\V}_{\gamma,\delta,\eps}}{\ps{\V}_{\gamma,\delta,\eps}}\sim \frac1{\gamma^2}\Lc_{-1}^{(l)}[\Phireg]\qt{and}\ln\left(\ps{\V}_{\gamma,\delta,\eps}\right)\sim-\frac1{\gamma^2}S_{\delta,\eps}.
		\]
		Since $f$ is compactly supported such asymptotics can be integrated against $f$, hence
		\begin{align*}
			\int_\R\Lc_{-1}^{(l)}[\Phireg]f(s_l)\d s_l=\int_\R S_{\delta,\eps} \partial_{s_l}f(s_l)\d s_l.
		\end{align*}
		This shows that for positive $\delta,\eps$, $\Lc_{-1}^{(l)}[\Phireg]=-\partial_{s_l}S_{\delta,\eps}$ in the weak sense of derivatives. The proof is thus concluded via Lemma~\ref{lemma:L1} by taking $\delta,\eps\to0$ in the above since 
		\[
		\int_\R \eval{s_l+\eps}{s_l-\eps}{-}{2\sigma(t) e^{\frac12\phireg(t)}}f(s_l)\d s_l= -\int_\R 2\sigma(t) e^{\frac12\phireg(t)}\eval{s_l+\eps}{s_l-\eps}{-}{f(t)}\hspace{0.2cm}\d s_l\to0.
		\]
	\end{proof}
	
	\subsubsection{Global Ward identities}
	The accessory parameters are subject to constraints, classical counterparts of the global Ward identities for the correlation functions~\cite[Theorem 3.7]{Cer_HEM}:
	\begin{equation}\label{eq:global_ward_CFT}
		\sum_{k=1}^{2N+M}\left(z_k^n\bm {\mathcal L_{-1}^{(k)}}+nz_k^{n-1}\Delta_{\alpha_k}\right)\ps{\prod_{k=1}^NV_{\alpha_k}(z_k)\prod_{l=1}^MV_{\beta_l}(s_l)}=0
	\end{equation}
	with the conformal weights $\Delta_{\alpha_k}=\frac{\alpha_k}{2}(Q-\frac{\alpha_k}{2})$. The semi-classical limit $\gamma\to0$ thus gives
	\[
	\lim\limits_{\gamma\to0}\frac{\gamma^2}{\V} \sum_{k=1}^{2N+M}\left(z_k^n\bm {\mathcal L_{-1}^{(k)}}+nz_k^{n-1}\Delta_{\alpha_k}\right)\ps{\V}=2\sum_{k=1}^{2N+M}\left(z_k^n\bm c_k+nz_k^{n-1}\delta_k\right).
	\]
	As a consequence the global Ward identities become in the semi-classical limit:
	\begin{equation}\label{eq:global_ward_class}
		\sum_{k=1}^{N}2\mathfrak{Re}\left(z_k^n\bm c_k+nz_k^{n-1}\delta_k\right)+\sum_{l=1}^{M}\left(s_l^n\bm c_l+ns_l^{n-1}\delta_l\right)=0.
	\end{equation}
	
	
	\subsection{Higher equations of motion}
	In the boundary case, the BPZ-type differential equations usually satisfied by the correlation functions of Liouville CFT may no longer be valid and may be given instead by \textit{higher equations of motion}~\cite{Za04, BB10, BaWu, Cer_HEM}. We obtain here their classical counterpart thanks to the semi-classical limit of boundary Liouville CFT.
	
	\subsubsection{Higher equations of motion in boundary Liouville CFT}
	Let us assume that one of the Vertex Operators has weight given by $\alpha=-\frac\gamma2$ (a similar result holds for $\alpha=-\frac2\gamma$ and gives rise to the classical HEM~\eqref{eq:HEM_heavy}). Then under generic assumptions on the cosmological constants the correlation functions satisfy the following equations~\cite[Theorem 1.3]{Cer_HEM}:
	\begin{eqs}\label{eq:HEM_CFT}
		&\left(\frac{4}{\gamma^2}\partial^2_z+\mc T(z)\right)\ps{V_{-\frac\gamma2}(z)\prod_{k=1}^NV_{\alpha_k}(z_k)\prod_{l=1}^M V_{\beta_l}(s_l) }=0\qt{for}z\in\H,\\
		&\left(\frac{4}{\gamma^2}\partial^2_t+\mc T(t)\right)\ps{V_{-\frac\gamma2}(t)\prod_{k=1}^NV_{\alpha_k}(z_k)\prod_{l=1}^M V_{\beta_l}(s_l) }=c_\gamma(\bm\mu)\ps{V_{\frac{3\gamma}{2}}(t)\V}\qt{for}t\in\R
	\end{eqs}
	where $\mc T$ is the (weak) differential operator defined by
	\begin{equation}
		\mc T(z)\coloneqq\sum_{k=1}^{2N+M}\frac{\Delta_{\alpha_k}}{(z-z_k)^2}+\frac{\partial_{z_k}}{z-z_k}\qt{for}z\in\overline{\H}
	\end{equation}
	with the conformal weights $\Delta_{\alpha_k}=\frac{\alpha_k}{2}(Q-\frac{\alpha_k}{2})$ and the constant $c_\gamma(\bm\mu)$ given by 
	\begin{equation}
		c_\gamma(\bm\mu)=\left(\mu_L^2+\mu_R^2-2\mu_L\mu_R\cos\left(\frac{\pi\gamma^2}{4}\right)-\frac\mu{2\pi}\sin\left(\frac{\pi\gamma^2}{4}\right)\right)\frac{\Gamma\left(\frac{\gamma^2}{4}\right)\Gamma\left(1-\frac{\gamma^2}{2}\right)}{\Gamma\left(1-\frac{\gamma^2}{4}\right)}
	\end{equation}
	for $\gamma<\sqrt2$ and by $0$ if $\gamma>\sqrt2$. Here $\mu_L=\frac{2\mu_\partial(t^-)}\pi$ and $\mu_R=\frac{2\mu_\partial(t^+)}\pi$.
	
	\subsubsection{Classical higher equations of motion} We now look at the semi-classical limit of the above, where recall the scalings $\mu=\frac{\Lambda}{\gamma^2}$ and $\mu_\partial=\frac{\sigma}{\gamma^2}$ as well as $\alpha_k=-\frac{2a_k}{\gamma}$. We assume that $\mu_L=\mu_R=\frac{\sigma(t)}{\gamma^2}$, so that the curvatures remain unchanged by the insertion of the Vertex Operator $V_{-\frac{\gamma}{2}}$. Now thanks to Theorem~\ref{thm:semi_classical} we have that
	\[
	\frac{c_\gamma(\bm\mu)\ps{V_{\frac{3\gamma}{2}}(t)\V}}{\ps{\V}}\sim \frac{4}{\gamma^2} \left(\left(\frac{\sigma(t)}{2}\right)^2-\frac{\Lambda}{8}\right)e^{\frac{3}{4}\Phidiv(t)}.
	\]
	Likewise thanks to Proposition~\ref{prop:accessory} together with Theorem~\ref{thm:semi_classical}
	\[
	\frac{\mc T(z)\ps{V_{-\frac\gamma2}(z)\V}}{\ps{\V}}\sim \frac1{\gamma^2}\sum_{k=1}^{2N+M}\left(\frac{-a_k(2+a_k)}{(z-z_k)^2}-\frac{\partial_{z_k} S_{\bm z,\bm a}}{z-z_k}\right)e^{-\frac14\Phidiv(t)}.
	\]
	As a consequence by taking the $\gamma\to0$ limit of $\frac{\gamma^2}{4\ps{\V}}$ times Equation~\eqref{eq:HEM_CFT} we get
	\begin{eqs}
		&\left(\partial_z^2+\frac12T(z)\right)e^{-\frac12\Phidiv(z)}=0\qt{for}z\in\H,\\
		&\left(\partial_t^2+\frac12T(t)\right)e^{-\frac14\Phidiv(t)}=\frac14\left(\sigma(t)^2-\frac{\Lambda}{2}\right)e^{\frac{3}{4}\Phidiv(t)}\qt{for}t\in\R.
	\end{eqs}
	This concludes for the proof of Theorem~\ref{thm:accessory_HEM}.

	\bibliography{biblio}
	\bibliographystyle{plain}
	
\end{document}